\documentclass[11pt]{article}
\usepackage{amsmath,amssymb,amsfonts,amsthm,epsfig}
\usepackage[usenames,dvipsnames]{xcolor}
\usepackage{bm,xspace}
\usepackage{tcolorbox}
\usepackage{cancel}
\usepackage{fullpage}
\usepackage{liyang}
\usepackage{framed}
\usepackage{verbatim}
\usepackage{enumitem}
\usepackage{array}
\usepackage{multirow}
\usepackage{afterpage}
\usepackage{mathrsfs}
\usepackage{pifont} 
\usepackage{chngpage}
\usepackage[normalem]{ulem}
\usepackage{boxedminipage}
\usepackage{caption}
\usepackage{subcaption}

\usepackage{forest}

\usepackage{algorithm}
\usepackage{algorithmicx}
\usepackage{algpseudocode}

\usepackage{tikz}
\usetikzlibrary{calc,through,backgrounds,decorations.pathreplacing, calligraphy,arrows.meta}
\usetikzlibrary{positioning,chains,fit,shapes}

\usepackage{tikz-cd}

\usepackage{pgfplots}
\pgfplotsset{width=8cm,compat=newest}

%
%

\def\colorful{1}

\ifnum\colorful=1
\newcommand{\violet}[1]{{\color{violet}{#1}}}

\newcommand{\teal}[1]{{\color{teal}{#1}}}

\fi
\ifnum\colorful=0
\newcommand{\violet}[1]{{{#1}}}

\fi

%
%


\newcommand{\depth}{\mathrm{depth}}

\def\setcover{\textsc{Set-Cover}~}

\def\width{\mathrm{width}}



\newcommand{\pryd}[1]{\underset{\by\sim \mathcal{D}_{\oplus \ell}}{\Pr}\left[{#1}\right]}
\newcommand{\eyd}[1]{\underset{\by\sim \mathcal{D}_{\oplus \ell}}{\E}\left[{#1}\right]}

\newcommand{\prydj}[1]{\underset{\by\sim \mathcal{D}^j_{\oplus \ell}}{\Pr}\left[{#1}\right]}

\newcommand{\prxd}[1]{\underset{\bx\sim \mathcal{D}}{\Pr}\left[{#1}\right]}

\newcommand{\exd}[1]{\underset{\bx\sim \mathcal{D}}{\E}\left[{#1}\right]}

\newtheorem{hypothesis}{Hypothesis}

\newcommand{\paren}[1]{\left({#1}\right)}

\newlist{enumprop}{enumerate}{1} 
\setlist[enumprop]{label=\arabic*.,ref=\theproposition.\arabic*}
\crefalias{enumpropi}{proposition}

\makeatletter
\newtheorem*{rep@theorem}{\rep@title}
\newcommand{\newreptheorem}[2]{
\newenvironment{rep#1}[1]{
 \def\rep@title{#2 \ref{##1}}
 \begin{rep@theorem}\itshape}
 {\end{rep@theorem}}}
\makeatother

\newreptheorem{theorem}{Theorem}

\begin{document}


\title{Superpolynomial lower bounds for decision tree learning and testing \vspace{1pt}}

\author{ 
Caleb Koch \vspace{6pt} \\ 
{\small {\sl Stanford}} \and 
\hspace{5pt} Carmen Strassle \vspace{6pt} \\
\hspace{5pt} {\small {\sl Stanford}} \and 
Li-Yang Tan \vspace{6pt}  \\
\hspace{-10pt} {\small {\sl Stanford}}
}

\date{\vspace{15pt}\small{\today}}

 \maketitle

\begin{abstract}



{
We establish new hardness results for decision tree optimization problems, adding to a line of work that dates back to Hyafil and Rivest in 1976. We prove, under the randomized exponential time hypothesis,  superpolynomial runtime lower bounds for two basic problems: given an explicit representation of a function $f$ and a generator for a distribution $\mathcal{D}$, 
 \begin{itemize} 
 \item[$\circ$]  {\sl construct} a small decision tree approximator for $f$ under $\mathcal{D}$, and 
 \item[$\circ$]  {\sl decide} if there is a small decision tree approximator for $f$ under $\mathcal{D}$.  
 \end{itemize} 
Our results imply new lower bounds for distribution-free PAC learning and testing of decision trees, settings in which the algorithm only has restricted access to $f$ and $\mathcal{D}$.
  Specifically, we get that: 
\begin{itemize}
\item[$\circ$] $n$-variable size-$s$ decision trees cannot be properly PAC learned in time $n^{\tilde{O}(\log\log s)}$, and 
\item[$\circ$] depth-$d$ decision trees cannot be tested in time $\exp(d^{\,O(1)})$.
\end{itemize}
For learning, the previous best lower bound only ruled out $\poly(n)$-time algorithms (Alekhnovich, Braverman, Feldman, Klivans, and Pitassi, 2009).  For testing, recent work gives  similar though incomparable lower bounds in the setting where $f$ is random and $\mathcal{D}$ is nonexplicit (Blais, Ferreira Pinto Jr., and Harms, 2021).

Assuming a plausible conjecture on the hardness of {\sc Set-Cover}, we show that our lower bound for properly PAC learning decision trees can be improved to $n^{\Omega(\log s)}$, matching the best known upper bound of $n^{O(\log s)}$ due to Ehrenfeucht and Haussler (1989).

We obtain our results within a unified framework that leverages recent progress in two different lines of work: the inapproximability of {\sc Set-Cover} and XOR lemmas for query complexity. Our framework is versatile and yields results for related concept classes such as juntas and DNF formulas. 
}

\end{abstract} 

\thispagestyle{empty}
\newpage 

\setcounter{page}{1}
\section{Introduction}

The algorithmic problem of constructing decision tree representations of functions is one of the most basic and well-studied problems of computer science. Greedy decision tree learning heuristics such as ID3, C4.5, and CART, developed in the 1980s, continue to be indispensable to everyday machine learning and enjoy empirical success.  The data mining textbook~\cite{WFHP16} describes C4.5 as ``a landmark decision tree program that is probably the machine learning workhorse most widely used in practice to date".   In addition to being extremely fast to evaluate, a key advantage of decision trees is their simple and easy-to-understand structure, making them the most canonical example of an interpretable model. The recent survey~\cite{Rud22} lists decision tree learning as the very first of ``10 grand challenges" in the emerging field of interpretable machine learning. 

In terms of algorithms with theoretical guarantees, a classic result of Ehrenfeucht and Haussler~\cite{EH89} gives a quasipolynomial time algorithm for properly PAC learning decision trees: Given  labeled examples $(\bx,f(\bx))$ where $f : \zo^n\to\zo$ can be computed by a size-$s$ decision tree and $\bx$ is drawn from a distribution $\mathcal{D}$ over $\zo^n$, their algorithm runs in $n^{O(\log s)}$ time and returns a decision tree hypothesis that is close to $f$ under $\mathcal{D}$. Numerous alternative algorithms have since been designed within restricted variants of the PAC model (e.g.~where $\mathcal{D}$ is assumed to be uniform) and by relaxing the problem (e.g.~allowing hypotheses that are not themselves decision trees\footnote{Such improper decision tree learning algorithms do not apply to the problem of ``decision tree learning" as is meant in the context of machine learning, where it always refers to the problem of constructing decision tree {\sl hypotheses}. See e.g.~the textbooks~\cite{Mit97,Bis06,SS14} or the Wikipedia page for ``Decision tree learning".  From a practical perspective, properness of decision tree algorithms is not just a feature but the entire point---to produce a decision tree representation of the data.  The focus of this paper will be on proper decision tree learning algorithms.})~\cite{Riv87,Blu92,Han93,KM93,KM96,Bsh93,GLR99,BM02,MR02,JS05,KS06,OS07,GKK08,KST09,HKY18,CM19,BLQT21focs}, but Ehrenfeucht and Haussler's algorithm remains state of the art in the standard PAC model. 

Another interesting setting is when an explicit representation of the function $f$, and possibly also the distribution $\mathcal{D}$, are given to the algorithm.  This easier setting, where the algorithm can ``inspect" $f$, models a popular approach in explainable machine learning known as post-hoc explanations.  The goal here is not to train a decision tree model for an unknown function $f$, but instead to turn a complicated trained model~$f$ (e.g.~a neural net) into its decision tree representation. While numerous algorithms for this task have been proposed in the empirical literature~\cite{CS95,BS96,AB07,ZH16,BKB17,VLJODV17,FH17,VS20}, among those with theoretical guarantees, the fastest one remains that of Ehrenfeucht and Haussler.

In parallel with these lines of algorithmic work, there has also been a similarly large body of work on the {\sl hardness} of decision tree learning~\cite{HR76,GJ79,BFJKMR94,HJLT96,KPB99,ZB00,LN04,CPRAM07,RRV07,Sie08,ABFKP09,AH12,Rav13,BLT-ITCS}.  It is interesting to note that the earliest paper here, by Hyafil and Rivest in 1976, predates Ehrenfeucht and Haussler's algorithm by more than a decade; indeed, it even predates the PAC model.  Their paper, which established the NP-completeness of a certain formulation of decision tree learning with perfect accuracy, reveals that the problem was already intensively studied and recognized as central in the 1970s. Quoting the authors,  ``the importance of this result can be measured in terms of the large amount of effort that has been put into finding efficient algorithms for constructing optimal binary decision trees".  

A closely related problem is that of {\sl testing} decision trees: while in learning one is interested in constructing small decision trees, here the goal is simply to  {\sl decide} if one such tree exists. The distribution-free model of property testing, introduced by Goldreich, Goldwasser, and Ron~\cite{GGR98} to parallel distribution-free PAC learning, has received increasing attention in recent years~\cite{CX16,LCSSX18,Bsh19,Bel19,Har19,FY20,RR20,BPH21,Bsh22,BHZ22,ABFKY22,CP22,HY22}. 

\section{Our results} 
We establish new hardness results for distribution-free learning and testing of decision trees. 
For both problems, our lower bounds hold even when explicit representations of both the  function $f$ and distribution $\mathcal{D}$ are given to the algorithm; lower bounds in this setting imply lower bounds for learning and testing.   

We obtain our results within a unified framework that brings together two active lines of research: the inapproximability of {\sc Set-Cover}~\cite{LY94,Fei98, CHKX06,DS14,Mos15,KLM18,CL19,Lin19,CHK20,KI21} 
and XOR lemmas for query complexity~\cite{Dru12,BB19,BPH21}.  Connections between {\sc Set-Cover} and decision tree optimization problems, both in terms of algorithms and hardness, date back to~\cite{HR76} and are present in numerous prior works; we leverage recent progress in both the parameterized and nonparameterized settings.  The connection to XOR lemmas, on the other hand, is new to this work.  All our lower bounds, being computational in nature, are conditioned on the randomized Exponential Time Hypothesis (ETH).  As a byproduct, our lower bounds hold even against randomized algorithms.

We now give a detailed overview of our results, in tandem with a discussion of how they compare with prior work.  

\subsection{Lower bounds for {\sc DT-Construction}}

The {\sc DT-Construction} problem is the variant of decision tree learning where $f$ and $\mathcal{D}$ are both given to the algorithm:
\medskip 

\begin{tcolorbox}[colback = white,arc=1mm, boxrule=0.25mm]
\vspace{3pt} 
{\sc DT-Construction$(s,\eps)$:}  Given as input a circuit representation of a function $f : \zo^n \to \zo$, a generator for a distribution $\mathcal{D}$ over $\zo^n$, parameters $s\in \N$ and $\eps \in (0,1)$, and the promise that $f$ is a size-$s$ decision tree under $\mathcal{D}$, construct a decision tree $T$ that is $\eps$-close to $f$ under $\mathcal{D}$.
\vspace{3pt} 
\end{tcolorbox}
\medskip

Our first result is a superpolynomial runtime lower bound for {\sc DT Construction}:
\begin{theorem}
\label{thm:DT-DT} 
Under randomized ETH, for $s = n$ and $\eps = \frac1{n}$ any algorithm for {\sc DT-Construction}$(s,\eps)$ must take $n^{\tilde{\Omega}(\log\log s)}$ time. 
\end{theorem}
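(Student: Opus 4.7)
The plan is to combine the two ingredients highlighted in the introduction: ETH-based inapproximability of \setcover (from the parameterized line by Lin and Karthik--Laekhanukit--Manurangsi, among others), together with an XOR-style hardness amplification for decision tree size (in the spirit of Drucker and Blais--Brody--Ferreira Pinto Jr.--Harms). In outline I would (i) start from an ETH-hard gap instance of Set-Cover, (ii) use a Hyafil--Rivest style reduction to obtain a ``base'' hard instance of {\sc DT-Construction}, and (iii) amplify this via the XOR ensemble $(f^{\oplus \ell}, \mathcal{D}_{\oplus \ell})$ in order to simultaneously shrink the error tolerance to $\eps = 1/n$ and push the runtime lower bound up to $n^{\tilde{\Omega}(\log\log s)}$.

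For the base reduction, given a Set-Cover instance with universe $U$ and set system $\mathcal{S}$, I would build a function $f:\{0,1\}^{|\mathcal{S}|}\to\{0,1\}$ and a distribution $\mathcal{D}$ supported on $|U|+O(1)$ labeled points (one per element of $U$, plus a constant number of ``reference'' points), so that a decision tree of size $t$ approximating $f$ under $\mathcal{D}$ yields a set cover of size $\approx t$, and conversely. The ETH-based gap for Set-Cover then feeds directly into a gap version of {\sc DT-Construction} whose stand-alone hardness is still too mild to prove the theorem, but which serves as the raw material for amplification.

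For the amplification, I would form $f^{\oplus \ell}(x^{(1)},\ldots,x^{(\ell)}) = f(x^{(1)}) \oplus \cdots \oplus f(x^{(\ell)})$ with input drawn from $\mathcal{D}^{\otimes \ell}$, and invoke an XOR lemma saying that any small decision tree $\eps$-approximating $f^{\oplus \ell}$ under $\mathcal{D}_{\oplus \ell}$ can be decoded, via a randomized reconstruction procedure, into an approximator for a single copy of $f$ under $\mathcal{D}$ with parameters good enough to contradict the base hardness. Choosing $\ell \approx \log\log s$ and tuning parameters should deliver $\eps = 1/n$ and the desired $n^{\tilde{\Omega}(\log\log s)}$ lower bound; since $f^{\oplus \ell}$ inherits a circuit representation and $\mathcal{D}^{\otimes \ell}$ inherits a sampler from the base instance, the resulting instance fits the definition of {\sc DT-Construction}.

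The main obstacle is the XOR decoding step: a hypothetical small decision tree approximator for $f^{\oplus \ell}$ need not factor as a product across the $\ell$ coordinate blocks, so extracting a useful approximator for a single copy of $f$ requires a direct-sum or hybrid-style argument that is sensitive to how the approximation error accumulates and how the decision tree size decomposes across the blocks. A secondary subtlety is that the Set-Cover reduction produces only a \emph{gap} between decision tree sizes rather than exact hardness, and the XOR lemma must both preserve this gap and sharpen the approximation error in lockstep; getting the interaction between the two amplifications right is what produces the doubly-logarithmic exponent in the final bound.
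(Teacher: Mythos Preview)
Your proposal has a genuine gap: you are conflating two different kinds of ``XOR amplification'' that play distinct roles in the paper, and the one you invoke is the wrong one for this theorem.

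You propose the \emph{outer} XOR composition $f^{\oplus \ell}(x^{(1)},\ldots,x^{(\ell)}) = f(x^{(1)})\oplus\cdots\oplus f(x^{(\ell)})$ over $\mathcal{D}^{\otimes \ell}$, together with the Drucker/Blais--Brody-style XOR lemmas. But those lemmas amplify \emph{error}: they take a function that is $\eps$-hard and produce one that is $(\tfrac12-\gamma)$-hard. That is exactly the opposite of what {\sc DT-Construction}$(s,\tfrac1n)$ needs, where the algorithm is only required to achieve error $\eps=\tfrac1n$. Your sentence ``in order to simultaneously shrink the error tolerance to $\eps=1/n$'' points in the wrong direction; outer XOR does not shrink the tolerated error. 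In the paper, outer XOR (\Cref{lem:hardness-amplification-actual}) is used \emph{only} for {\sc DT-Estimation} (\Cref{thm:test}), precisely because there one wants hardness at $\eps$ close to $\tfrac12$. It is not used at all in the proof of \Cref{thm:DT-DT}.

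What the paper actually does for \Cref{thm:DT-DT} is an \emph{inner} parity substitution: each input variable $x_i$ of the base set-cover function $\Gamma$ is replaced by a fresh block of $\ell$ bits, and one sets $\Gamma_{\oplus\ell}(y)=\Gamma(\oplus y_1,\ldots,\oplus y_n)$ under the distribution $\mathcal{D}_{\oplus\ell}$ that first draws $\bx\sim\mathcal{D}$ and then, for each $i$, draws $\by_i$ uniformly among $\ell$-bit strings of parity $\bx_i$. This is $\Gamma\circ(\mathrm{PARITY}_\ell)^{\otimes n}$, not $\mathrm{PARITY}_\ell\circ\Gamma^{\otimes\ell}$. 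The point of this construction is that it amplifies the \emph{size} gap while keeping the error at $O(1/N)$: if $\opt(\mathcal{S})\le k$ then $\Gamma_{\oplus\ell}$ is a $k\ell$-junta, but if $\opt(\mathcal{S})>k'$ then every decision tree of size $<2^{\Omega(k'\ell)}$ is $\Omega(1/N)$-far from $\Gamma_{\oplus\ell}$ (\Cref{lem:set_cover-reduction}). The analysis is not an off-the-shelf XOR lemma; it goes through average depth (\Cref{claim:depth_error-new,claim:extracting_small_dt-new,claim:avg_depth-new}), using that $\mathcal{D}_{\oplus\ell}$ is ``uniform-like'' within each block so that any leaf of depth $d$ is reached with probability $\le 2^{-d/2}$. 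Combining this with Lin's $(k,k')$-\setcover hardness for $k\approx\tfrac12\log\log N/\log\log\log N$ and $\ell=(\log N)/k$ yields the $n^{\tilde{\Omega}(\log\log s)}$ bound. Your outline misses this inner-parity mechanism, and without it the route through outer XOR and the cited query-complexity XOR lemmas does not deliver a size lower bound at error $1/n$.
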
 

Prior works also focused on the parameter settings $s=n$ and $\eps = \frac1{n}$, corresponding to strong learning of linear-size decision trees.  Most recently, Alekhnovich, Braverman, Feldman, Klivans, and Pitassi~\cite{ABFKP09} ruled out $\poly(n)$ time algorithms under the assumption that ${\textsc{Sat}}$ cannot be solved in randomized subexponential time. Before that, Hancock, Jiang, Li, and Tromp~\cite{HJLT96} ruled out $\poly(n)$ time algorithms that return a decision tree hypothesis of size $n^{1+o(1)}$, under the assumption that $\textsc{Sat}$ cannot be solved in randomized quasipolynomial time.

Our proof of~\Cref{thm:DT-DT} opens up a concrete route towards obtaining the optimal $n^{\Omega(\log s)}$ lower bound.  We can also show an $n^{\Omega(\log s)}$ lower bound for the stricter version of {\sc DT-Construction} where the algorithm has to return a decision tree of size $s$ (instead of one of any size).  We elaborate on both of these in~\Cref{sec:optimal}.


\paragraph{Hardness of learning juntas with DNF hypotheses.} We obtain~\Cref{thm:DT-DT} as a corollary of our first main result, which simultaneously allows for a stronger promise on the simplicity of the target function $f$ and for the algorithm to return a more expressive hypothesis:  

\begin{theorem} 
\label{thm:junta-DNF}
Under randomized ETH, for $s = n$ and $\eps = \frac1{n}$ any algorithm for {\sc DT-Construction}$(s,\eps)$ must take $n^{\tilde{\Omega}(\log\log s)}$ time, even if $f$ is further promised to be a {\sl $(\log s)$-junta} under $\mathcal{D}$ and the algorithm is allowed to return a {\sl DNF hypothesis}. 
\end{theorem}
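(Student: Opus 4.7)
The plan is to combine two ingredients within one reduction to \textsc{DT-Construction}: the parameterized inapproximability of \setcover, and an XOR lemma for randomized query complexity. The starting point is the ETH-based gap hardness of \setcover asserting that distinguishing instances $(U,\mathcal{S})$ with optimal cover at most $k$ from those requiring more than $k\cdot g(k)$ sets takes time $|U|^{\tilde\Omega(k)}$, for an appropriate gap function $g$.

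From such a gap instance I would first build a base pair $(f_0,\mathcal{D}_0)$ in the spirit of Hyafil--Rivest-style encodings: the variables of $f_0$ are in bijection with the sets $S_1,\ldots,S_n$, the distribution $\mathcal{D}_0$ is supported on ``element-indicator'' strings $(\mathbf{1}[u\in S_1],\ldots,\mathbf{1}[u\in S_n])$ for $u$ drawn from $U$, and $f_0$ is designed so that constant-error decision-tree approximators under $\mathcal{D}_0$ correspond to (approximate) covers. In the YES case $f_0$ is a $k$-junta computed by a size-$k$ decision tree; in the NO case, every randomized query algorithm computing $f_0$ to constant error under $\mathcal{D}_0$ must make at least $q=\tilde\Omega(k)$ queries.

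Next I would amplify via XOR composition. Setting
\[
f\bigl(\mathbf{x}^{(1)},\ldots,\mathbf{x}^{(\ell)}\bigr) := \bigoplus_{j=1}^{\ell} f_0\bigl(\mathbf{x}^{(j)}\bigr), \qquad \mathbf{x}^{(j)} \sim \mathcal{D}_0\ \text{independently,}
\]
and balancing $k,\ell$ so that $k\ell = \log s$, the YES case yields a $(\log s)$-junta computable by a size-$s$ decision tree, matching the promised structure. In the NO case I would invoke a Drucker/BPH21-style XOR lemma: the randomized query complexity of $f$ under $\mathcal{D}_0^\ell$ is at least $\Omega(q\ell)$ to approximate to error $\tfrac{1}{2}-2^{-\Omega(\ell)}$, far stronger than the $\eps = 1/n$ target.

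The main obstacle is the final link: converting a query-complexity lower bound into a size lower bound against \emph{DNF} hypotheses, since a priori a small DNF need not correspond to a low-query algorithm. I plan to exploit the block-local structure of $(f,\mathcal{D}_0^\ell)$: because $\mathcal{D}_0$ is concentrated on highly structured inputs in which long conjunctions are rarely satisfied, the ``active'' terms of any $s'$-term DNF $\eps$-approximator of $f$ can be identified block by block using a sublinear number of queries per block, yielding a randomized decision tree for $f$ whose depth is bounded in terms of $\log s'$ and $\ell$. Matching this conversion against the XOR query lower bound forces $\log s'$ to grow as a sufficient power of $k$, and then threading the ETH time bound $n^{\tilde\Omega(k)}$ for \setcover back through the reduction with $k = \tilde\Theta(\log\log s)$ delivers the claimed $n^{\tilde\Omega(\log\log s)}$ runtime lower bound. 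I expect that calibrating this block-local DNF-to-query reduction so that its cost is small enough to preserve the XOR lower bound, yet its applicability is broad enough to cover \emph{arbitrary} DNF hypotheses rather than only structured ones, will be the most delicate part of the argument.
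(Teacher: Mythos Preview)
Your plan conflates two distinct amplification mechanisms, and the one you chose is not what drives this theorem. The paper proves \Cref{thm:junta-DNF} via an \emph{inner} parity amplification: each variable $x_i$ of the base function $\Gamma$ is replaced by a block of $\ell$ fresh variables, giving $\Gamma_{\oplus\ell}(y)=\Gamma(\oplus y_1,\ldots,\oplus y_n)$, and the distribution $\mathcal{D}_{\oplus\ell}$ first samples $\bx\sim\mathcal{D}$ and then a uniformly random preimage under blockwise parity. The point of this construction is that $\mathcal{D}_{\oplus\ell}$ becomes ``uniform-like'': any term of width $w$ is satisfied with probability at most $2^{-w/2}$ (\Cref{prop:avg_depth}). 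This enables a \emph{direct} argument against DNF hypotheses via average width (\Cref{claim:small-size-DNFs-have-small-avg-width}), followed by a random restriction back to a small-width DNF for $\overline\Gamma$ (\Cref{claim:NGammaL-approximators-yield-NGamma-approximators}) whose accepting term of $0^n$ yields a small cover (\Cref{claim:small-width-F-badly-approximates-NGamma}). No query-complexity XOR lemma appears anywhere in this proof; the Drucker/BKLS lemmas are used only for the \emph{testing} lower bound (\Cref{thm:test}), where the goal is to drive $\eps$ toward $\tfrac12$.

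Your outer-XOR route has two concrete gaps. First, the base hardness you assume is too strong: under the standard encoding, $(f_0,\mathcal{D}_0)$ is only $\Theta(1/|U|)$-hard at depth below $\opt$, not constant-error hard---a single set may cover $99\%$ of the universe even when $\opt$ is large, so a one-query tree already achieves small constant error. With base error $\Theta(1/N)$, Drucker's lemma amplifies depth by a factor of only $O(\eps m)$, gaining nothing unless $m=\Omega(N)$; this is precisely why the paper's testing proof must invoke the abort-version lemma of~\cite{BKLS20} as a preliminary stage. Second, your DNF-to-query conversion is unsupported under $\mathcal{D}_0^\ell$: each block is distributed over structured set-indicator vectors, not near-uniform strings, so long conjunctions need not be rarely satisfied (an element lying in many sets makes wide terms likely). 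The inner-parity construction is exactly what manufactures the uniform-like property that makes a DNF-width argument go through, and that is the key idea your proposal is missing.
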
 

We recall the strict inclusions
\[ \{ \text{$(\log s)$-juntas} \} \subset \{ \text{size-$s$ decision trees}\} \subset \{ \text{size-$s$ DNFs} \}. \] 
Each class is exponentially more expressive than the previous one: a size-$s$ decision tree can depend on as many as $s$ variables
, and a size-$s$ DNF can require a decision tree of size $2^{\Omega(s)}$.

The results of~\cite{ABFKP09,HJLT96} are not known to be amenable to such a strengthening. \cite{ABFKP09} did give lower bounds for {\sc DNF-Construction}, the analogue of {\sc DT-Construction} where the target $f$ is promised to be a DNF under $\mathcal{D}$ and the algorithm is expected to construct a DNF hypothesis.  They ruled out $\poly(n)$ time algorithms for $s=n$ and $\eps = \frac1{n}$.  
\cite{ABFKP09} gave two separate proofs of hardness for {\sc DT-Construction} and {\sc DNF-Construction}, reducing from {\sc Set-Cover} for the former and from {\sc Chromatic-Number} for the latter.  \Cref{thm:junta-DNF}, on the other hand, yields new lower bounds for both problems via a single proof.  

\paragraph{Hardness of properly learning  juntas.}  
Implicit in the proofs of~\Cref{thm:DT-DT,thm:junta-DNF} is a tight connection between algorithms for {\sc Set-Cover} and algorithms for properly learning juntas.  By making this connection explicit, we obtain strong lower bounds for the latter problem that hold even under the promise that the target is a {\sl monotone disjunction}: 

\begin{theorem}
\label{thm:junta-junta} 
Under randomized ETH, for any $k \le n^c$ where $c<1$ is any constant and $\eps = O(\frac1{n})$, there is no algorithm that, given as input a circuit representation of a function $f : \zo^n \to \zo$, a generator for a distribution $\mathcal{D}$  and the promise that $f$ is a monotone $k$-disjunction under $\mathcal{D}$, runs in $n^{o(k)}$ time and constructs a $k$-junta that is $\eps$-close to $f$ under $\mathcal{D}$.  Under randomized SETH, we get a lower bound of $O(n^{k-\lambda})$ for any constant $\lambda > 0$. \end{theorem}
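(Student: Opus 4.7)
The plan is to reduce from the parameterized $k$-\textsc{Dominating-Set} problem (a special case of $k$-\textsc{Set-Cover}), exploiting the classical equivalence between set covers and monotone disjunctions that separate positive from negative examples. This connection is implicit in the proofs of \Cref{thm:DT-DT,thm:junta-DNF}; here I make it direct and parameter-tight, so that the known $n^{\Omega(k)}$ ETH lower bound of Chen--Huang--Kanj--Xia and the $n^{k - o(1)}$ SETH lower bound of Patrascu--Williams for $k$-\textsc{Dominating-Set} transfer essentially on the nose.

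Given an $n$-vertex graph $G$ with closed neighborhoods $N[v]$, the construction is as follows. The distribution $\mathcal{D}$ over $\{0,1\}^n$ places mass $\tfrac{1}{2}$ on $0^n$ and mass $\tfrac{1}{2n}$ on each indicator vector $x^{(v)} \in \{0,1\}^n$ of $N[v]$; the target $f$ is given by a $\mathrm{poly}(n)$-size circuit that hardcodes $f(0^n) = 0$, $f(x^{(v)}) = 1$ for every $v$, and is arbitrary elsewhere. Both the circuit for $f$ and the sampler for $\mathcal{D}$ are produced in polynomial time from $G$.

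For completeness: if $I \subseteq V(G)$ is a dominating set of size $k$, then the monotone $k$-disjunction $\bigvee_{i \in I} x_i$ agrees with $f$ on every point of $\mathrm{supp}(\mathcal{D})$, so $f$ is a monotone $k$-disjunction under $\mathcal{D}$ as promised. For soundness, take $\eps < \tfrac{1}{2n}$, which is $O(1/n)$. Any $k$-junta $h$ that is $\eps$-close to $f$ under $\mathcal{D}$ must agree with $f$ on every support point. If $h$ depends only on coordinates in some $I \subseteq [n]$ with $|I| \le k$, then $h(0^n) = 0$ together with $h(x^{(v)}) = 1$ forces $x^{(v)}|_I \neq 0^{|I|}$, i.e., $I \cap N[v] \neq \emptyset$, for every $v$. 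Thus $I$ is a dominating set of size at most $k$, and an $n^{o(k)}$ (respectively $O(n^{k-\lambda})$) time learner would contradict the corresponding lower bound for $k$-\textsc{Dominating-Set}.

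The main obstacle is parameter bookkeeping rather than any new technical idea: one must calibrate $\eps = \Theta(1/n)$ so that approximate agreement under $\mathcal{D}$ forces exact agreement on the $n+1$ support points (this is why it is convenient to take the source problem to be $k$-\textsc{Dominating-Set}, where the number of constraints equals the number of variables), and then verify that the $k \le n^c$ regime of the underlying combinatorial lower bound is preserved across the reduction. No XOR lemma or new inapproximability machinery is needed here, which is why the resulting hardness is essentially as sharp as the hardness of $k$-\textsc{Dominating-Set} itself.
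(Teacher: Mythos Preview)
Your proposal is correct and follows essentially the same route as the paper. The paper reduces from $(k,k+1)$-\textsc{Set-Cover} via the construction $\Gamma_{\mathcal{S}},\mathcal{D}_{\mathcal{S}}$ of \Cref{def:D_S_and_Gamma_S} (mass $\tfrac12$ on $0^n$, mass $\tfrac{1}{2|U|}$ on each neighborhood indicator, \Cref{fact:Gamma-is-small-disjunction} for completeness), and then invokes the same CHKX06 and PW10 lower bounds you cite; your use of $k$-\textsc{Dominating-Set} is just the special case $|S|=|U|=n$, which conveniently makes $\eps=O(1/n)$ automatic without the padding the paper relies on elsewhere. The one step worth making explicit is the verification the paper spells out in the proof of \Cref{thm:learning-k-juntas-hard-as-set-cover}: after running the learner you must check in $\mathrm{poly}(n)$ time whether its output is indeed a $k$-junta that is $\eps$-close (trivial here since $|\mathrm{supp}(\mathcal{D})|\le n+1$), so that in the \textsc{No} case---where the promise is violated and the learner may behave arbitrarily---you still correctly reject.
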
 

These lower bounds nearly match the $O(n^k/\eps)$ runtime algorithm of the trivial algorithm that iterates over all possible $k$-junta hypotheses.  Previously,~\cite{ABFKP09} ruled out $\poly(n)$-time algorithms for $k\le O(\log n)$.

\subsection{Lower bounds for {\sc DT-Estimation}} 

The second problem that we consider, {\sc DT-Estimation}, is a variant of distribution-free decision tree testing where $f$ and $\mathcal{D}$ are both given to the algorithm:\footnote{It will be more convenient for us to measure the complexity of decision trees by their depth in this section, though there are direct analogues of our results for size instead of depth.} \medskip

\begin{tcolorbox}[colback = white,arc=1mm, boxrule=0.25mm]
\vspace{3pt} 
 {\sc DT-Estimation$(d,\eps)$}:  Given as input a circuit representation of a function $f : \zo^n \to \zo$, a generator for a distribution $\mathcal{D}$ over $\zo^n$, and parameters $d\in \N,\eps \in (0,1)$, distinguish between the following cases:
 \begin{itemize} 
 \item[$\circ$] {\sc Yes}: $f$ is a depth-$d$ decision tree under $\mathcal{D}$. 
 \item[$\circ$] {\sc No}: $f$ is $\eps$-far from every depth-$d$ decision tree under $\mathcal{D}$. 
 \end{itemize} 
\end{tcolorbox}
\medskip


Our second main result is an exponential lower bound for {\sc DT-Estimation}: 

\begin{theorem} 
\label{thm:test}
Under randomized ETH, any algorithm for {\sc DT-Estimation}$(d,\eps)$ must take $\exp(d^{\,\Omega(1)})$ time.  This holds even if $\eps = \frac1{2}-\exp(-d^{\,\Omega(1)})$ and the {\sc No} case satisfies the stronger promise that $f$ is $\eps$-far from every decision tree of depth $\Omega(d\log d)$ under $\mathcal{D}$. 
\end{theorem}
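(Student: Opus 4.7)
The plan is to combine a gap-producing ETH reduction for a base problem with an XOR lemma for decision tree query complexity, precisely the two ingredients the authors advertise.

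\textbf{Step 1 (Base hard instance with a multiplicative depth gap).} Using the set-cover-based framework already employed for~\Cref{thm:DT-DT,thm:junta-DNF}, I would first produce a base pair $(g,\mu)$, with $g:\zo^m\to\zo$ given by an explicit circuit and $\mu$ samplable, such that it is randomized-ETH-hard to distinguish
\begin{itemize}
\item[] \textsc{Yes}: $g$ is a depth-$1$ decision tree under $\mu$; from
\item[] \textsc{No}: $g$ is $\eps_0$-far from every depth-$(C\log m)$ decision tree under $\mu$,
\end{itemize}
for absolute constants $\eps_0,C>0$. This multiplicative depth gap is exactly what the parameterized inapproximability of \setcover delivers: cover size maps to DT depth, and a $(1,\Theta(\log m))$-gap on cover size gives a $(1,\Theta(\log m))$-gap on DT depth.

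\textbf{Step 2 (XOR amplification).} Take $\ell$ independent copies and form
\[
f(\bx_1,\ldots,\bx_\ell) \;=\; g(\bx_1)\oplus\cdots\oplus g(\bx_\ell), \qquad \by\sim \mu^{\otimes \ell} =: \mathcal{D}_{\oplus \ell}.
\]
In the \textsc{Yes} case $f$ trivially has a depth-$\ell$ decision tree. In the \textsc{No} case I would invoke a distribution-free XOR lemma for decision tree query complexity in the style of~\cite{Dru12,BB19,BPH21}: if every depth-$(C\log m)$ DT is $\eps_0$-far from $g$ under $\mu$, then every decision tree $T$ of depth $\Omega(\ell\cdot C\log m)$ satisfies
\[
\Pr_{\by\sim \mathcal{D}_{\oplus \ell}}\bigl[T(\by)=f(\by)\bigr] \;\le\; \tfrac{1}{2} + (1-2\eps_0)^{\Omega(\ell)}.
\]
Setting $d:=\ell$ and choosing $\ell=m^{\Theta(1)}$, the depth threshold becomes $\Omega(d\log d)$ and the correlation bound becomes $\tfrac12-\exp(-d^{\Omega(1)})$, so the \textsc{Yes}/\textsc{No} instances of the base problem embed as \textsc{Yes}/\textsc{No} instances of {\sc DT-Estimation}$(d,\eps)$ with precisely the parameters in the statement.

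\textbf{Step 3 (ETH time lower bound).} The reduction produces circuits of size $\poly(m,\ell)=\poly(m)$ and a generator for $\mathcal{D}_{\oplus \ell}$ built from $\ell$ independent invocations of the generator for $\mu$. Any algorithm solving {\sc DT-Estimation}$(d,\eps)$ in time $T(n,d)$ therefore solves the base problem in time $T(\poly(m),m^{\Theta(1)})$. Randomized ETH rules out $\exp(m^{o(1)})$ time for the base problem (after standard sparsification/padding), which translates to the claimed $\exp(d^{\Omega(1)})$ lower bound.

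\textbf{Main obstacle.} The technically delicate step is the distributional XOR lemma of Step 2. Classical query-complexity XOR lemmas are formulated either for product/uniform distributions or in terms of correctness, whereas here $g$ is hard only with respect to the bespoke $\mu$ produced by the base reduction. Making the lemma simultaneously (i) drive correlation down to $\exp(-\Omega(\ell))$, (ii) scale the depth lower bound from $D_0=\Theta(\log m)$ all the way to $\Omega(\ell D_0)$ rather than merely $\Omega(\ell)$ — this is what lets the base depth gap propagate to the $d$ vs.\ $\Omega(d\log d)$ gap in the final statement — and (iii) work in a purely black-box way from a circuit for $g$ and a generator for $\mu$, is the crux of the proof. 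A secondary subtlety is arranging the base reduction to deliver a genuine \emph{depth} gap (rather than just a size gap), which forces use of the parameterized {\sc Set-Cover} inapproximability in place of the classical variant that suffices for the construction results.
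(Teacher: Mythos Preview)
Your high-level shape---a Set-Cover-based base instance followed by XOR amplification---matches the paper, but two concrete pieces are wrong or missing, and they are exactly the points the proof turns on.

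First, the base reduction does \emph{not} give you a constant $\eps_0$: the distribution $\mathcal{D}$ built from a size-$N$ Set-Cover instance places mass $\Theta(1/N)$ on each universe element, so the error you can certify in the \textsc{No} case is only $\Omega(1/N)$. Consequently a single Drucker-style XOR lemma is not enough. The paper first proves a strengthened base lemma against \emph{$\delta$-abort} decision trees (\Cref{lem:set_cover-reduction-with-aborts}), then amplifies in two stages (\Cref{lem:hardness-amplification-actual}): the XOR lemma of~\cite{BKLS20} lifts the error from $\Omega(1/N)$ to a constant, but it \emph{requires} the base hardness to hold against abort trees with constant abort probability; only after that does Drucker's lemma drive correlation down to $\tfrac12-\exp(-d^{\Omega(1)})$. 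Your Step~1 claim of constant $\eps_0$ elides precisely this difficulty, and your ``main obstacle'' paragraph identifies the right neighborhood but not the actual mechanism (aborts + two-stage amplification).

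Second, you have the Set-Cover ingredient backwards. A $(1,\Theta(\log m))$ gap from \emph{parameterized} Set-Cover would only carry $m^{O(1)}$ hardness (take $k=1$ in~\Cref{thm:SC_hardness}), which cannot yield $\exp(d^{\Omega(1)})$. The paper instead uses the \emph{non}-parameterized $(k,k(1-\beta)\ln N)$ inapproximability of~\cite{DS14,Mos15} (\Cref{thm:nonparameterized-SC}), which comes with $2^{N^{\Omega(1)}}$ hardness; here the \textsc{Yes} depth is $2k$ (not $1$), and the $\ln N$ multiplicative gap is what becomes the $d$ vs.\ $\Omega(d\log d)$ separation after XOR with $m=\Theta(N^2)$ copies. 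So the parameterized results are used for the construction theorems and the classical results for the testing theorem---the opposite of what your final paragraph asserts.
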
 

Recent work of Blais, Ferreira Pinto Jr., and Harms~\cite{BPH21} gives 
an $\tilde{\Omega}(2^d)$ lower bound on the {\sl query complexity} testing of depth-$d$ decision trees.  This lower bound, however, only applies in the setting where both $f$ and $\mathcal{D}$ are unknown to the algorithm, since it is based on a random function $f$ and a nonexplicit  distribution $\mathcal{D}$.\footnote{This nonexplicit distribution is derived from lower bounds on the sample complexity of estimating distribution support size~\cite{WY19}.}  In contrast, our proof of~\Cref{thm:test} is constructive: it is based on an $f$ that is a depth-$3$ circuit (with $\{ \oplus, \vee\}$ gates) and a similarly simple generator for $\mathcal{D}$. Furthermore, \cite{BPH21}'s lower bound only holds when $\eps$ is a sufficiently small constant, whereas ours holds for~$\eps$ being exponentially close to $\frac1{2}$, and with a gap between the decision tree depths of the {\sc Yes} and {\sc No} cases.

As for upper bounds, Bshouty and Haddad-Zaknoon~\cite{BHZ22} give a distribution-free tester that runs in $2^{O(d)}n$ time and distinguishes depth-$d$ decision trees from those that are $\eps$-far from decision trees of depth $d^2$.  Under the uniform distribution, Blanc, Lange, and Tan~\cite{BLT-ICALP22} give an algorithm that runs in $\poly(d,1/\eps)\cdot n\log n$ time and distinguishes depth-$d$ decision trees from those that are $\eps$-far from decision trees of depth $O(d^3/\eps^3)$.

\subsection{Towards stronger lower bounds for {\sc DT-Construction}} 
\label{sec:optimal}

We show two ways in which the lower bounds of~\Cref{thm:DT-DT,thm:junta-DNF} can be further improved to $n^{\Omega(\log s)}$.  First,  we consider the stricter version of {\sc DT-Construction} where the algorithm has to return a size-$s$ decision tree:  
\begin{sloppypar}
\begin{theorem}
\label{thm:proper} Under randomized ETH, for $s = \exp(\tilde{O}(\log\log n))$ and $\eps = \frac1{n}$ any algorithm for {\sc DT-Construction}$(s,\eps)$ must take $n^{\Omega(\log s)}$ time if the algorithm has to return a size-$s$ decision tree.   As in~\Cref{thm:junta-DNF}, this holds even if $f$ is further promised to be a  $(\log s)$-junta under $\mathcal{D}$ and the algorithm is allowed to return a size-$s$ DNF hypothesis. 
\end{theorem}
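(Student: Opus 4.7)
The plan is to revisit the Set-Cover reduction underlying \Cref{thm:junta-DNF} and observe that, once the algorithm is constrained to return a size-$s$ decision tree (or equivalently a size-$s$ DNF), one can bypass the amplification step responsible for the $\log\log s$ loss there and read off a $\log s$ bound directly from parameterized Set-Cover hardness.

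First, I would trace through the reduction underlying \Cref{thm:junta-DNF}: it transforms a Set-Cover instance with universe of size $m$ and optimum cover $k$ into a DT-Construction instance whose YES-case target $f$ is a $k$-junta computed by a size-$2^k$ decision tree, while in the NO case every hypothesis $\eps$-close to $f$ is ``large'' in an appropriate sense. When the algorithm may return a hypothesis of arbitrary size, the correspondence between hypothesis leaves and cover elements is loose, and recovering a nontrivial gap requires a query-XOR composition step that is precisely what incurs the $\log\log$ loss.

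Second, I would observe that this composition is unnecessary in the proper setting. If the algorithm must return a decision tree of size exactly $s$ (or a DNF of $s$ terms), its leaves (respectively, terms) are in direct bijection with a candidate cover of size $s$, so distinguishing the YES and NO Set-Cover cases reduces to distinguishing two DT-Construction instances whose size parameters differ by a constant factor, with no amplification required. Combining this with the ETH-based parameterized inapproximability of Set-Cover in the line of~\cite{Mos15,KLM18,Lin19,CHK20,KI21}, which rules out $n^{o(k)}$-time algorithms for the relevant gap versions, then yields the desired $n^{\Omega(\log s)}$ bound.

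The main obstacle is the parameter calibration that forces $s$ into the regime $\exp(\tilde O(\log \log n))$ rather than all the way up to $s = n$. The Set-Cover inapproximability statements we invoke produce a ground set of size polynomial in $2^k$, so that once $k = \log s$ exceeds $\tilde O(\log \log n)$ the resulting DT-Construction instance can no longer be placed on $n$ variables while keeping $\eps = 1/n$. Verifying that the junta structure of the YES case, the size-to-cover bijection in the NO case, and the error parameter all line up within this window will be the main piece of careful bookkeeping.
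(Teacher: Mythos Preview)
Your high-level instinct---that the size constraint on the hypothesis lets one avoid the $\log\log$ loss---is right, but the mechanism you propose does not work. The ``query-XOR composition'' you blame is not part of the proof of \Cref{thm:junta-DNF} at all (the composition $f^{\oplus m}$ appears only in the {\sc DT-Estimation} argument); what drives \Cref{thm:junta-DNF} is the blockwise-parity construction $\Gamma_{\oplus\ell}$ of \Cref{lem:Set-Cover-to-DTs} with a large $\ell=\Theta((\log N)/k)$, needed because with no size cap the running time itself must fall below $2^{\opt(\mathcal{S})\ell/16}$. More seriously, there is no ``direct bijection'' between the $s$ leaves (or terms) of a hypothesis and a cover of size $s$. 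The cover one extracts from a tree for $\Gamma$ is the single root-to-$0^n$ path (\Cref{claim:depth_error-new}), whose length can be as large as $s-1$; for a DNF it is the width of the term accepting $0^n$ (\Cref{claim:small-width-F-badly-approximates-NGamma}), which can be as large as $n$. So a size-$2^k$ hypothesis close to the unamplified $\Gamma$ only certifies a cover of size $\le 2^k-1$, which is useless against $\opt(\mathcal{S})>k'\approx k^2$ as in Lin's gap.

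The paper's actual proof still goes through \Cref{lem:Set-Cover-to-DTs}, but with $\ell$ taken to be a \emph{constant} (namely $\ell=4$). With the size cap $s=2^{k\ell}$ in force, the {\sc No}-case analysis only needs $s<2^{\opt(\mathcal{S})\ell/16}$, i.e.\ $\opt(\mathcal{S})>16k$; Lin's gap $\opt(\mathcal{S})>\tfrac12(\log N/\log\log N)^{1/k}$ delivers this once $k=\tfrac12\log\log N/\log\log\log N$. Because $\ell$ is now constant, $\log s=\Theta(k)$ and the $N^{\Omega(k)}$ Set-Cover lower bound reads directly as $n^{\Omega(\log s)}$. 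The restriction $s=\exp(\tilde O(\log\log n))$ arises from this choice of $k$---required so that Lin's gap exceeds $16k$---not from any instance-size blowup of the form ``ground set of size polynomial in $2^k$'' as you suggest.
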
 
\end{sloppypar}
This more stringent version of {\sc DT-Construction} corresponds to the notion of {\sl strictly proper} learning of size-$s$ decision trees, where the algorithm has to return a hypothesis that falls within the concept class. Ehrenfeucht and Haussler's algorithm is not strictly proper.  On the other hand, for size-$s$ decision trees of depth $O(\log s)$, there is a simple dynamic programming  algorithm that runs in $n^{O(\log s)}$ time and is strictly proper~\cite{GLR99,MR02}.  Since every $(\log s)$-junta is a decision tree of depth $\log s$, this matches the lower bound of~\Cref{thm:proper}.

Finally, we show how an optimal lower bound of $n^{\Omega(\log s)}$ for the original version of {\sc DT-Construction}, matching the runtime of Ehrenfeucht and Haussler's algorithm, would follow from a natural and well-studied conjecture about {\sc Set-Cover}:  

\begin{conjecture}[Optimal inapproximability of parameterized {\sc Set-Cover}]
\label{conj:optimal-SetCover-hardness} 
There exists constants $\alpha,\beta < 1$ such that for $k\le N^{\alpha}$, there is no $N^{o(k)}$ time algorithm that, given a size-$N$ set cover instance, distinguishes between: 
\begin{itemize} 
\item[$\circ$] {\sc Yes}: There is a set cover of size $k$. 
\item[$\circ$] {\sc No}: Every set cover has size at least $k\cdot (1-\beta)\ln N$. 
\end{itemize} 
\end{conjecture}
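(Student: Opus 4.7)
The conjecture combines two open targets in the inapproximability literature: matching Feige's optimal $(1-\beta)\ln N$ approximation threshold for {\sc Set-Cover}, while also achieving the tight $N^{\Omega(k)}$ parameterized lower bound under plain randomized ETH. Individually, each piece has been obtained: Feige's gap is classical under $\mathrm{P}\neq\mathrm{NP}$, and parameterized $N^{\Omega(k)}$ lower bounds for weaker gaps have been proved by Lin and by Karthik--Laekhanukit--Manurangsi, among others. My plan is to unify these two lines within a single PCP-based reduction.

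My starting point would be an ETH-based gap Label Cover instance: a Moshkovitz--Raz style PCP reduces $n$-variable $3$SAT to a Label Cover instance of size $N = \mathrm{poly}(n)$ with perfect completeness and inverse-polynomial soundness, so that an $N^{o(k)}$-time algorithm for the target Set-Cover gap problem with $k = N^{\Omega(1)}$ would contradict the absence of $2^{o(n)}$-time SAT algorithms under randomized ETH. I would then apply a parameterized analogue of Feige's partition-system reduction from Label Cover to Set-Cover: the ground set is indexed by constraints together with partition blocks, and sets correspond to (variable, label) pairs consistent with a block's coloring. In the {\sc Yes} case, a satisfying labeling produces a cover whose size equals the number of Label Cover variables, which plays the role of our parameter $k$; in the {\sc No} case, Feige's coupon-collector style analysis on random coverings yields the $(1-\beta)\ln N$ multiplicative gap.

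The main obstacle, and the hardest step, is parameter preservation. Feige's construction inflates $k$ by factors that depend polynomially on the Label Cover alphabet size and on the partition-system parameters, which typically yields only an $N^{\Omega(k/\mathrm{polylog} N)}$ lower bound rather than the full $N^{\Omega(k)}$. Overcoming this would require either a partition system whose parameters scale more favorably with the gap --- perhaps obtained via explicit extractor or expander-code based derandomizations in the spirit of recent distributed-PCP constructions --- or an entirely different route in which the logarithmic gap is produced by a direct XOR-style hardness amplification that enlarges $k$ only additively rather than multiplicatively. I expect this parameter-efficient analogue of Feige's coupon-collector bottleneck to be the decisive difficulty: simultaneously achieving the optimal $(1-\beta)\ln N$ Feige gap and the optimal $N^{\Omega(k)}$ parameterized running time under plain randomized ETH is currently beyond all known techniques, and proving the conjecture almost certainly requires a qualitatively new reduction rather than a direct combination of existing tools.
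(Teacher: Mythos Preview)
This statement is a \emph{conjecture}, not a theorem: the paper does not prove it and explicitly presents it as an open problem about parameterized {\sc Set-Cover}. The paper's role for \Cref{conj:optimal-SetCover-hardness} is purely conditional---it is assumed as a hypothesis in order to derive the optimal $n^{\Omega(\log s)}$ lower bound for {\sc DT-Construction} (\Cref{thm:opt}). So there is no proof in the paper to compare your proposal against.

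Your write-up is therefore not a proof attempt so much as a research outline, and on that level it is reasonable: you correctly identify the two ingredients that would need to be combined (the Feige/Dinur--Steurer $(1-\beta)\ln N$ gap and the $N^{\Omega(k)}$ parameterized running-time barrier), you point to the relevant prior work (Lin, Karthik--Laekhanukit--Manurangsi, Moshkovitz--Raz), and you accurately isolate parameter blow-up in the partition-system step as the bottleneck. Your own conclusion---that achieving both simultaneously ``is currently beyond all known techniques''---is exactly why the paper states this as a conjecture rather than a theorem. Just be aware that what you have written is an informed discussion of why the conjecture is open, not a proof; if the assignment was to supply a proof, none exists in the paper (or, to anyone's knowledge, anywhere).
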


There is a simple and efficient $\ln N$-approximation algorithm for {\sc Set-Cover}, and various hardness results are known for the problem of achieving a better approximation ratio~\cite{LY94,Fei98,DS14,Mos15,CHK20}.  \Cref{conj:optimal-SetCover-hardness} states that this hardness carries over to the {\sl parameterized} setting. Existing ETH-based lower bounds for parameterized {\sc Set-Cover}~\cite{CHKX06,KLM18,CL19,Lin19,KI21} are evidence in favor of it, and it is plausible that~\Cref{conj:optimal-SetCover-hardness} can be shown to hold under ETH.\footnote{See~\cite{MPW19,GKMP20} for further discussions of this conjecture and its implications for proof complexity.} We show:

\begin{theorem} 
\label{thm:opt} 
Under~\Cref{conj:optimal-SetCover-hardness}, for $s = n$ and $\eps = \frac1{n}$ any algorithm for {\sc DT-Construction}$(s,\eps)$ must take $n^{\Omega(\log s)}$ time.   As in~\Cref{thm:junta-DNF}, this holds even if $f$ is further promised to be a  $(\log s)$-junta under $\mathcal{D}$ and the algorithm is allowed to return a DNF hypothesis. 
\end{theorem}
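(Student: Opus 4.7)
The plan is to carry out the same reduction that proves \Cref{thm:junta-DNF}, but instantiated with the stronger hardness supplied by \Cref{conj:optimal-SetCover-hardness} in place of the known ETH-based parameterized \setcover lower bounds. The bottleneck in the previous argument is the approximation factor achievable in parameterized \setcover under ETH: it is only doubly-logarithmic in the ground set size $N$, which is ultimately what limits the lower bound to $n^{\tilde{\Omega}(\log\log s)}$. \Cref{conj:optimal-SetCover-hardness} asserts an $\Omega(\log N)$ gap between \textsc{Yes} and \textsc{No} instances while preserving the parameterized $N^{o(k)}$ runtime lower bound, and this is precisely the additional logarithmic factor needed to reach $n^{\Omega(\log s)}$.

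Concretely, given a \setcover instance of size $N$ with parameter $k$ satisfying the hardness of \Cref{conj:optimal-SetCover-hardness}, I would apply the reduction underlying \Cref{thm:junta-DNF} to produce a circuit $f$ and a generator for a distribution $\mathcal{D}$ over $\zo^n$ with $n = \poly(N)$. In the \textsc{Yes} case the construction should yield an $f$ that is a $k$-junta under $\mathcal{D}$, while in the \textsc{No} case, where every cover has size at least $k(1-\beta)\ln N$, the XOR-lemma gadget used in the proof of \Cref{thm:junta-DNF} should amplify this combinatorial gap into the statement that $f$ is $\eps$-far under $\mathcal{D}$ from every DNF (hence every decision tree) of size $s$, for $s = 2^{\Theta(k)}$. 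Setting $k = \log s$, the parameters $s = n$ and $\eps = \tfrac{1}{n}$ fall within the allowed regime, and a $T(n)$-time algorithm for \textsc{DT-Construction}$(s,\eps)$ would decide the \setcover instance in time $T(\poly(N)) = N^{O(1)} \cdot T(n)$.

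Translating the hypothetical runtime back through the reduction, an algorithm solving \textsc{DT-Construction} in time $n^{o(\log s)} = n^{o(k)} = N^{o(k)}$ would contradict \Cref{conj:optimal-SetCover-hardness}. Since the reduction preserves the \emph{junta} promise in the \textsc{Yes} case and the \emph{far from small DNF} conclusion in the \textsc{No} case, the bound applies even under the stronger promise and with a DNF hypothesis, as claimed.

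The main obstacle, and the step I would check most carefully, is verifying that the XOR-amplification layer used in the proof of \Cref{thm:junta-DNF} preserves the full multiplicative $\ln N$ gap from \Cref{conj:optimal-SetCover-hardness} without eroding it into a $\log\log$ factor when the parameters are rescaled. In particular, one has to balance the block length of the XOR gadget (which controls the junta size $k$ and hence $\log s$) against the blow-up $n = \poly(N)$ so that the ``$k$'' in the \setcover hardness maps cleanly to ``$\log s$'' in the exponent of the \textsc{DT-Construction} lower bound, rather than to $\log s / \log\log s$ or similar. Provided the reduction behind \Cref{thm:junta-DNF} is, as the authors indicate, already gap-preserving up to constants (the limitation coming solely from the available Set-Cover hardness rather than the reduction itself), this bookkeeping should go through and the theorem follows directly.
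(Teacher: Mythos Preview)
Your approach matches the paper's, and your instinct that the parameter balance is the only thing to check is exactly right. The paper resolves it by taking the block length of the parity gadget to be a \emph{constant} (specifically $\ell=4$), in contrast to the $\ell=\log(N)/k$ used in the proof of \Cref{thm:junta-DNF}. With $\ell=4$ one has $n=\Theta(N)$ and a $4k$-junta in the \textsc{Yes} case, so $\log s=\Theta(k)$; in the \textsc{No} case \Cref{lem:Set-Cover-to-DTs} gives that $f$ is $\Omega(1/N)$-far from every DNF of size $2^{\opt(\mathcal S)\ell/16}\ge 2^{\Omega(k\ln N)}=N^{\Omega(k)}=n^{\Omega(\log s)}$, and since the algorithm's running time upper-bounds the size of any DNF it can write down, this matches the $N^{\Omega(k)}$ lower bound supplied by \Cref{conj:optimal-SetCover-hardness}. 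The point is that the $\ln N$ factor in the conjectured gap already does the work that the large $\ell$ was doing in the earlier argument, so no further amplification is needed and the bookkeeping goes through cleanly with no $\log\log$ losses.
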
 

\Cref{table} summarizes our results for {\sc DT-Construction} and shows how they compare with the prior state of the art. 

\begin{table}[H]
\caption{Algorithms and lower bounds for {\sc DT-Construction}.  All our results are conditioned on randomized ETH, the lower bounds of~\Cref{thm:proper,thm:opt} are optimal.}  
\label{table} 
\renewcommand{\arraystretch}{2}
\centering
\begin{tabular}{|c|c|c|c|}
\hline
{\bf Reference} & {\bf Target} & ~~{\bf Hypothesis}~~ & {\bf Time complexity} \\
\hline 
\hline 
~~\cite{ABFKP09}~~ & size-$s$ DT & DT & ~~\text{$n^{\omega(1)}$ lower bound}~~  \\ \hline 
\cite{ABFKP09} & size-$s$ DNF & DNF & \text{$n^{\omega(1)}$ lower bound} \\ \hline 
\cite{EH89} & ~~size-$s$ DT~~ & DT & \text{~~$n^{O(\log s)}$ upper bound~~} \\ \hline   
~~~~~\Cref{thm:junta-DNF}~~~~~ & ~~$(\log s)$-junta~~ & DNF & ~~~\text{$n^{\tilde{\Omega}(\log\log s)}$ lower bound}~~~ \\ \hline 
\Cref{thm:proper} & ~~$(\log s)$-junta~~ & size-$s$ DNF & $n^{\Omega(\log s)}$ lower bound \\ \hline 
\Cref{thm:opt} & ~~$(\log s)$-junta~~ & DNF &  \begin{tabular}{@{}c@{}}
$n^{\Omega(\log s)}$ lower bound \vspace{-12pt} \\
under~\Cref{conj:optimal-SetCover-hardness} \end{tabular} \\ \hline
\end{tabular} 
\end{table}

\section{Our techniques} 
\label{sec:techniques} 

The starting point of all our reductions is the  parameterized version of {\sc Set-Cover}.  For a set cover instance $\mathcal{S}$, we write $\opt(\mathcal{S})$ to denote the size of the smallest set cover. 

\begin{definition} 
\label{def:setcover} 
The $(k,k')$-{\sc Set-Cover} problem is the following. Given as input a set cover instance $\mathcal{S}$ and parameters $k,k'\in \N$, output {\sc Yes} if $\opt(\mathcal{S}) \le k$ and {\sc No} if $\opt(\mathcal{S}) > k'$. 
\end{definition}

\paragraph{Reducing from {\sc Set-Cover} to juntas vs.~DNFs.} Our key lemma, which is the crux of our lower bounds for both {\sc DT-Construction} and {\sc DT-Estimation}, is a reduction from $(k,k')$-{\sc Set-Cover} to the problem of distinguishing small juntas from large DNF formulas, where ``small" and ``large" are functions of $k$ and $k'$ respectively: 

\begin{lemma}
\label{lem:Set-Cover-to-DTs} 
There is an algorithm that, given a size-$N$ instance $\mathcal{S}$ of $(k,k')$-{\sc Set-Cover} with $n$ sets and a parameter $\ell\le N$, runs in $\poly(N)$ time and outputs a circuit representation of a function $f : (\zo^\ell)^n \to\zo$ and a generator for a distribution $\mathcal{D}$ over $(\zo^\ell)^n$ satisfying:
\begin{itemize} 
\item[$\circ$] If $\opt(\mathcal{S}) \le k$, then $f$ is a  $k\ell$-junta under $\mathcal{D}$. 
\item[$\circ$] If $\opt(\mathcal{S}) > k'$, then any DNF of size $\le \exp(O(k'\ell))$ is $\Omega(\frac1{N})$-far from $f$ under $\mathcal{D}$.  
\end{itemize} 
\end{lemma}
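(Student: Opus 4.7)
The plan is to combine a single-copy reduction from \setcover with an $\ell$-fold XOR amplification that leverages an XOR lemma for DNF size.

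\emph{Single-copy reduction.} Given $\mathcal{S}$ with universe $[N]$ and sets $S_1, \ldots, S_n$, let $\chi_u \in \zo^n$ denote the indicator vector of $\{j : u \in S_j\}$. I would define $\mathcal{D}_1$ on $\zo^n$ to place mass $\tfrac{1}{2}$ on $\mathbf{0}$ and mass $\tfrac{1}{2N}$ on each $\chi_u$, and set $h(\mathbf{0}) = 0$ and $h(\chi_u) = 1$. For completeness: if $C^*$ is a size-$k$ cover, then on $\mathrm{supp}(\mathcal{D}_1)$ the function $h$ agrees with the $k$-junta $\bigvee_{j \in C^*} x_j$, since each $\chi_u|_{C^*}$ contains a $1$ (because $C^*$ covers $u$) while $\mathbf{0}|_{C^*} = \mathbf{0}$. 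For soundness: a DNF $\phi$ within distance $\tfrac{1}{2N}$ of $h$ must have $\phi(\mathbf{0}) = 0$, forcing each term of $\phi$ to contain a positive literal, and $\phi(\chi_u) = 1$ for every $u$, so the term ``extents'' $E_i = \bigcap_{j \in P_i} S_j \setminus \bigcup_{j \in N_i} S_j$ cover $[N]$. Since each $E_i \subseteq S_{j_i}$ for any $j_i \in P_i$, this yields a set cover of size $\leq |\phi|$, forcing $|\phi| > k'$.

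\emph{XOR amplification.} I would set $f = h^{\oplus \ell}$ on $(\zo^n)^\ell$ (identified with $(\zo^\ell)^n$ by transposing) and $\mathcal{D} = \mathcal{D}_1^{\otimes \ell}$. Completeness is immediate: on $\mathrm{supp}(\mathcal{D})$, $f$ equals the $k\ell$-junta $\bigoplus_i \bigvee_{j \in C^*} x_j^{(i)}$. For soundness, I would invoke an XOR lemma for DNF size, in the spirit of the query-complexity XOR lemmas of~\cite{Dru12, BPH21}, amplifying ``$h$ is $\Omega(1/N)$-far from size-$k'$ DNFs'' to ``$f$ is $\Omega(1/N)$-far from size-$\exp(\Omega(k'\ell))$ DNFs.'' The structural observation driving the amplification is that on $\mathrm{supp}(\mathcal{D})$, $f$ reduces to the parity-of-ORs $\bigoplus_i \mathbf{1}[x^{(i)} \neq \mathbf{0}]$: any DNF term satisfied at a $t$-nonzero-block $1$-input of $f$ must, in each of its $\ell - t$ zero blocks, contain negative literals forming a set cover of $[N]$ (else it would also fire on an adjacent $(t{+}1)$-nonzero $0$-input), yielding per-term width $\Omega((\ell - t)k')$; a simultaneous covering argument across the $2^{\ell - 1}$ odd-parity patterns then converts this width bound into the desired exponential lower bound on the number of terms.

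\emph{Main obstacle.} The hard part is the XOR amplification step, both in formulating the XOR lemma for DNF size and in applying it at the right parameters. A naive restriction argument — fixing $\ell - 1$ of the blocks to a point of $\mathrm{supp}(\mathcal{D}_1^{\otimes(\ell-1)})$ and extracting a single-copy DNF from $\phi$ — only preserves the single-copy bound $|\phi| > k'$, not amplifying it; and invoking off-the-shelf XOR lemmas for decision-tree depth yields DNF-size lower bounds only via the (weak) implication ``depth $d \Rightarrow$ size $\leq 2^d$,'' falling well short of the claimed $\exp(\Omega(k'\ell))$. The proof must therefore track DNF size directly, combining the per-term width bound with careful combinatorial counting across parity patterns, all while maintaining the distance parameter at $\Omega(1/N)$ rather than letting it drift toward $\tfrac{1}{2}$ as in classical hardness amplification.
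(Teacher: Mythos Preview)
Your construction has a genuine quantitative gap: the output-level XOR $f = h^{\oplus\ell}$ under $\mathcal{D}_1^{\otimes\ell}$ does not have the DNF complexity the lemma demands. Take any set cover $C$ of size $\opt(\mathcal{S})$; on $\supp(\mathcal{D}_1^{\otimes\ell})$ your $f$ equals $\bigoplus_{i=1}^{\ell} b_i$ with $b_i = \bigvee_{j\in C} x_j^{(i)}$, and expanding the parity over the $b_i$'s and distributing the ORs gives an \emph{exact} DNF for $f$ of size
\[
\sum_{t\ \mathrm{odd}}\binom{\ell}{t}|C|^t \ \le\ (1+|C|)^\ell.
\]
So even in the No case (say $\opt(\mathcal{S}) = k'+1$) your $f$ is computed with zero error by a DNF of size $(k'+2)^\ell = \exp(O(\ell\log k'))$, which for large enough $k'$ falls well below the threshold $2^{k'\ell/16}$ the lemma requires. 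Your per-term width observation is actually correct---each term that fires only on $1$-inputs must carry a set cover in every zero block, hence width $\ge(\ell-t)\cdot\opt(\mathcal{S})$---but large width does not force many terms here: a single wide term already covers all $1$-inputs of a fixed parity pattern, so the size you can extract is only $\opt(\mathcal{S})^{\Theta(\ell)}$, not $2^{\Theta(\opt(\mathcal{S})\cdot\ell)}$.

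The paper's construction is different in a way that matters: it amplifies at the \emph{input} level, setting $f(y)=\overline{\Gamma}(\oplus y_1,\ldots,\oplus y_n)$ (each original variable replaced by a parity of $\ell$ fresh bits) and taking the distribution to be the pullback of your $\mathcal{D}_1$ under the blockwise-parity map, \emph{not} a product of $\ell$ copies. Now each of the $\opt(\mathcal{S})$ relevant variables costs $\Theta(\ell)$ literals, so the DNF complexity becomes $2^{\Theta(\opt(\mathcal{S})\cdot\ell)}$. The proof uses no black-box XOR lemma; it (i) shows small DNFs have small \emph{average width} under this distribution via a ``uniform-like'' property of the parity preimages, (ii) uses an averaging argument over the $\ell$ columns to restrict down to a single-copy DNF whose average width is shrunk by a factor of $\ell$ while the error at most doubles, and (iii) extracts a too-small set cover from any small-width accepting term of that restricted DNF. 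The $h^{\oplus m}$ construction you propose does appear in the paper, but only later, for the hardness-amplification step of {\sc DT-Estimation}, where decision-tree \emph{depth} (not DNF size) is the relevant measure.
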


We obtain~\Cref{thm:DT-DT,thm:junta-DNF,thm:proper} by combining~\Cref{lem:Set-Cover-to-DTs} with a recent result on the ETH-hardness of $(k,k')$-{\sc Set-Cover} for $k' = \frac1{2}\left(\frac{\log N}{\log\log N}\right)^{1/k}$, where $N$ is the size of the instance~\cite{Lin19}.  Similarly, we obtain~\Cref{thm:opt} by combining~\Cref{lem:Set-Cover-to-DTs} with~\Cref{conj:optimal-SetCover-hardness}.  For~\Cref{thm:junta-junta}, we only need a simpler special case of~\Cref{lem:Set-Cover-to-DTs}, which we combine with the ETH- and SETH-hardness of $(k,k+1)$-{\sc Set-Cover}   (i.e.~the hardness of solving parameterized {\sc Set-Cover} {\sl exactly})~\cite{CHKX06,PW10}.

\begin{figure}[ht]
    \centering
    \begin{tikzpicture}[]
        \def\x{3.5}
        \def\c{0}
        \def\l{-0.1}
        \def\r{0.1}
        \draw[black,xshift=-\x cm,|-|] (\c,-3) node[left] {$0$} -- (\c,0) node[left]{$n$};
        \draw[black, |-|] (\c,-3) node[right] {$0$} -- (\c,2) node[right]{$2^n$};
        \draw[black,xshift=\x cm, |-|] (\c,-3) node[right] {$0$} -- (\c,2) node[right]{$2^n$};
        \draw[color=black,xshift=-\x cm] (\l,-2) node[left,fill=white] {{\color{black}$\opt \le k$}};
        \node[draw,circle,fill=black,inner sep=1.5pt,xshift=-\x cm] (k) at (\c,-2) {};
        \draw[color=black,xshift=-\x cm] (\l,-1) node[left,fill=white] {\color{black} $\opt >k'$};
        \node[draw,circle,fill=black,inner sep=1.5pt,xshift=-\x cm] (k') at (\c,-1) {};
        \node[draw,circle,fill=black,inner sep=1.5pt,xshift=0cm] (k'M) at (\c,-1) {};
        \node[draw,circle,fill=black,inner sep=1.5pt,xshift=0cm] (kM) at (\c,-2) {};
        \draw[] ([yshift=.15cm,xshift=0cm]k'M) node[anchor=north west] {\footnotesize{size-$k'$ DT}};
        \draw[] ([yshift=.05cm,xshift=0cm]kM) node[anchor=north west] {\footnotesize{size-$k$ DT}};

        \draw[color=black,xshift=\x cm] (\r,1) node[right,fill=white] {size-$2^{\Omega({\color{black} k'}\ell)}$ DNF};
        \node[draw,circle,fill=black,inner sep=1.5pt,xshift=\x cm] (k'l) at (\c,0.92) {};
        \draw[color=black,xshift=\x cm] (\r,-1.25) node[right,fill=white] {$k\ell$-junta};
        \node[draw,circle,fill=black,inner sep=1.5pt,xshift=\x cm] (kl) at (\c,-1.25) {};
        \draw[xshift=1.85 cm] (\c,1.75) node[below,fill=white] {\footnotesize (Gap amplification)};
        \draw[xshift=-1.85 cm] (\c,1.75) node[below,fill=white] {\footnotesize (Easy reduction)};
        \draw[black,-stealth] (kM) to node[midway,below,sloped] {} (kl);
        \draw[black,-stealth] (k'M) to node[midway,above,sloped] {} (k'l);
        \draw[black,-stealth] (k') to node[midway,above,sloped] {} (k'M);
        \draw[black,-stealth] (k) to node[midway,below,sloped] {} (kM);
        \draw[xshift=-\x cm] (\c,-3.75) node [fill=white] {Set cover size};
        \draw[xshift=1.75cm] (\c,-3.75) node [fill=white] {Complexity of $f$ under $\mathcal{D}$};
        \draw[] (-0.75,3.5) node [left,text width=0.3\textwidth,rounded corners,fill=gray!10,inner sep=1ex, text centered] {
            $(k,k')$-{\sc Set-Cover} on $n$ vertices requires time {\color{black} $t(n,k)$}
        };
        \draw[] (0.75,3.5) node [right,text width=0.3\textwidth, rounded corners,fill=gray!10,inner sep=1ex, text centered] {
            Learning juntas with DNF hypotheses requires time $\min\{{\color{black} t(n,k)},2^{\Omega(k'\ell)}\}$
        };
        \draw[] (0,3.5) node[fill=white] {{\LARGE $\Rightarrow$}};
    \end{tikzpicture}
    \caption{An illustration of~\Cref{lem:Set-Cover-to-DTs} as a gap amplification technique. We take an instance of $(k,k')$-{\sc Set-Cover} where the gap between $k$ and $k'$ is small and first construct a distribution and a function whose decision tree complexity under the distribution exactly reflects the set cover gap. Then we amplify the distribution and the function to obtain an even more drastic gap in the complexity of the function under the distribution. \Cref{lem:Set-Cover-to-DTs} is quite versatile and underlies the proofs of \Cref{thm:DT-DT,thm:junta-DNF,thm:opt,thm:test,thm:proper}.}
    \label{fig:decision-tree-gap-amplification}
\end{figure}
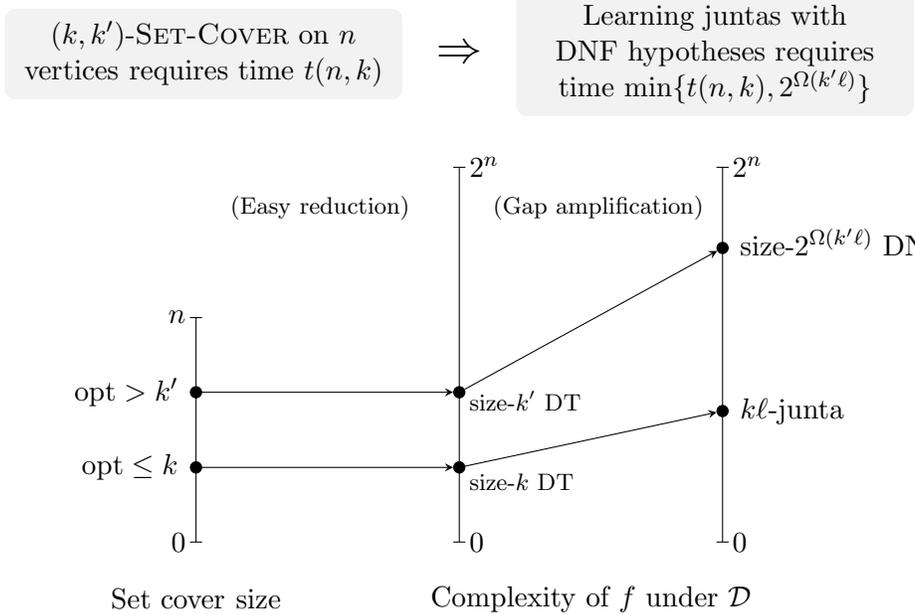

\paragraph{Gap amplification.}{We view \Cref{lem:Set-Cover-to-DTs} as a gap amplification procedure. Specifically, given a $(k,k')$-{\sc Set-Cover} instance it is straightforward to construct an instance of {\sc DT-Construction}, a target function $f$ and distribution $\mathcal{D}$, where the decision tree complexity of $f$ under $\mathcal{D}$ exactly reflects the gap $(k,k')$: if $\opt(\mathcal{S})\le k$ then $f$ is a size-$k$ decision tree under $\mathcal{D}$, and otherwise $f$ requires decision trees of size $\ge k'$. To obtain stronger lower bounds we amplify this gap into a much larger gap in the complexity of $f$ under $\mathcal{D}$: if $\opt(\mathcal{S})\le k$ then $f$ is a small junta under $\mathcal{D}$, and if $\opt(\mathcal{S})>k'$ then $f$ is a large-size DNF under $\mathcal{D}$. This reduction enables us to translate lower bounds for $(k,k')$-{\sc Set-Cover} into strong lower bounds for {\sc DT-Construction}. See \Cref{fig:decision-tree-gap-amplification} for an illustration of this gap amplification. 
}

\paragraph{Building hard instances of {\sc DT-Construction}.}{Our construction of $f$ and $\mathcal{D}$ in \Cref{lem:Set-Cover-to-DTs} is based on the one in~\cite{ABFKP09}, which in turn builds on~\cite{Hau88,HJLT96}. \cite{ABFKP09} also gave a gap amplifying reduction from $(k,k')$-{\sc Set-Cover} to the problem of distinguishing whether $f$ has small or large decision tree complexity under $\mathcal{D}$. \Cref{lem:Set-Cover-to-DTs} is a strengthening of their reduction where the same gap in set cover sizes leads to a more dramatic gap in $f$'s complexity under~$\mathcal{D}$.  While the construction of $f$ and $\mathcal{D}$ is similar to the one in~\cite{ABFKP09}, our analysis is entirely different and is, in our opinion, simpler. Notably, our analysis enables us to obtain lower bounds even against DNF hypotheses whereas previous works relied crucially on the hypothesis being a decision tree. In addition to yielding our stronger conclusion, our analysis overcomes technical challenges that arise when we have to modify~\Cref{lem:Set-Cover-to-DTs} in the context of {\sc DT-Estimation}, which we now discuss.}


\paragraph{Hardness amplification using XOR lemmas for query complexity.} For our lower bounds for {\sc DT-Estimation}, we begin by  observing that~\Cref{lem:Set-Cover-to-DTs}, when combined with existing results on the inapproximability of {\sl non}parameterized {\sc Set-Cover}~\cite{DS14,Mos15}, already implies a mild form of hardness of {\sc DT-Estimation}:  

\begin{corollary}[Mild hardness for {\sc DT-Estimation}] 
\label{cor:mild-hardness} 
Under ETH, there is no $\exp(d^{\,\Omega(1)})$ time algorithm that, given as input the circuit representation of a function $f: \zo^n \to \zo$, a generator for a distribution $\mathcal{D}$ over $\zo^n$, a parameter $d\in \N$, distinguishes between: 
\begin{itemize} 
\item[$\circ$] {\sc Yes}: $f$ is a depth-$d$ decision tree under $\mathcal{D}$. 
\item[$\circ$] {\sc No}: $f$ is $\Omega(\frac1{n})$-far from every decision tree of depth $\Omega(d\log d)$ under $\mathcal{D}$.
\end{itemize} 
\end{corollary}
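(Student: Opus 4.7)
My plan is to compose Lemma~\ref{lem:Set-Cover-to-DTs} with an ETH-based inapproximability result for nonparameterized Set-Cover. The reductions of Dinur--Steurer~\cite{DS14} and Moshkovitz~\cite{Mos15} yield constants $\alpha, \delta > 0$ such that, under ETH, distinguishing size-$N$ Set-Cover instances with $\opt(\mathcal{S}) \le k$ from those with $\opt(\mathcal{S}) > (1 - \alpha) k \ln N$ requires time $\exp(N^{\delta})$, where $k$ can be chosen as $N^{\Omega(1)}$. This follows from the polynomial blowup of these PCP-based reductions from $3$-SAT combined with ETH's $2^{\Omega(n)}$ lower bound.

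Given such a Set-Cover instance on $n_0$ sets, I would apply Lemma~\ref{lem:Set-Cover-to-DTs} with $\ell := \lceil N / n_0 \rceil$, so that the resulting $f : \zo^n \to \zo$ has $n := \ell \cdot n_0 = \Theta(N)$ variables. In the Yes case, $f$ is a $k\ell$-junta under $\mathcal{D}$ and is therefore a decision tree of depth at most $k\ell$. In the No case, $f$ is $\Omega(1/N)$-far under $\mathcal{D}$ from every DNF of size $\le \exp(O(k \ell \log N))$; since every depth-$d'$ decision tree is a DNF of size at most $2^{d'}$, this implies $f$ is $\Omega(1/n)$-far (using $n = \Theta(N)$) from every decision tree of depth $c \cdot k \ell \log N$ for some constant $c > 0$.

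Setting $d := k\ell$, the reduction produces an instance of DT-Estimation$(d, \eps)$ with $\eps = \Omega(1/n)$ that is a Yes-instance when $\opt(\mathcal{S}) \le k$ and a No-instance when $\opt(\mathcal{S}) > (1-\alpha)k\ln N$. Since $d = N^{\Omega(1)}$, we have $\log N = \Theta(\log d)$, so the No case guarantees $f$ is $\Omega(1/n)$-far from every depth-$\Omega(d \log d)$ decision tree, matching the corollary. A $T$-time algorithm for DT-Estimation$(d, \eps)$ would then decide the Set-Cover gap problem in time $T + \poly(N)$, forcing $T \ge \exp(N^{\delta}) = \exp(d^{\Omega(1)})$. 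The main obstacle is to verify that the ETH-based lower bound for $(1-\alpha)\ln N$-gap Set-Cover is indeed of the form $\exp(N^{\Omega(1)})$ rather than a weaker quasi-polynomial bound; this should follow from a careful accounting of the polynomial blowup incurred by composing the PCP constructions of~\cite{DS14,Mos15} with the standard reduction from $3$-SAT, but the precise tracking of $k$ as a function of $N$ requires attention.
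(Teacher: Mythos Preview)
Your proposal is correct and follows essentially the same route as the paper: instantiate Lemma~\ref{lem:Set-Cover-to-DTs} with the ETH-based $(k, (1-\beta)k\ln N)$-\textsc{Set-Cover} hardness of~\cite{DS14,Mos15} (stated in the paper as \Cref{thm:nonparameterized-SC-1,thm:nonparameterized-SC}), and read off the Yes/No gap in decision-tree depth.

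Two minor remarks. First, your choice $\ell = \lceil N/n_0\rceil$ is more elaborate than necessary; the paper simply takes $\ell = 2$ (see \Cref{section:lower_bounds_dt_estimation}), so that $f = \Gamma_{\oplus 2}$ lives on $2n_0$ variables. The $\Omega(1/N)$ farness then equals $\Omega(1/n)$ once one pads with dummy sets to ensure $n_0 = \Omega(N)$, which costs nothing. Your large-$\ell$ choice accomplishes the same normalization without padding, so neither route is wrong. Second, your ``main obstacle''---that the ETH lower bound for gap \textsc{Set-Cover} is genuinely $\exp(N^{\Omega(1)})$ with $k = N^{\Omega(1)}$---is already packaged in the paper as \Cref{thm:nonparameterized-SC}; moreover $k = N^{\Omega(1)}$ is forced, since otherwise the $N^{O(k)}$ brute-force search over size-$k$ subcollections would beat the stated lower bound.
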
 

We amplify this mild hardness ($\eps = O(\frac1{n})$) to very strong hardness ($\eps = $ exponentially close to $\frac1{2}$) by considering $f^{\oplus m}: (\zo^{n})^m\to\zo$, the $m$-fold XOR composition of $f$: 
\[ f^{\oplus m}(x^{(1)},\ldots,x^{(m)}) \coloneqq f(x^{(1)})\oplus \cdots \oplus f(x^{(m)})\]
and the corresponding distribution $\mathcal{D}^{m}$ over $(\zo^n)^m$.  In the {\sc Yes} case of~\Cref{cor:mild-hardness}, it is easy to see that $f^{\oplus m}$ is a decision tree of depth $\le dm$ under $\mathcal{D}^m$.  To analyze the {\sc No} case, we prove the following lemma:

\begin{lemma}[Hardness amplification for {\sc DT Estimation}] 
\label{lem:hardness-amplification} 
Let $f : \zo^n \to \zo$ and $\mathcal{D}$ be such that $f$ is $\eps$-far from every depth-$d$  decision tree under $\mathcal{D}$. For any $\gamma > 0$, by taking $m = \Theta(\log(1/\gamma)/\eps)$, we get that $f^{\oplus m}$ is $(\frac1{2} - \gamma)$-far from every decision tree of depth $\Omega(dm)$ under $\mathcal{D}^m$.
\end{lemma}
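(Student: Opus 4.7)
Let $T$ be an arbitrary decision tree of depth $D = c\,dm$ on $(\zo^n)^m$ for a small constant $c > 0$. Setting $m = \Theta(\log(1/\gamma)/\eps)$, my goal is to show $|\mathrm{adv}(T, f^{\oplus m}, \mathcal{D}^m)| \le 2\gamma$, where $\mathrm{adv} = 1 - 2\cdot\mathrm{dist}$; this is equivalent to the lemma's conclusion, since $T$ and $\neg T$ jointly pin down distance.

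The first step exploits the product structure of $\mathcal{D}^m$ together with the XOR structure of $f^{\oplus m}$. For each leaf $\ell$ of $T$, the root-to-$\ell$ path fixes $q_j(\ell)$ coordinates of block $j$, inducing a restriction $\rho_j(\ell)$ with $\sum_j q_j(\ell) \le D$. Conditioning on reaching $\ell$ preserves block-wise independence, so $\by \mid \ell \sim \prod_j (\mathcal{D} \mid \rho_j(\ell))$, giving
\[
\mathrm{adv}(T, f^{\oplus m}, \mathcal{D}^m) = \sum_\ell p_\ell\, (-1)^{T|_\ell}\prod_{j=1}^m \alpha_j(\ell), \qquad \alpha_j(\ell) := \E_{y^{(j)} \sim \mathcal{D}\mid\rho_j(\ell)}\bigl[(-1)^{f(y^{(j)})}\bigr],
\]
and hence $|\mathrm{adv}(T, f^{\oplus m}, \mathcal{D}^m)| \le \E_\ell \bigl[\prod_j |\alpha_j(\ell)|\bigr]$. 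Meanwhile, the $\eps$-far hypothesis on $f$ is equivalent to the per-block bound that for any depth-$d$ decision tree the average of $|\alpha|$ over its leaves is $\le 1-2\eps$. The heart of the proof is amplifying this per-block bound multiplicatively to $\E_\ell \bigl[\prod_j|\alpha_j(\ell)|\bigr] \le (1-2\eps)^{\Omega(m)} \le 2\gamma$, where the last step uses $1-2\eps \le e^{-2\eps}$ and $m = \Theta(\log(1/\gamma)/\eps)$.

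The main obstacle is the multiplicative aggregation step. The $\alpha_j(\ell)$'s are correlated across $j$---they are jointly determined by a single adaptive path in $T$---and the per-block query counts can concentrate on a small subset of blocks. A naive hybrid argument only converts advantage $2\gamma$ on $f^{\oplus m}$ into advantage $2\gamma/m$ on $f$, which is far weaker than the $(1-2\eps)$-threshold needed to contradict the hypothesis. I would bridge this gap by invoking an XOR lemma for decision-tree query complexity in the spirit of Drucker's direct product theorems and the recent work of Blais--Ferreira Pinto Jr.--Harms. The underlying mechanism is a hardcore-lemma reconstruction: sampling a random block $j$ together with random values for the other $m-1$ blocks yields from $T$ a randomized algorithm for $f$ with expected query count $\approx cd$ and advantage $> 2\gamma$; an Impagliazzo-style hardcore extraction applied to decision trees under $\mathcal{D}$ then amplifies this into a deterministic depth-$d$ decision tree with advantage $> 1-2\eps$, contradicting the $\eps$-far hypothesis, provided $c$ is chosen small enough that the extraction's depth budget fits within $d$.
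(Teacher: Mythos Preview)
The paper explicitly acknowledges that it does not prove this lemma as stated (see the ``Handling aborts'' discussion in the techniques overview). What is actually proved is a version with the stronger hypothesis that $f$ is $\eps$-far from every depth-$d$ \emph{$\delta$-abort} decision tree for $\delta \ge 0.34$, where distance is measured only on non-abort inputs; this stronger hypothesis is separately established by strengthening the {\sc Set-Cover} reduction. That version is proved by a two-stage amplification: first apply the XOR lemma of Brody, Kim, Lerdputtipongporn, and Srinivasulu with $m_1 = \Theta(1/\eps)$ to boost the error from $\eps$ to a fixed constant while keeping depth $\Omega(dm_1)$ (this is the step that requires abort-robustness), and then apply Drucker's XOR lemma with $m_2 = \Theta(\log(1/\gamma))$ to go from constant error to $\tfrac12-\gamma$.

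Your hardcore-lemma route is essentially Drucker's argument, and that is precisely where the gap sits. Drucker's lemma says that if $f$ is $\eps$-far from depth-$d$ trees, then $f^{\oplus m}$ is roughly $\tfrac12\bigl(1-(1-\eps)^m\bigr)$-far from trees of depth $\Theta(\eps\, dm)$ --- note the extra $\eps$ factor in the depth bound. With $m = \Theta(\log(1/\gamma)/\eps)$ this yields only depth $\Theta(d\log(1/\gamma)) = \Theta(\eps\cdot dm)$, far short of the claimed $\Omega(dm)$ when $\eps$ is small. This loss is intrinsic to the Impagliazzo-style reconstruction you sketch: turning a weak per-block predictor into a single depth-$d$ tree that beats the $\eps$-far threshold costs a multiplicative $\Theta(1/\eps)$ in depth, so the contradiction only fires when $T$ has depth $\Theta(\eps\, dm)$. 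Your closing clause ``provided $c$ is chosen small enough that the extraction's depth budget fits within $d$'' is exactly the step that does not go through --- no constant $c$ absorbs the $1/\eps$ factor. The paper sidesteps this by first making the error a constant via the Brody et al.\ lemma, but that lemma genuinely needs the abort-robust hypothesis, which is why the paper does not claim the lemma in the form you are attempting to prove.
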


Our proof of~\Cref{lem:hardness-amplification} combines existing XOR lemmas for distributional query complexity~\cite{Dru12,BB19,BKLS20}.  Specifically, we first use one due to Brody, Kim, Lerdputtipongporn, and Srinivasulu~\cite{BKLS20} to amplify from $\eps = O(\frac1{n})$ to $\Theta(1)$, and then one due to Drucker~\cite{Dru12} to amplify from $\Theta(1)$ to exponentially close to $\frac1{2}$.  The quantitative parameters of these lemmas are incomparable, and we show how they can be applied in tandem in our setting.

\paragraph{Handling aborts.} \Cref{lem:hardness-amplification} as stated is actually not quite what we prove; see~\Cref{lem:hardness-amplification-actual} for the actual version.  For technical reasons, the XOR lemma of~\cite{BKLS20} (and hence~\Cref{lem:hardness-amplification}) requires a stronger assumption, that $f$ is $\eps$-far from every depth-$d$ decision tree  that is allowed to {\sl abort} with probability $\delta$, and distance is measured with respect to {\sl non-aborts}.  \cite{BKLS20}'s lemma requires $\delta = \Theta(1)$ whereas $\eps = O(\frac1{n})$ in our setting, so this is a significantly stronger assumption.  To satisfy this stronger assumption, we have to prove a strengthening of~\Cref{cor:mild-hardness} where the {\sc No} case maps to an $f$ that is $\Omega(\frac1{n})$-far from decision trees that are allowed to abort with constant probability; this in turn necessitates a corresponding strengthening of~\Cref{lem:Set-Cover-to-DTs}.  With these in hand,~\Cref{thm:test} then follows fairly easily. 


\section{Discussion and future work} 

Our work makes new progress on the longstanding open problem of determining the complexity of properly PAC learning decision trees.  A natural avenue for future work is to close the remaining gap between our lower bound of $n^{\tilde{\Omega}(\log\log s)}$ and the $n^{O(\log s)}$ runtime of Ehrenfeucht and Haussler's  algorithm.  Our techniques point to an approach towards an $n^{\Omega(\log s)}$ lower bound via~\Cref{conj:optimal-SetCover-hardness}, which adds further motivation to the study of parameterized {\sc Set-Cover}.

As for our testing lower bounds, a notable feature is that they hold in the regime where $\eps = \frac1{2}-o(1)$, which we obtain from an initial hardness for $\eps = \Omega(\frac1{n})$ 
via XOR lemmas for query complexity.  It would be interesting to further develop such hardness amplification techniques in property testing.  For example, can the communication-complexity-based lower bound technique of Blais, Brody, and Matulef~\cite{BBM12} be fruitfully combined with the large body of work on XOR lemmas, and direct-product-type results more generally, for communication complexity?

More broadly, there is a growing and concerted effort within the machine learning community to design algorithms that produce {\sl simple} hypotheses, such as decision trees, especially in the context of high-stakes applications where interpretability is paramount; see e.g.~the position paper~\cite{Rud19}.  Our lower bounds show that interpretability can come at the price of computational intractability, even under strong assumptions on the target function.  There is  substantial practical motivation for the development of a theoretical understanding of such tradeoffs and how they can be mitigated.  For example, a concrete next step from our work is to identify reasonable assumptions under which our lower bounds can be circumvented; one could consider monotone target functions, a common assumption in both theory and practice.

\section{Preliminaries}
\paragraph{Set Cover.}{
Given a bipartite graph $\mathcal{S}=(S,U,E)$ on $N$-vertices, the \setcover problem is to find a minimum size subset $C\sse S$ such that every vertex in $U$ is adjacent to some vertex in $C$.\footnote{Typically, the set cover problem is cast as a combinatorial problem: given subsets $S_1,\ldots,S_m\sse [n]$ of some universe $[n]$, find the minimum size subcollection $S_{i_1},\ldots,S_{i_k}$ whose union is $[n]$. We consider the graph theoretic formulation because it makes the connection to the hitting set problem more transparent.}
We write $\opt(\mathcal{S})\in\N$ to denote the size of the smallest set cover for $\mathcal{S}$. We will often write $n$ to denote the size of $|S|\le N$. The set of neighbors of a vertex $u\in U$ is $\mathcal{N}_\mathcal{S}(u)=\{s\in S:(s,u)\in E\}$. We identify a vertex $u\in U$ with its neighborhood set $\mathcal{N}_\mathcal{S}(u)$. Each set $\mathcal{N}_\mathcal{S}(u)$ can be viewed as a string in $\zo^{|S|}$ where a $1$ in the string indicates an edge between $u$ and the corresponding vertex $s\in S$.  Hence, each vertex $u\in U$ can likewise be encoded as a string in $\zo^{|S|}$.\footnote{We assume without loss of generality that each $\mathcal{N}_\mathcal{S}(u)$ is unique so that a vertex $u$ can be identified by its neighborhood set $\mathcal{N}_\mathcal{S}(u)$ (if $\mathcal{N}_\mathcal{S}(u)=\mathcal{N}_\mathcal{S}(u')$ for $u\neq u'$ we can simply delete $u'$ without affecting the set cover complexity)}
}

\paragraph{Hitting Set.}{
Given a bipartite graph $\mathcal{H}=(S,U,E)$, the {\sc Hitting-Set} problem is to find a minimum size subset $I\sse U$ which ``hits'' every vertex $s\in S$: $\mathcal{N}_{\mathcal{S}}(s)\cap I\neq \varnothing$ for all $s\in S$. We write $\opt(\mathcal{H})$ for the size of the smallest hitting set.
}

An instance $\mathcal{H}=(S,U,E)$ of {\sc Hitting-Set} can equivalently be viewed as an instance $\mathcal{H}=(U,S,E)$ of {\sc Set-Cover}. 
\begin{fact}[{\sc Set-Cover} and {\sc Hitting-Set} are equivalent]
{\sc Set-Cover} and {\sc Hitting-Set} are equivalent to each other under approximation-preserving reductions. In particular, any instance $\mathcal{S}$ of {\sc Set-Cover} can be transformed in linear-time into an instance $\mathcal{H}$ of hitting set such that $\opt(\mathcal{S})=\opt(\mathcal{H})$ and vice versa.
\end{fact}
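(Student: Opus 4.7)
The plan is to observe that the two problems are literally dual formulations of the same bipartite incidence structure, and so the reduction in both directions is the trivial one that swaps the two sides of the bipartition.

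First I would unwind the definitions on a common graph. Let $\mathcal{S}=(S,U,E)$ be a \setcover instance. A set cover is a subset $C\subseteq S$ such that for every $u\in U$ there exists $s\in C$ with $(s,u)\in E$. Now define the \textsc{Hitting-Set} instance $\mathcal{H}=(U,S,E')$ obtained by swapping the two sides of the bipartition, with $E' = \{(u,s) : (s,u)\in E\}$. In $\mathcal{H}$, the role of ``left vertices to be hit'' is played by $U$, and we look for $I\subseteq S$ satisfying $\mathcal{N}_{\mathcal{H}}(u)\cap I\neq\varnothing$ for every $u\in U$. Unpacking, this is exactly the condition that for every $u\in U$ there is some $s\in I$ with $(s,u)\in E$. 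So a set $C\subseteq S$ is a valid set cover of $\mathcal{S}$ if and only if $C\subseteq S$ is a valid hitting set of $\mathcal{H}$, and of the same size; in particular $\opt(\mathcal{S}) = \opt(\mathcal{H})$.

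The reverse reduction is entirely symmetric: given a \textsc{Hitting-Set} instance $\mathcal{H}=(S,U,E)$, form the \setcover instance $\mathcal{S}'=(U,S,E')$ with edges reversed, and observe that hitting sets $I\subseteq U$ of $\mathcal{H}$ are in size-preserving bijection with set covers $I\subseteq U$ of $\mathcal{S}'$. Both directions can be implemented in linear time by a single pass over $E$ to produce the swapped adjacency lists, and because the bijection preserves solution sets exactly, any approximation ratio achieved on one side carries over verbatim to the other; hence the reductions are approximation-preserving.

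There is no real obstacle here: the whole statement is a bookkeeping observation about which side of a bipartite graph we are choosing from versus constraining. The only thing to be a little careful about is the convention in the paper that vertices $u\in U$ are identified with their neighborhoods $\mathcal{N}_{\mathcal{S}}(u)\subseteq S$ (so that distinct $u$'s have distinct neighborhoods); this convention is used only to justify the $\zo^{|S|}$ encoding and is not needed for the equivalence itself, so I would simply remark that deleting duplicate $u$'s is a cost-free preprocessing step that preserves $\opt$ on both sides.
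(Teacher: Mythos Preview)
Your proposal is correct and matches the paper's approach: the paper does not give a formal proof of this fact but simply remarks, in the sentence preceding it, that ``an instance $\mathcal{H}=(S,U,E)$ of {\sc Hitting-Set} can equivalently be viewed as an instance $\mathcal{H}=(U,S,E)$ of {\sc Set-Cover}.'' Your write-up is just a careful unpacking of this same swap-the-sides-of-the-bipartition observation.
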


The results of \cite{ABFKP09} are formulated in terms of hitting set. Though for consistency, in this work we will only refer to {\sc Set-Cover}. See \Cref{fig:bipartite-graph-set-cover-hitting-set} for an illustration of a set cover instance and a hitting set instance on a single bipartite graph.

\begin{figure}[H]
\centering
\begin{subfigure}{.5\textwidth}
  \centering
  \resizebox{0.65\textwidth}{!}{%
      \begin{tikzpicture}[
      every node/.style={draw,circle,minimum size=0.25cm,inner sep=-0pt},
      usnode/.style={fill=black},
      ssnode/.style={fill=black},
      every fit/.style={ellipse,draw,inner sep=-2pt,text width=2cm},
      ->,shorten >= 2pt,shorten <= 2pt
    ]
    \begin{scope}[start chain=going below,node distance=7mm]
    \node[ssnode,on chain,teal] (s1) [label=left: {}] {};
    \node[ssnode,on chain,teal] (s2) [label=left: {}] {};
    \node[ssnode,on chain,teal] (s3) [label=left: {}] {};
    \node[ssnode,on chain] (s4) [label=left: {}] {};
    \node[ssnode,on chain] (s5) [label=left: {}] {};
    \end{scope}
    
    \begin{scope}[xshift=4cm,yshift=-0.5cm,start chain=going below,node distance=7mm]
    \node[usnode,on chain] (u1) [label=right: {}] {};
    \node[usnode,on chain] (u2) [label=right: {}] {};
    \node[usnode,on chain] (u3) [label=right: {}] {};
    \node[usnode,on chain] (u4) [label=right: {}] {};
    \end{scope}
    
    \node [black,fit=(s1) (s5),label={[yshift=0.75em]$S$}] {};
    \node [black,fit=(u1) (u4),label={[yshift=0.75em]$U$}] {};
    
    \node[coordinate] (S1) at (s1){};
    \node[coordinate] (S2) at (s2){};
    \node[coordinate] (S3) at (s3){};
    \node[coordinate] (S4) at (s4){};
    \node[coordinate] (S5) at (s5){};
    \node[coordinate] (U1) at (u1){};
    \node[coordinate] (U2) at (u2){};
    \node[coordinate] (U3) at (u3){};
    \node[coordinate] (U4) at (u4){};
    
    \draw[-,teal,very thick] (s1.center) -- (u1.center);
    \draw[-,teal,very thick] (S1) -- (U2);
    \draw[-,teal,very thick] (S1) -- (U3);
    \draw[-,teal,very thick] (S2) -- (U4);
    \draw[-,teal,very thick] (S3) -- (U3);
    \draw[-,teal,very thick] (S3) -- (U2);
    \draw[-] (S4) -- (U3);
    \draw[-] (S5) -- (U4);
        
    \node[draw,circle,fill=black,minimum size=0.25cm,inner sep=-0pt] at (U1) {};
    \node[draw,circle,fill=black,minimum size=0.25cm,inner sep=-0pt] at (U2) {};
    \node[draw,circle,fill=black,minimum size=0.25cm,inner sep=-0pt] at (U3) {};
    \node[draw,circle,fill=black,minimum size=0.25cm,inner sep=-0pt] at (U4) {};

    \end{tikzpicture}
    }
  \caption{A set cover of size 3 for $G$ highlighted in teal}
  \label{fig:set-cover}
\end{subfigure}%
\begin{subfigure}{.5\textwidth}
  \centering
  \resizebox{0.65\textwidth}{!}{
          \begin{tikzpicture}[
          every node/.style={draw,circle,minimum size=0.25cm,inner sep=0pt},
          usnode/.style={fill=black},
          ssnode/.style={fill=black},
          every fit/.style={ellipse,draw,inner sep=-2pt,text width=2cm},
          ->,shorten >= 2pt,shorten <= 2pt
        ]
        \begin{scope}[start chain=going below,node distance=7mm]
        \node[ssnode,on chain] (s1) [label=left: {}] {};
        \node[ssnode,on chain] (s2) [label=left: {}] {};
        \node[ssnode,on chain] (s3) [label=left: {}] {};
        \node[ssnode,on chain] (s4) [label=left: {}] {};
        \node[ssnode,on chain] (s5) [label=left: {}] {};
        \end{scope}
        
        \begin{scope}[xshift=4cm,yshift=-0.5cm,start chain=going below,node distance=7mm]
        \node[usnode,on chain] (u1) [label=right: {}] {};
        \node[usnode,on chain] (u2) [label=right: {}] {};
        \node[usnode,on chain,violet] (u3) [label=right: {}] {};
        \node[usnode,on chain,violet] (u4) [label=right: {}] {};
        \end{scope}
        
        \node [black,fit=(s1) (s5),label={[yshift=0.75em]$S$}] {};
        \node [black,fit=(u1) (u4),label={[yshift=0.75em]$U$}] {};
        \node[coordinate] (S1) at (s1){};
        \node[coordinate] (S2) at (s2){};
        \node[coordinate] (S3) at (s3){};
        \node[coordinate] (S4) at (s4){};
        \node[coordinate] (S5) at (s5){};
        \node[coordinate] (U1) at (u1){};
        \node[coordinate] (U2) at (u2){};
        \node[coordinate] (U3) at (u3){};
        \node[coordinate] (U4) at (u4){};
        \draw[-] (S1) -- (U1);
        \draw[-] (S1) -- (U2);
        \draw[-,violet,very thick] (S1) -- (U3);
        \draw[-,violet,very thick] (S2) -- (U4);
        \draw[-,violet,very thick] (S3) -- (U3);
        \draw[-] (S3) -- (U2);
        \draw[-,violet,very thick] (S4) -- (U3);
        \draw[-,violet,very thick] (S5) -- (U4);
        
        \node[draw,circle,fill=black,minimum size=0.25cm,inner sep=-0pt] at (S1) {};
        \node[draw,circle,fill=black,minimum size=0.25cm,inner sep=-0pt] at (S2) {};
        \node[draw,circle,fill=black,minimum size=0.25cm,inner sep=-0pt] at (S3) {};
        \node[draw,circle,fill=black,minimum size=0.25cm,inner sep=-0pt] at (S4) {};
        \node[draw,circle,fill=black,minimum size=0.25cm,inner sep=-0pt] at (S5) {};
        
        \end{tikzpicture}
    }
  \caption{A hitting set of size $2$ for $G$ highlighted in purple}
  \label{fig:hitting-set}
\end{subfigure}
\caption{A bipartite graph $G=(S,U,E)$ viewed on the left as a set cover instance and on the right as a hitting set instance.}
\label{fig:bipartite-graph-set-cover-hitting-set}
\end{figure}
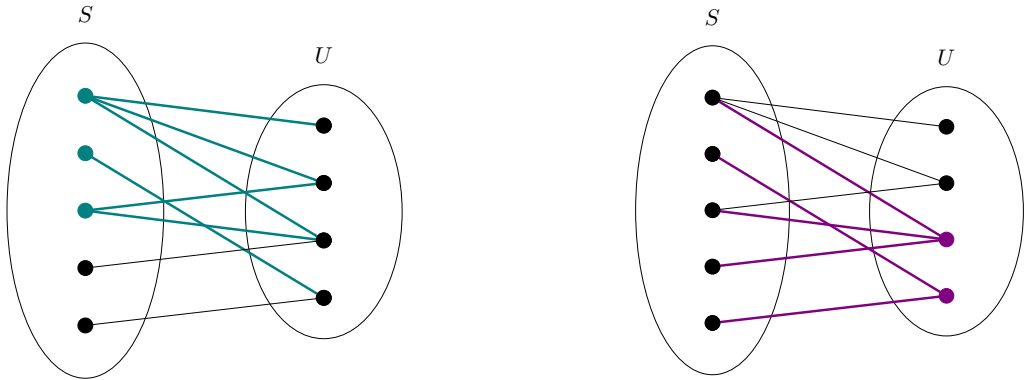

\paragraph{Decision trees.}{For a decision tree $T:\zo^n\to\zo$, we write $L\in T$ to denote that $L$ is a leaf of $T$. The size of $T$ is its number of leaves and is denoted $|T|$. For an input $x\in\zo^n$, we write $\depth_T(x)\in \N$ to denote the \textit{depth} of $x$ in $T$, the number of variables queried on the root-to-leaf path consistent with $x$. }

\paragraph{DNF formulas.}{A literal is a variable or its negation. A term is a conjunction ($\land$) of literals. A DNF formula $F:\zo^n\to\zo$ is a disjunction ($\lor$) of terms, denoted $F=t_1\lor\cdots\lor t_s$. The size of the DNF formula is $|F|=s$, the number of terms. The \textit{width} of a term $|t_i|$ is the number of literals in it. The width of an input $x\in\zo^n$ is defined as the width of the smallest width term accepting $x$ and $0$ if no term accepts $x$:
$$
\width_F(x)\coloneqq
\begin{cases}
\underset{t_i(x)=1}{\min} |t_i| & F(x)=1\\
0 & F(x)=0.
\end{cases}
$$
}

\paragraph{Circuits.}{We consider Boolean circuits $\mathcal{C}:\zo^n\to\zo$ with AND, OR, NOT, and PARITY gates: $\{\land,\lor,\lnot,\oplus\}$. The size of a circuit $|\mathcal{C}|$ is the number of gates in it. The depth of a circuit is the longest directed path from an input node to an output node.
}

\paragraph{$k$-juntas.}{A function $f:\zo^n\to\zo$ is a $k$-junta if its output depends on $\le k$ bits. Hence, if $f$ is a $k$-junta it can be completely specified by a table of size $2^k$ corresponding to all possible assignments to the $k$ relevant variables. In particular, every $k$-junta is a size-$2^k$ decision tree and every size-$s$ decision tree is an $s$-junta.}

\paragraph{Distributions.}{We use boldface letters e.g. $\bx,\by$ to denote random variables. For a distribution $\mathcal{D}$, we write $\dist_{\mathcal{D}}(f,g)=\Pr_{x\sim \mathcal{D}}[f(\bx)\neq g(\bx)]$. A function $f$ is $\eps$-close to $g$ if $\dist_{\mathcal{D}}(f,g)\le \eps$. When $\eps=0$, we drop the $\eps$ and simply say $f$ computes $g$ over $\mathcal{D}$. Often $f$ is viewed as one of the combinatorial objects above and $g$ is a generic function, e.g. a decision tree $T:\zo^n\to\zo$ computes $g$ over $\mathcal{D}$ if $\dist_{\mathcal{D}}(T,g)=0$. Similarly, $f$ is $\eps$-far from $g$ if $\dist_{\mathcal{D}}(f,g)>\eps$. We write $\mathcal{U}_b$ for the uniform distribution on $b$ bits. A generator for a distribution $\mathcal{D}$ over $\zo^n$ is an algorithm $G:\zo^n\to\zo^n$ which takes $n$ uniform random bits as input and outputs $n$ bits distributed according to $\mathcal{D}$: $\Pr_{\bx\sim\mathcal{U}_n}[G(\bx)=x]=\Pr_{\bx\sim\mathcal{D}}[\bx=x]$ for all $x \in \zo^n$.
}

\paragraph{Learning.}{See \Cref{appendix:pac-learning} for the definitions of learning that we use. All learning algorithms we consider are \textit{proper} learning algorithms. When referring to ``learning decision trees'' we mean properly learning the concept class $\mathcal{T}=\{T:\zo^n\to\zo\mid T\text{ is a decision tree}\}$. Likewise, when referring to ``learning size-$s$ decision trees'', we mean properly learning the concept class $\mathcal{T}_s=\{T:\zo^n\to\zo\mid T\text{ is a size-}s\text{ decision tree}\}$. When discussing algorithms for learning $k$-juntas, we assume the output of the learning algorithm is a table of size $2^k$ (as in e.g. \cite{MOS04}). 
}

\paragraph{Complexity-theoretic assumptions.}{
Many results on the hardness of {\sc Set-Cover} are conditioned on the exponential time hypothesis.

\begin{hypothesis}[Exponential time hypothesis (ETH) \cite{Tovey84, IP01, IPZ01}]
There exists a constant $\delta>0$ such that $3$-SAT on $n$ variables cannot be solved in $O(2^{\delta n})$ time.
\end{hypothesis}

Since we are proving hardness against randomized algorithms, we will use a randomized variant of ETH.

\begin{hypothesis}[Randomized ETH, see \cite{CIKP03,DHHTMTW14}]
There exists a constant $\delta>0$ such that $3$-SAT on $n$ variables cannot be solved by a randomized algorithm in $O(2^{\delta n})$ time with error probability at most $1/3$.
\end{hypothesis}

We will also use two additional hypotheses.

\begin{hypothesis}[Strong exponential time hypothesis (SETH) \cite{IP01, IPZ01}]
For every $\delta>0$, there exists a $k\in\N$ such that $k$-CNF-SAT on $n$ variables cannot be solved in time $O(2^{n(1-\delta)})$.
\end{hypothesis}

\begin{hypothesis}[{$W[1]\neq \textnormal{FPT}$}, see \cite{DF13,CyganFKLMPPS15}]
For any computable function $f:\N\to\N$, no algorithm can decide if a graph $G=(V,E)$ contains a $k$-clique in $f(k)\cdot \poly(|V|)$ time. 
\end{hypothesis}

As with randomized ETH, randomized SETH and randomized $W[1]\neq \textnormal{FPT}$ are the respective versions of these hypotheses against randomized algorithms. Also, we remark that $W[1]\neq \textnormal{FPT}$ is a weaker assumption than ETH which itself is weaker than SETH. If $W[1]=\textnormal{FPT}$, then SAT is solvable in subexponential time.
}
\subsection{Existing results on the hardness of {\sc Set-Cover}}
\label{subsection:results-on-set-cover}
Throughout, we use several different hardness results for {\sc Set-Cover} and approximating {\sc Set-Cover}. We start with the following theorem due to \cite{Lin19} about the hardness of approximating set cover. We have slightly modified the theorem from its original form to fit our setting. We discuss Lin's original theorem and our modifications in \Cref{appendix:set_cover_hardness}.
\begin{theorem}[\cite{Lin19}]
\label{thm:SC_hardness}
Assuming randomized ETH, there is a constant $c\in(0,1)$ such that for any $k\in\N$ with $k\le \frac{1}{2}\cdot\frac{\log\log N}{\log\log\log N}$, there is no randomized $N^{c k}$ time algorithm that can solve $\left(k,\frac{1}{2}\left(\frac{\log N}{\log\log N}\right)^{1/k}\right)$-\setcover on $N$ vertices with high probability.
\end{theorem}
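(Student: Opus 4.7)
The plan is to derive \Cref{thm:SC_hardness} from Lin's original inapproximability theorem for parameterized \setcover, via a parameter translation combined with the observation that Lin's reduction is compatible with the randomized ETH. I would not attempt to reprove Lin's construction from scratch; the content here is almost entirely bookkeeping. The pointer to \Cref{appendix:set_cover_hardness} confirms that the modification is precisely of this flavor.

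The key steps I would carry out in order are: (i) state Lin's original theorem, which asserts that under ETH, no algorithm running in time $N^{o(k)}$ can distinguish $\opt(\mathcal{S})\le k$ from $\opt(\mathcal{S})>k\cdot g(N,k)$ for a gap function $g$ of the form $(\log N)^{\Theta(1/k)}$, valid for $k$ in some prescribed range; (ii) verify that Lin's reduction is a deterministic polynomial-time map from (ultimately) $3$-SAT to the gap \setcover instance, so composing a hypothetical randomized $N^{ck}$ algorithm for $(k,k')$-\setcover with this reduction yields a randomized subexponential algorithm for $3$-SAT, contradicting \emph{randomized} ETH (the randomization passes through because the reduction itself is deterministic and all intermediate steps preserve randomized time); (iii) match Lin's gap function to the clean form $\tfrac{1}{2}(\log N/\log\log N)^{1/k}$, absorbing constants and lower-order factors into the leading $\tfrac{1}{2}$; and (iv) check that the restriction $k\le \tfrac{1}{2}\log\log N/\log\log\log N$ is exactly what is needed to keep the stated gap factor both meaningful (i.e.\ strictly larger than $k$) and within the regime where Lin's recursive construction applies.

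The main obstacle is step (iii)--(iv), the parameter calibration. Lin's theorem is stated with a somewhat different gap function and a different allowed range of $k$, so the work is to pin down a specific choice of parameters inside Lin's reduction that yields the exact quantitative form we want. In particular, one has to verify that the rounding/losses in Lin's iterated construction do not inflate the gap beyond $\tfrac{1}{2}(\log N/\log\log N)^{1/k}$, which becomes delicate as $k$ approaches the upper bound $\tfrac{1}{2}\log\log N/\log\log\log N$ (there the term $(\log N)^{1/k}$ is barely larger than $\log\log N$, leaving little slack). A secondary but routine obstacle is confirming that Lin's result is indeed conditioned on (a variant of) ETH rather than a strictly stronger assumption such as Gap-ETH; if Lin's statement as written uses Gap-ETH, one would additionally cite the fact that Gap-ETH follows from ETH via PCP-style gap amplification in the relevant parameter regime (or restrict attention to the ETH-only consequences of Lin's theorem). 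These calculations are straightforward but tedious, which is why they are naturally relegated to \Cref{appendix:set_cover_hardness}.
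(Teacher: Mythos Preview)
Your plan is sound in spirit, but the paper's execution differs in a way worth noting. Rather than taking Lin's final $N^{o(k)}$-hardness theorem as a black box and then cracking it open to verify determinism and extract a concrete constant $c$, the paper starts one level lower: it quotes Lin's reduction lemma (his Lemma~3.6), which is an explicit deterministic map from an $n$-variable SAT instance $\varphi$ to a \setcover instance of size $N\le 2^{2|\varphi|/k}$ with the promised gap. From there the paper re-derives the ETH hardness by hand: pick $k$ as the (numerical) solution of $k=\tfrac12\cdot\frac{\log\log(2^{2n/k})}{\log\log\log(2^{2n/k})}$, pad the \setcover instance up to exactly $N=2^{2n/k}$, and observe that a hypothetical randomized $N^{ck}$-time algorithm yields a randomized $2^{O(c)\cdot n}$-time SAT algorithm. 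The benefit of this route is that the constant $c$ falls out explicitly from the reduction's size bound, and the upgrade to \emph{randomized} ETH is immediate since the reduction is visibly deterministic; in your approach these two points are the acknowledged obstacles that force you to open the black box anyway.

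Two small corrections to your write-up. First, the Gap-ETH worry is unnecessary: the version of Lin's result in play here is proved under (plain) ETH, and the paper simply instantiates $\delta=\tfrac12$ in Lin's lemma to get the clean $\tfrac12(\log N/\log\log N)^{1/k}$ gap. Second, the upper bound $k\le \tfrac12\log\log N/\log\log\log N$ is not coming from the regime where Lin's recursion ``applies'' per se; it is exactly the value at which the paper's choice of $k$ (as a function of $n$, hence of $N$) sits, and at that value one still has $k<\tfrac12(\log N/\log\log N)^{1/k}$ so the gap instance is nondegenerate.
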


We will also use results on the inapproximability of {\sl unparameterized} {\sc Set-Cover}:   

\begin{theorem}[\cite{DS14,Mos15}] 
\label{thm:nonparameterized-SC-1} 
Under randomized ETH, for every $0 < \beta < 1$, any  algorithm that approximates size-$N$ instances of {\sc Set-Cover} to within $(1-\beta)\ln N$ w.h.p.~requires $2^{N^{\Omega(\beta)}}$ time.
\end{theorem}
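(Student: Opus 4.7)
The plan is to establish this result by chaining three standard ingredients and tracking parameters carefully: (i) an almost-linear-size PCP reducing 3-SAT to Label Cover, (ii) parallel repetition for gap amplification, and (iii) Feige's reduction from Label Cover to Set-Cover via his classical partition systems. I would essentially follow the Moshkovitz--Raz framework, adapted to push the ETH hardness through to the specified $2^{N^{\Omega(\beta)}}$ bound.

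First, invoke randomized ETH: 3-SAT on $n$ variables cannot be solved in randomized time $2^{o(n)}$. Apply the Moshkovitz--Raz almost-linear-size PCP theorem to reduce 3-SAT, in polynomial time, to a Label Cover instance of size $M = n^{1+o(1)}$ with constant soundness over a constant-size alphabet. Under ETH this Label Cover problem remains hard in time $2^{\Omega(M^{1-o(1)})}$.

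Second, apply $k$-fold parallel repetition (Raz's theorem) to drive the soundness to $\delta = 2^{-\Omega(k)}$, while the instance size grows to $M^{O(k)}$ and the alphabet to $2^{O(k)}$. Under ETH the amplified Label Cover problem remains hard in time $2^{\Omega(M^{1-o(1)})}$. Third, apply Feige's reduction: the resulting Set-Cover instance has size $N$ polynomial in both the amplified Label Cover size and the partition-system size, and in the soundness case any set cover is forced to be at least a $(1-o(1))\ln(1/\delta)$ factor larger than in the completeness case.

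Finally, balance the parameters. Choose $k$ so that the induced inapproximability factor $\Theta(\ln(1/\delta)) = \Theta(k)$ matches $(1-\beta)\ln N$, and so that the instance size $N$ is roughly $M^{1/\beta}$; concretely $k \approx \tfrac{1-\beta}{\beta}\log M$ works. This gives $n = M^{1-o(1)} = N^{\beta - o(1)}$, and the ETH lower bound of $2^{\Omega(n)}$ transfers to the Set-Cover problem as $2^{N^{\Omega(\beta)}}$.

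The main obstacle is this final parameter balancing: getting the exponent $\beta$ sharp in the $N^{\Omega(\beta)}$ target requires both the tightest available size--soundness tradeoff in parallel repetition (so that $\delta \le N^{-\Theta(1-\beta)}$ at the correct instance size) and an almost-linear-size base PCP rather than a merely polynomial-size one; any sub-polynomial loss in either step propagates multiplicatively into the exponent of the final lower bound and would degrade it below $N^{\Omega(\beta)}$. This is precisely the issue that the near-linear-size PCP technology of \cite{DS14,Mos15} is designed to handle.
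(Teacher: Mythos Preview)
The paper does not give a proof of this theorem; it is quoted as a known result from \cite{DS14,Mos15} and used as a black box. So there is no ``paper's proof'' to compare against.

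That said, your sketch has a genuine internal inconsistency in the parameter balancing. You set $k \approx \tfrac{1-\beta}{\beta}\log M$ and then assert that the final instance size is $N \approx M^{1/\beta}$. But you also (correctly) note that $k$-fold Raz parallel repetition blows the Label Cover up to size $M^{\Theta(k)}$. With your choice of $k$ this is $M^{\Theta((\log M)/\beta)}$, i.e.\ quasipolynomial in $M$, and the Set-Cover instance produced by Feige's reduction is at least this large. Hence $n \approx M$ is only $2^{O(\sqrt{\beta\log N})}$ rather than $N^{\Theta(\beta)}$, and the ETH lower bound you get is roughly $2^{2^{O(\sqrt{\log N})}}$, far short of $2^{N^{\Omega(\beta)}}$. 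Your final paragraph correctly identifies exactly this kind of loss as the main danger, but the pipeline you actually propose incurs it.

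The fix is that you do not need $k = \Theta(\log M)$ rounds of repetition. In Feige's reduction the $\ln N$ in the inapproximability factor comes from the partition-system size $B$, not from $\ln(1/\delta)$: the gap is $(1-g(\delta))\ln B$ with $g(\delta)\to 0$. For fixed $\beta$ a \emph{constant} soundness $\delta = \delta(\beta)$ already makes $g(\delta) < \beta/2$, so $O_\beta(1)$ rounds of repetition (or none at all, if you use the Moshkovitz--Raz PCP which directly gives sub-constant error at near-linear size) suffice. The $M^{O(1/\beta)}$ blowup then comes entirely from choosing $B \approx M^{\Theta((1-\beta)/\beta)}$, after which $n = M^{1-o(1)} = N^{\Omega(\beta)}$ genuinely holds and ETH gives the stated $2^{N^{\Omega(\beta)}}$ bound.
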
 

By a standard search-to-decision reduction,~\Cref{thm:nonparameterized-SC-1} implies the following lower bound for $(k,k')$-\setcover where, unlike in the parameterized setting, $k$ is no longer guaranteed to be ``small": 

\begin{theorem} 
\label{thm:nonparameterized-SC} 
Under randomized ETH, for every $0 < \beta < 1$, there exists $k \le N$ such that any algorithm that solves  size-$N$ instances of $(k,k')$-\setcover where $k' = k(1-\beta)\ln N $ w.h.p.~requires $2^{N^{\Omega(\beta)}}$ time.
\end{theorem}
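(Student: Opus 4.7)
The plan is to establish the theorem via its contrapositive: if, for every $k \le N$, the problem $(k, k(1-\beta)\ln N)$-\setcover admitted a randomized algorithm running in time $2^{N^{o(\beta)}}$ with high success probability, I will show that the approximation task from \Cref{thm:nonparameterized-SC-1}---outputting an estimate $\tilde{t}$ of $\opt(\mathcal{S})$ with $\opt(\mathcal{S}) \le \tilde t \le (1-\beta)\ln N\cdot \opt(\mathcal{S})$---could also be solved in $2^{N^{o(\beta)}}$ time, contradicting that theorem. This is the ``standard search-to-decision reduction'' alluded to in the paragraph preceding the theorem.

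For the reduction itself, given a size-$N$ instance $\mathcal{S}$, I would run the assumed $(k, k(1-\beta)\ln N)$-\setcover algorithm $A_k$ on $\mathcal{S}$ for each $k \in \{1, 2, \ldots, N\}$, find the smallest $k^\ast$ for which $A_{k^\ast}$ answers \textsc{Yes}, and output $\tilde{t} = k^\ast \cdot (1-\beta)\ln N$. Correctness uses only the two promises of $(k, k')$-\setcover: since $A_k$ must answer \textsc{Yes} whenever $\opt(\mathcal{S}) \le k$, setting $k = \opt(\mathcal{S})$ forces $k^\ast \le \opt(\mathcal{S})$; and since $A_k$ must answer \textsc{No} whenever $\opt(\mathcal{S}) > k(1-\beta)\ln N$, every $k < \opt(\mathcal{S})/((1-\beta)\ln N)$ is rejected, forcing $k^\ast \ge \opt(\mathcal{S})/((1-\beta)\ln N)$. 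Combining, $\opt(\mathcal{S}) \le \tilde{t} \le (1-\beta)\ln N \cdot \opt(\mathcal{S})$, which is exactly the required approximation.

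The one subtlety to watch is that $A_k$ may behave arbitrarily in the ``ambiguous'' interval $\opt(\mathcal{S})/((1-\beta)\ln N) < k < \opt(\mathcal{S})$, where neither promise is satisfied; however the argument above only uses the two promised regions, so it is unaffected by whatever the $A_k$'s do in the ambiguous range. To secure overall high success probability I would amplify each $A_k$ by $O(\log N)$ independent repetitions so that it errs with probability at most $1/N^2$, and then union bound over the $N$ invocations. The total runtime would be $N \cdot \mathrm{polylog}(N) \cdot 2^{N^{o(\beta)}} = 2^{N^{o(\beta)}}$, contradicting \Cref{thm:nonparameterized-SC-1} under randomized ETH. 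Hence some $k \le N$ must witness the claimed hardness of $(k,k(1-\beta)\ln N)$-\setcover. I do not anticipate any serious obstacle: the entire argument is essentially bookkeeping of the two promise conditions, which is precisely why the paper describes it as a \emph{standard} reduction.
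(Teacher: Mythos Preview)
Your proposal is correct and is exactly the ``standard search-to-decision reduction'' the paper invokes (without giving details) in deriving \Cref{thm:nonparameterized-SC} from \Cref{thm:nonparameterized-SC-1}. Your handling of the ambiguous interval and of the success-probability amplification are the right touches; the paper omits all of this, so your write-up is in fact more thorough than the paper's own treatment.
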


Finally, we will also use existing lower bounds in the {\sl ungapped} setting:

\begin{theorem}[Ungapped hardness of {\sc Set-Cover} from {$W[1]\neq \textnormal{FPT}$} {\cite[Theorem 5.6]{CHKX06}}]
\label{thm:FPT-hardness-of-ungapped-set-cover}
Assuming $W[1]\neq \textnormal{FPT}$, for all constants $c\in (0,1)$ and for all $k\le n^c$, any $(k,k+1)$-{\sc Set-Cover} instance $\mathcal{S}=(S,U,E)$ cannot be solved in time $|S|^{o(k)}$.
\end{theorem}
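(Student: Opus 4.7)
The plan is to prove the theorem by a parameter-preserving reduction from $k$-Clique, together with the strong form of $W[1]$-hardness that rules out $n^{o(k)}$-time algorithms for $k$-Clique. The argument has two main stages.

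\emph{Stage 1.} First I would establish that, under $W[1]\neq \textnormal{FPT}$, there is no $f(k)\cdot n^{o(k)}$-time algorithm for $k$-Clique on $n$-vertex graphs. The standard definition of $W[1]$-hardness only rules out algorithms of the form $f(k)\cdot n^{O(1)}$, so this sharpening requires additional work. Following CHKX06, the argument proceeds by contradiction: if an $f(k)\cdot n^{g(k)}$ algorithm with $g(k)=o(k)$ existed, one could, via a chain of FPT reductions from a canonical $W[1]$-complete problem that preserve the parameter up to constant factors, together with a careful slicing of the parameter range (running the supposed Clique algorithm on sub-problems where $g(k)$ behaves like a constant), derive an $h(p)\cdot N^{O(1)}$ algorithm for an arbitrary $W[1]$-hard problem $\Pi$ with parameter $p$, contradicting the hypothesis.

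\emph{Stage 2.} Next I would give a polynomial-time, parameter-preserving reduction from $k$-Clique on $n$-vertex graphs to $(k',k'+1)$-\setcover with $k'=\Theta(k)$ and $|S|=\poly(n)$. A natural construction: let $S$ contain one set $S_v$ for each $v\in V(G)$ together with a small number of ``slot selector'' dummies, and build the universe $U$ to enforce two properties: (i) exactly one vertex is chosen for each of $k$ slots, and (ii) every pair of chosen vertices spans an edge in $G$. Property (ii) is encoded by adding, for each non-edge $\{u,w\}\notin E(G)$, a universe element covered only by the sets $S_v$ with $v\notin\{u,w\}$; property (i) is handled by slot-consistency gadgets. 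The resulting instance satisfies $\opt(\mathcal{S})=k'$ iff $G$ contains a $k$-clique, and $\opt(\mathcal{S})\ge k'+1$ otherwise.

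Composing the stages yields the theorem: an $|S|^{o(k')}$-time algorithm for $(k',k'+1)$-\setcover would, via Stage 2 with $|S|=\poly(n)$ and $k'=\Theta(k)$, translate into an $n^{o(k)}$-time algorithm for $k$-Clique, contradicting Stage 1. To extend to the full range $k\le n^c$ for arbitrary constant $c<1$, one checks that the Stage 2 construction has a sufficiently controlled polynomial blow-up so that $k\le n^c$ translates into $k'\le |S|^{c'}$ for some $c'<1$ depending on $c$. The main obstacle is Stage 1: the sharp $n^{\Omega(k)}$ lower bound for $k$-Clique under $W[1]\neq \textnormal{FPT}$ is considerably more delicate than plain $W[1]$-hardness, since it demands FPT reductions that preserve the parameter linearly rather than by arbitrary computable functions. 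Stage 2 is a classical gadget construction and is routine by comparison.
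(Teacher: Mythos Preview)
The paper does not prove this theorem at all: it is stated as a known result and attributed to \cite[Theorem~5.6]{CHKX06}, used purely as a black box in the proof of \Cref{thm:junta-junta}. So there is no ``paper's own proof'' to compare against; your proposal is an attempt to reconstruct the argument of CHKX06 rather than anything the present paper does.

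That said, your sketch is a reasonable outline of the CHKX06 strategy. You correctly identify that Stage~1 is where all the real difficulty lies: deriving an $n^{o(k)}$ lower bound from the bare assumption $W[1]\neq\textnormal{FPT}$ is genuinely subtle, since $W[1]$-hardness by itself only rules out $f(k)\cdot n^{O(1)}$ algorithms, and an $n^{\sqrt{k}}$-type algorithm is not FPT. Your description of the mechanism (``slicing of the parameter range'' to collapse an $n^{o(k)}$ algorithm into an FPT one) captures the spirit of the CHKX06 argument but is quite vague; the actual proof requires a careful self-reduction that exploits the specific structure of \textsc{Clique} (or an equivalent canonical problem) and a nontrivial diagonalization over the growth rate of $g(k)$. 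Stage~2 is, as you say, routine: there is a standard linear-parameter reduction from $k$-\textsc{Clique} (or $k$-\textsc{Multicolored-Clique}) to \textsc{Dominating Set}, which is equivalent to \textsc{Set-Cover}. One small point: \textsc{Set-Cover} is in fact $W[2]$-complete, so one could alternatively base Stage~1 on $W[2]\neq\textnormal{FPT}$ and a canonical $W[2]$ problem; but since $W[1]\neq\textnormal{FPT}$ is the weaker hypothesis, routing through \textsc{Clique} as you do is the right choice for the theorem as stated.
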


Furthermore, there are even stronger set cover lower bounds assuming SETH.

\begin{theorem}[Ungapped hardness of {\sc Set-Cover} from SETH {\cite[Theorem 2.3]{PW10}}]
\label{thm:SETH-hardness-of-ungapped-set-cover}
Assuming SETH, for all constants $c,\delta\in (0,1)$ and for all $k\le n^c$, any $(k,k+1)$-{\sc Set-Cover} instance $\mathcal{S}=(S,U,E)$ cannot be solved in time $O(|S|^{k-\delta})$. 
\end{theorem}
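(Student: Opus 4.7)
My plan is to reduce $k_0$-CNF-SAT to $(k,k+1)$-{\sc Set-Cover} via the variable-partitioning strategy of Patrascu-Williams, and then extract from a hypothetical $O(|S|^{k-\delta})$ algorithm for the latter a constant-savings algorithm for the former, contradicting SETH. Given a $k_0$-CNF formula $\phi$ on $N$ variables, I would partition the variables into $k$ equal-sized groups $V_1,\ldots,V_k$ and, for each group $i$ and each partial assignment $\alpha:V_i\to \zo$, introduce a set $S_{i,\alpha}$. This produces $|S|=k\cdot 2^{N/k}$ sets, and the universe will encode clause satisfaction together with additional consistency constraints.

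The universe consists of: (i) one element $e_C$ per clause $C$ of $\phi$, contained in $S_{i,\alpha}$ iff $\alpha$ satisfies some literal of $C$ via its $V_i$-variables; (ii) a forcing element $u_i$ per group, contained only in sets from group $i$, so that any cover includes at least one set per group and $\opt(\mathcal{S})\geq k$; and (iii) consistency gadget elements that force any cover of size $\leq k+1$ to use exactly one set per group. With (i) and (ii) alone, a satisfying assignment $\alpha=\alpha_1\cup\cdots\cup\alpha_k$ gives the cover $\{S_{i,\alpha_i}\}_{i=1}^k$ of size $k$, and any cover of size exactly $k$ yields a full assignment satisfying every clause; this gives the easier $(k,k)$-hardness. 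The consistency gadgets in (iii) close the ``$+1$'' gap by ruling out unsatisfiable instances with size-$(k+1)$ covers in which two sets $S_{i,\alpha},S_{i,\alpha'}$ from the same group jointly cover clauses that neither covers individually.

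The main obstacle --- the combinatorial heart of the proof --- is designing the gadgets in (iii): they must penalize ``double picks'' within a group while keeping the universe polynomial in $|S|$ and not creating spurious covers in satisfiable instances. A natural approach is to encode pairwise consistency constraints across all pairs of groups via additional set-cover elements, so that the chosen sets are forced to arise from a single global assignment. Once this gap amplification is in hand, the runtime bookkeeping is short: taking $k$ a suitable constant (with $k_0$ supplied by SETH's quantifier over clause-widths), the hypothetical $|S|^{k-\delta}=(k\cdot 2^{N/k})^{k-\delta}=2^{N(1-\delta/k)+O(k\log k)}$-time algorithm yields a $2^{N(1-\Omega(\delta))}$-time algorithm for $k_0$-SAT, contradicting SETH; the $k\leq n^c$ condition in the theorem is easily met for constant $k$.
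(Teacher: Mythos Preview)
The paper does not prove this statement: it is quoted from \cite{PW10} and used as a black box, so there is no ``paper's own proof'' to compare against. That said, let me comment on your reconstruction.

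Your items (i) and (ii) are exactly the P\u{a}tra\c{s}cu--Williams reduction, and your runtime arithmetic is correct: with $|S|=k\cdot 2^{N/k}$, a hypothetical $|S|^{k-\delta}$-time algorithm yields a $2^{N(1-\delta/k)+O(k\log k)}$-time CNF-SAT algorithm, and choosing $k$ a large enough constant contradicts SETH. This already proves the intended result.

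Your item (iii) is where you go astray. You are reading the label ``$(k,k+1)$'' literally through~\Cref{def:setcover} as the promise problem ``$\opt\le k$ versus $\opt\ge k+2$'', and then trying to engineer gadgets that force $\opt\ge k+2$ in the UNSAT case. But note that the theorem is titled \emph{ungapped} hardness and is attributed to \cite{PW10}, whose result is precisely about deciding $\opt\le k$ exactly (equivalently, $\opt\le k$ vs.\ $\opt\ge k+1$). The paper's downstream use of this theorem (proper learning of $k$-juntas via~\Cref{thm:learning-k-juntas-hard-as-set-cover}) also only requires the exact version. So the $(k,k+1)$ label is a harmless notational slip, not a genuine gap demand; your items (i)+(ii) already suffice, and (iii) is unnecessary.

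If you nonetheless insisted on the literal gap, your proposal would have a real hole: you call the gadget construction ``the combinatorial heart of the proof'' but do not actually give one. Your suggested direction (``pairwise consistency constraints across groups'') targets the wrong failure mode---the problem is two sets from the \emph{same} group, not inconsistency across groups---and Set Cover has no native mechanism to penalize picking multiple sets from a group. A cleaner route to a gap of $2$ would be to take the disjoint union of two copies of the instance (doubling both $k$ and $|S|$), but you would then need to redo the runtime bookkeeping with $2k$ groups of size $N/(2k)$ to keep the exponent savings; your sketch does not do this.
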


\section{Lower bounds for {\sc DT-Construction}}
\label{sec:DT-construction} 

In this section we prove~\Cref{lem:Set-Cover-to-DTs} and use it to derive~\Cref{thm:junta-DNF,thm:junta-junta}.  The high-level idea behind~\Cref{lem:Set-Cover-to-DTs} is to show how, given a set cover instance $\mathcal{S}$, we can construct a function $f$ and a distribution $\mathcal{D}$ such that the optimal set cover size for $\mathcal{S}$ is reflected in the the complexity of $f$ under $\mathcal{D}$.


\begin{definition}[$\Gamma_{\mathcal{S}}$ and $\mathcal{D}_{\mathcal{S}}$]
\label{def:D_S_and_Gamma_S}
Let $\mathcal{S}=(S,U,E)$ be a set cover instance with $|S|=n$.  We identify each universe element $u \in U$ with a vector $\zo^n$, the indicator vector of its neighborhood set $\mathcal{N}_{\mathcal{S}}(u)$ (i.e.~the indicator vector of the sets that contain $u$).  We define the partial function $\Gamma_{\mathcal{S}} : \zo^n \to \zo$ as follows: 
\[ \Gamma_{\mathcal{S}}(x) = 
\begin{cases} 
0 & \text{$x = 0^n$} \\
1 & \text{$x = u$, $u\in U$}. 
\end{cases} 
\] 
The distribution $\mathcal{D}_{\mathcal{S}}$ over the support of $\Gamma_{\mathcal{S}}$ is given by the pmf 
\[ \mathcal{D}_{\mathcal{S}}(x) = 
\begin{cases} 
\frac1{2} & \text{$x = 0^n$}  \\
\frac1{2|U|} & \text{$x= u, u \in U$}.
\end{cases} 
\] 
\end{definition} 
When $\mathcal{S}$ is clear from context we will drop the subscript and simply write $\Gamma$ and $\mathcal{D}$.  We observe that given any set cover $C \subseteq S$, the monotone disjunction of the variables in $C$ computes $\Gamma$ over~$\mathcal{D}$.  In particular, we have: 
\begin{fact} 
\label{fact:Gamma-is-small-disjunction} 
If $\opt(\mathcal{S}) \le k$ then $\Gamma$ is a monotone disjunction of $k$ variables under~$\mathcal{D}$. 
\end{fact}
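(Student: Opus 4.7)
The plan is to exhibit the monotone disjunction explicitly from a set cover. Let $C \subseteq S$ be an optimal set cover of size at most $k$, and consider the monotone disjunction $F(x) = \bigvee_{s \in C} x_s$ on $n = |S|$ variables. I would then show $F$ agrees with $\Gamma$ on every point in $\mathrm{supp}(\mathcal{D}_{\mathcal{S}})$, which consists of $0^n$ together with the indicator vectors of the neighborhoods $\mathcal{N}_{\mathcal{S}}(u)$ for $u \in U$.

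The verification is two cases. First, $F(0^n) = 0 = \Gamma(0^n)$ since all coordinates vanish. Second, for any $u \in U$, since $C$ is a set cover, there is some $s \in C$ with $s \in \mathcal{N}_{\mathcal{S}}(u)$; but the $s$-th coordinate of the indicator vector of $u$ is exactly $\mathbf{1}[s \in \mathcal{N}_{\mathcal{S}}(u)] = 1$, so $F(u) = 1 = \Gamma(u)$. Hence $\mathrm{dist}_{\mathcal{D}_{\mathcal{S}}}(F, \Gamma) = 0$, as required.

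There is essentially no obstacle here: the construction of $\Gamma$ and $\mathcal{D}$ was tailored so that the set cover condition translates directly into agreement of a monotone disjunction with $\Gamma$ on the support. The only small subtlety is to recall the identification of $u \in U$ with its $\mathcal{N}_{\mathcal{S}}(u)$-indicator vector in $\zo^n$, which is what makes the $s$-th literal $x_s$ fire precisely when $s$ covers $u$.
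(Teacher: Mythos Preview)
Your proof is correct and matches the paper's approach exactly: the paper simply observes (in one sentence preceding the fact) that given any set cover $C \subseteq S$, the monotone disjunction of the variables in $C$ computes $\Gamma$ over $\mathcal{D}$, and states the fact without further proof. Your argument is a faithful expansion of that observation.
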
 

We now define a ``parity-amplified" version of $\Gamma$.  While $\Gamma$ is a function over the domain $\zo^n$, this new function will be over the domain $(\zo^{\ell})^{n}$ for some parameter $\ell\in \N$. 

\newcommand{\BlockwisePar}{\textnormal{BlockwisePar}}

\newcommand{\ParComplete}{\textnormal{ParComplete}}

\paragraph{Notation.}  For a string $y \in (\zo^{\ell})^n$, we write $y_i\in \zo^\ell$ to denote the $i$th block of $y$, and $(y_i)_j$ to denote the $j$th entry of the $i$th block. We define the function $\BlockwisePar: (\zo^\ell)^n \to \zo^n$: 
\[ 
\BlockwisePar(y) \coloneqq (\oplus y_1, \ldots , \oplus y_n), 
\] 
where $\oplus y_i$ denotes the parity of the bits in $y_i$. 
\smallskip

\begin{definition}[$\Gamma_{\oplus \ell}$ and $\mathcal{D}_{\oplus \ell}$]
For $\Gamma$ and $\mathcal{D}$ as defined in~\Cref{def:D_S_and_Gamma_S} and an integer $\ell \in \N$, we define the partial function $\Gamma_{\oplus \ell} : (\zo^\ell)^{n}\to \zo$, 
\[ \Gamma_{\oplus \ell}(y) = \Gamma(\BlockwisePar(y)). \] 
The distribution $\mathcal{D}_{\oplus \ell}$ over the support of $\Gamma_{\oplus \ell}$ is defined as follows: to sample from $\mathcal{D}_{\oplus \ell}$, 
\begin{enumerate} 
\item First sample $\bx \sim \mathcal{D}$. 
\item For each $i\in [n]$, sample $\by_i\sim \zo^\ell$ u.a.r.~among all strings satisfying $\oplus \by_i = \bx_i$. Equivalently, sample $\by \sim (\zo^\ell)^n$ u.a.r.~among all strings satisfying $\BlockwisePar(\by) = \bx$. 
\end{enumerate} 
\end{definition} 

\begin{fact}[Blockwise parity of $\mathcal{D}_{\oplus \ell}$ induces $\mathcal{D}$]
\label{fact:D_ell-induces-D} 
For $\by\sim \mathcal{D}_{\oplus \ell}$, we have that $\BlockwisePar(\by)$ is distributed according to $\mathcal{D}$. 
\end{fact}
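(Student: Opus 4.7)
The plan is essentially to unfold the two-step description of $\mathcal{D}_{\oplus \ell}$, since the claim follows directly from that definition. I would fix any $x$ in the support of $\mathcal{D}$ and compute the pushforward probability $\Pr_{\by\sim\mathcal{D}_{\oplus\ell}}[\BlockwisePar(\by)=x]$ by conditioning on the outcome $\bx = x'$ of the first-stage sample.

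The key observation is that step~(2) of the sampling procedure yields a $\by$ whose blockwise parity equals the first-stage outcome $\bx$ with probability one: the conditional support of $\by$ given $\bx = x'$ is exactly the fiber $\BlockwisePar^{-1}(x')$. Consequently $\Pr[\BlockwisePar(\by)=x \mid \bx = x'] = \mathbf{1}[x'=x]$, and summing against the step-(1) weights $\Pr[\bx = x'] = \mathcal{D}(x')$ collapses the law of total probability to $\mathcal{D}(x)$, which is the claim. A one-line verification that $\BlockwisePar(\by)$ has the same support as $\bx$ (so no probability mass leaks outside the support of $\mathcal{D}$) comes for free from the same observation.

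Before executing this, it is worth briefly justifying the equivalence asserted within step~(2), namely that sampling each $\by_i$ independently and uniformly from $\{y \in \zo^\ell : \oplus y = \bx_i\}$ produces the same distribution on $\by$ as sampling $\by$ uniformly from $\BlockwisePar^{-1}(\bx)$ directly. This holds because $\BlockwisePar^{-1}(\bx) = \prod_{i=1}^{n} \{y \in \zo^\ell : \oplus y = \bx_i\}$ is a product set whose factors each have size $2^{\ell-1}$, so the uniform measure on the product equals the product of the uniform measures on the factors. Since the statement of the fact is ultimately a definitional unfolding, there is no genuine obstacle; this small consistency check between the two formulations is the only piece that requires a sentence of thought.
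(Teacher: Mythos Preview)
Your proposal is correct. The paper does not actually prove this fact; it is stated as an immediate consequence of the definition of $\mathcal{D}_{\oplus\ell}$ (step~1 samples $\bx\sim\mathcal{D}$, step~2 samples $\by$ supported on $\BlockwisePar^{-1}(\bx)$), and your unfolding via the law of total probability is exactly the one-line verification one would write if pressed.
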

\noindent
We have the following analogue of~\Cref{fact:Gamma-is-small-disjunction}: 
\begin{fact} 
\label{fact:Gamma_ell-is-small-junta}
If $\opt(\mathcal{S})\le k$ then $\Gamma_{\oplus \ell}$ is a $k\ell$-junta (a disjunction of $k$ many parities, each over $\ell$ variables) under $\mathcal{D}_{\oplus \ell}$. 
\end{fact}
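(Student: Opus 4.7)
The plan is to chain together the previous fact (\Cref{fact:Gamma-is-small-disjunction}) with the definition of $\Gamma_{\oplus \ell}$ as a composition, so that a $k$-term monotone disjunction representation of $\Gamma$ under $\mathcal{D}$ lifts mechanically to a $k$-term disjunction of $\ell$-bit parities representing $\Gamma_{\oplus \ell}$ under $\mathcal{D}_{\oplus \ell}$.

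Concretely, first I would invoke \Cref{fact:Gamma-is-small-disjunction}: since $\opt(\mathcal{S})\le k$, there is a set cover $C=\{i_1,\ldots,i_k\}\subseteq [n]$ such that for every $x$ in the support of $\mathcal{D}$ (that is, $x = 0^n$ or $x$ is the indicator vector of some $\mathcal{N}_{\mathcal{S}}(u)$),
\[
\Gamma(x) \;=\; x_{i_1} \vee \cdots \vee x_{i_k}.
\]
Next I would note, using \Cref{fact:D_ell-induces-D} (or directly from the sampling procedure defining $\mathcal{D}_{\oplus \ell}$), that if $y$ lies in the support of $\mathcal{D}_{\oplus \ell}$ then $\BlockwisePar(y)$ lies in the support of $\mathcal{D}$, so the equation above applies with $x=\BlockwisePar(y)$.

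Then the substitution is immediate: for every such $y$,
\[
\Gamma_{\oplus \ell}(y) \;=\; \Gamma(\BlockwisePar(y)) \;=\; \bigvee_{j=1}^{k} \bigl(\BlockwisePar(y)\bigr)_{i_j} \;=\; \bigvee_{j=1}^{k} \bigl(\oplus y_{i_j}\bigr),
\]
which exhibits $\Gamma_{\oplus \ell}$ under $\mathcal{D}_{\oplus \ell}$ as a disjunction of $k$ parities, each over the $\ell$ variables in block $y_{i_j}$. Since the blocks $y_{i_1},\ldots,y_{i_k}$ are disjoint, only $k\ell$ variables of $y$ are relevant, establishing the $k\ell$-junta claim.

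There is really no main obstacle here: the fact is essentially bookkeeping on the definitions, and the only thing to be careful about is that equality of $\Gamma_{\oplus \ell}$ with the disjunction of parities is required only on the support of $\mathcal{D}_{\oplus \ell}$ (since both $\Gamma$ and $\Gamma_{\oplus \ell}$ are partial functions), which is exactly what \Cref{fact:D_ell-induces-D} ensures. No estimation or approximation argument is needed.
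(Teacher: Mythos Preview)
Your proposal is correct and is exactly the argument the paper intends; in fact, the paper states this as a self-evident ``Fact'' without proof (it is presented as the immediate analogue of \Cref{fact:Gamma-is-small-disjunction}), and your write-up simply spells out the obvious chain of substitutions that justifies it.
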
 

\paragraph{An equivalent way of sampling from $\mathcal{D}_{\oplus \ell}$.}  For our proof of~\Cref{lem:Set-Cover-to-DTs}, it will be useful for us consider a different, but equivalent, way of sampling from $\mathcal{D}_{\oplus \ell}$.   For $z\in (\zo^{\ell-1})^n$, $x\in \zo^n$, and $j\in [\ell]$, we write $\ParComplete_j(z,x)$ to denote the string $y \in (\zo^{\ell})^n$ where for each block $i\in [n]$, 
\begin{itemize} 
\item[$\circ$] All except the $j$th coordinate of $y_i \in \zo^\ell$ are filled in according to $z_i \in \zo^{\ell-1}$. 
\[ ((y_i)_1,\ldots,(y_i)_{j-1},(y_i)_{j+1},\ldots,(y_i)_{\ell}) = ((z_i)_1,\ldots,(z_i)_{\ell-1}).\] 
\item[$\circ$] The $j$th coordinate of $y_i$ is filled in with the unique bit so that $\oplus y_i = x_i$.  
\end{itemize} 

\paragraph{Example.}Consider $n=4$ and $\ell=3$ and $j=2$. Then, we can view $z=(z_1,\ldots,z_4)\in (\zo^2)^4$ as a $4\times 2$ matrix where the $i$th row is $z_i$. In this case, we may have for example: 
$$
z=
\begin{bmatrix}
\teal{1} & \teal{0}\\
\teal{0} & \teal{0} \\
\teal{1} & \teal{1} \\
\teal{1}  & \teal{0} 
\end{bmatrix}\quad
x=
\begin{bmatrix}
1\\
1\\
0\\
1
\end{bmatrix}
\quad
\longrightarrow
\quad
\ParComplete_j(z,x)=
\begin{bmatrix}
\teal{1} & \violet{\bf 0} & \teal{0}\\
\teal{0} & \violet{\bf 1} & \teal{0}\\
\teal{1} & \violet{\bf 0} & \teal{1}\\
\teal{1} & \violet{\bf 0} & \teal{0}
\end{bmatrix}.
$$
Note that the first and third columns of $\ParComplete_j(z,x)$, colored teal, are exactly the first and second columns of $z$ respectively, and that the second column of $\ParComplete_j(z,x)$, colored purple, is filled in so that parity of each row of matches the corresponding row of $x$. 
\smallskip 

\begin{definition}[The distribution $\mathcal{D}_{\oplus\ell}^j$]
\label{def:alternative_Dlj}
For $j\in[\ell]$, the distribution $\mathcal{D}_{\oplus\ell}^j$ is obtained via the following sampling procedure:  sample $\bx\sim \mathcal{D}$, $\bz \sim (\zo^{\ell-1})^n$ u.a.r., and output $\ParComplete_j(\bz,\bx)$. 
\end{definition}

The following proposition on the equivalence between $\mathcal{D}_{\oplus \ell}$ and $\mathcal{D}_{\oplus \ell}^j$ can be easily verified. We defer the calculation to \Cref{appendix:Dj-equals-D}.

\begin{proposition}[$\mathcal{D}_{\oplus\ell}^j$ is equivalent to $\mathcal{D}_{\oplus\ell}$]
\label{prop:Dj_equals_D}
For all $j\in[\ell]$ and $y\in (\zo^\ell)^n$, 
$$
\pryd{\by=y}=\prydj{\by=y}.
$$
\end{proposition}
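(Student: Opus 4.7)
The plan is a direct calculation: for each fixed $y \in (\zo^\ell)^n$ I will compute both sides as an explicit product and observe they are equal. The key insight is that in both sampling schemes, the blockwise parity of the output is forced to equal the sampled $\bx \sim \mathcal{D}$, and conditional on $\bx$ the remaining randomness is uniform over an affine subspace of size $2^{n(\ell-1)}$.

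First I would unwind the definition of $\mathcal{D}_{\oplus \ell}$: for a sample to equal $y$, the first stage must produce $\bx = \BlockwisePar(y)$, and then in the second stage, within each block $i$, one must sample $\by_i = y_i$ from the uniform distribution over the $2^{\ell-1}$ strings in $\zo^\ell$ of parity $\bx_i = \oplus y_i$. Multiplying across the $n$ independent blocks gives
\[
\Pr_{\by \sim \mathcal{D}_{\oplus \ell}}[\by = y] \;=\; \mathcal{D}(\BlockwisePar(y)) \cdot \frac{1}{2^{n(\ell-1)}}.
\]

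Next I would unwind $\mathcal{D}_{\oplus \ell}^j$. The map $(\bz, \bx) \mapsto \ParComplete_j(\bz, \bx)$ is a bijection from $(\zo^{\ell-1})^n \times \zo^n$ onto $(\zo^\ell)^n$: given $y$, the preimage is uniquely determined by taking $\bx = \BlockwisePar(y)$ (forced because $\ParComplete_j$ sets the $j$th coordinate of each block so that the block parity equals $\bx_i$) and letting $\bz$ be $y$ with the $j$th column of each block deleted. Since $\bz$ is uniform over $(\zo^{\ell-1})^n$ independently of $\bx \sim \mathcal{D}$, this single preimage contributes
\[
\Pr_{\by \sim \mathcal{D}_{\oplus \ell}^j}[\by = y] \;=\; \mathcal{D}(\BlockwisePar(y)) \cdot \frac{1}{2^{n(\ell-1)}},
\]
which matches the previous display.

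I do not expect any real obstacle here; the only thing to be careful about is verifying that $\ParComplete_j$ is well-defined and bijective onto $(\zo^\ell)^n$ in the sense above, which is immediate from the definition since every $y_i \in \zo^\ell$ is uniquely recovered from its $\ell-1$ non-$j$ coordinates together with its parity.
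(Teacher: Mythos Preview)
Your proposal is correct and follows essentially the same approach as the paper's proof: both compute each side directly and show they equal $\mathcal{D}(\BlockwisePar(y)) \cdot 2^{-n(\ell-1)}$. The paper phrases the $\mathcal{D}_{\oplus\ell}^j$ calculation via the law of total probability rather than explicitly invoking the bijectivity of $\ParComplete_j$, but the content is the same.
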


\subsection*{Constructiveness of $\Gamma_{\oplus\ell}$ and $\mathcal{D}_{\oplus\ell}$}
We can efficiently compute both a circuit representation of $\Gamma_{\oplus\ell}$ and a generator for the distribution $\mathcal{D}_{\oplus\ell}$ from a given set cover instance.

\begin{lemma}[Constructiveness of $\Gamma_{\oplus\ell}$ and $\mathcal{D}_{\oplus\ell}$]
\label{lem:circuit-for-gamma-generator-for-D}
Let $\mathcal{S}=(S,U,E)$ be an $N$-vertex set cover instance with $|S|=n$ and let $\ell\le N$ be a parameter. Then there is an algorithm that runs in $\poly(N)$ time and outputs a circuit representation of $\Gamma_{\oplus\ell}$ over $\mathcal{D}_{\oplus\ell}$ and a generator for the distribution $\mathcal{D}_{\oplus\ell}$. 
\end{lemma}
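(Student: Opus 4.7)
The plan is to construct the circuit and the generator directly from the definitions; no deep idea is required, the whole statement is a routine verification whose purpose is just to confirm that $\Gamma_{\oplus \ell}$ and $\mathcal{D}_{\oplus \ell}$ can serve as the output of a polynomial-time reduction. The only minor technical point is the sampling of a uniform element of $U$ with a fixed number of unbiased coin flips, which we handle by a standard padding trick.

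First, I would build the circuit for $\Gamma_{\oplus \ell}$. By \Cref{def:D_S_and_Gamma_S}, the partial function $\Gamma$ maps $0^n$ to $0$ and every vertex $u \in U \subseteq \zo^n \setminus \{0^n\}$ to $1$, so on its domain $\Gamma$ agrees with the $n$-bit $\mathrm{OR}$ function. Consequently, on the entire support of $\mathcal{D}_{\oplus \ell}$ we have
\[
\Gamma_{\oplus \ell}(y) \;=\; \Gamma(\BlockwisePar(y)) \;=\; \bigvee_{i=1}^{n}\, \bigoplus_{j=1}^{\ell} (y_i)_j,
\]
which is computed by a depth-$2$ circuit with $n$ parity gates (one per block, each of fan-in $\ell$) feeding into a single $\mathrm{OR}$ gate of fan-in $n$. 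Since the gate set $\{\land,\lor,\lnot,\oplus\}$ is allowed, this circuit is well-defined, has size $O(n)$, and can be written down from $\mathcal{S}$ in $\poly(N)$ time.

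Next, I would build the generator for $\mathcal{D}_{\oplus \ell}$. By \Cref{prop:Dj_equals_D} it suffices to sample from $\mathcal{D}_{\oplus \ell}^{1}$, whose sampling recipe (\Cref{def:alternative_Dlj}) is: sample $\bx \sim \mathcal{D}$, sample $\bz \sim (\zo^{\ell-1})^n$ uniformly, and output $\ParComplete_1(\bz, \bx)$. The generator thus does three things. (i) It reads $n(\ell-1)$ random bits and uses them as $\bz$. (ii) It reads one more random bit $b$: if $b=0$ it sets $\bx = 0^n$, and if $b=1$ it uses $\lceil \log_2 |U|\rceil$ additional random bits to select a uniform $\bu \in U$ and sets $\bx$ to the indicator of $\mathcal{N}_{\mathcal{S}}(\bu)$. (iii) For each $i \in [n]$, it writes $\bz_i$ into positions $2,\ldots,\ell$ of block $i$ of the output, and writes $\bx_i \oplus \bigoplus_{k=1}^{\ell-1}(\bz_i)_k$ into position $1$ so that the block parity equals $\bx_i$. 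All three steps are $\poly(N)$ time, and correctness is immediate from the definitions.

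Finally, to fit the formal definition of a generator (a deterministic map $G:\zo^m\to\zo^m$ that, on a uniform input, produces the desired distribution exactly) I would invoke the standard trick of padding $U$ with duplicate copies of one of its vertices until $|U|$ becomes a power of $2$; this preserves $\opt(\mathcal{S})$, $\Gamma$, and the distribution $\mathcal{D}$, and makes step (ii) an exact dyadic sampling using exactly $\lceil \log_2|U|\rceil$ bits. Together with the $n(\ell-1)+1$ bits used in steps (i) and (ii) and any needed padding with unused random bits, the generator has the required input/output format. The only ``obstacle'' is this dyadic-probability formality, and it is handled by the padding; every other step is a direct transcription of the definitions.
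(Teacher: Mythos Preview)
Your argument matches the paper's: the same depth-$2$ $\lor$-of-$\oplus$ circuit, and the same generator built by sampling from $\mathcal{D}_{\oplus\ell}^{1}$ and invoking \Cref{prop:Dj_equals_D}. One small correction: padding $U$ with duplicate vertices does \emph{not} literally preserve $\mathcal{D}$ (the duplicated string acquires extra mass under the new $\mathcal{D}$), but this is harmless for the downstream reductions and the paper itself leaves the non-dyadic-$|U|$ case implicit.
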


\begin{figure}[ht]
\begin{center}
\forestset{
 default preamble={
 for tree={
  circle,
  minimum size=0.7cm,
  inner sep=0.1pt,
  draw,
  align=center,
  anchor=north,
  fill=white,
  l sep=1cm,
  s sep=0.5cm,
 }
 }
}
\begin{forest}
 [$\lor$,name=lormain, s sep=0.5cm, calign=center,
    [{},name=oplus1, s sep=0.1cm, inner sep=-0.0cm,circle split,
        [$(y_1)_1$,name=y11,tier=inputl,draw=white]
        [${}$,name=ldots21,draw=white,tier=inputl]
        [$\ldots$,name=ldots22,draw=white,tier=inputl]
        [${}$,name=ldots23,draw=white,tier=inputl]
        [$(y_1)_\ell$,name=y1l,tier=inputl,draw=white]
    ]
    [${}$,name=ldots1,draw=white,before computing xy={s=-35}]
    [$\ldots$,name=ldots2,draw=white,before computing xy={s=0}]
    [${}$,name=ldots3,draw=white,before computing xy={s=35}]
    [{},name=oplus2,s sep=0.1cm,inner sep=-0.0cm,circle split,
        [$(y_n)_1$,name=yn1,tier=inputl,draw=white]
        [${}$,name=ldots31,draw=white,tier=inputl]
        [$\ldots$,name=ldots32,draw=white,tier=inputl]
        [${}$,name=ldots33,draw=white,tier=inputl]
        [$(y_n)_\ell$,name=ynl,tier=inputl,draw=white]
    ]
 ]
 \draw[black] (-1.75,-0.75) .. controls +(south east:1cm) and +(south west:1cm) .. (1.75,-0.75);
 \draw[black] (-3.5,-2.75) .. controls +(south east:0.5cm) and +(south west:0.5cm) .. (-1.5,-2.75);
 \node[xshift=-0.65cm,yshift=0.25cm] at (-2,-0.75) {fan-in $n$};
 \node[xshift=-0.75cm,yshift=0.25cm] at (-3.6,-2.75) {fan-in $\ell$};
 \node[draw,black,circle split,rotate=90,inner sep=-0.0cm,minimum size=0.7cm] at (oplus2) {};
 \node[draw,black,circle split,rotate=90,inner sep=-0.0cm,minimum size=0.7cm] at (oplus1) {};
 \node[draw,black,circle,inner sep=-0.0cm,minimum size=0.7cm] at (lormain) {};
\end{forest}
\caption{A depth-$2$ circuit for $\Gamma_{\oplus\ell}$ consisting of one top gate that is an OR connected to $n$ PARITY gates, each of which is connected to a disjoint block of $\ell$ input variables.}
\label{fig:circuit-for-GammaL}
\end{center}
\end{figure}
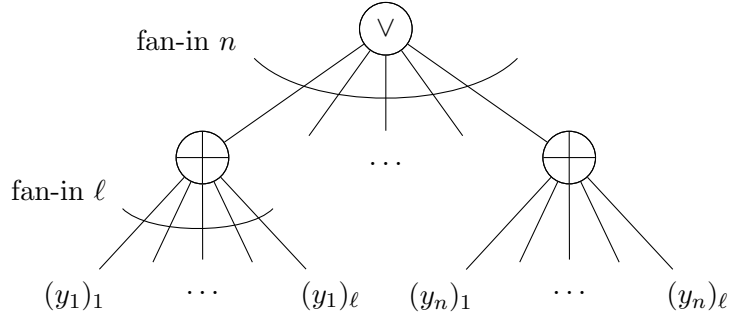

\begin{proof}
We separate the proof into two parts. First, we give a circuit representation of $\Gamma_{\oplus\ell}$, then we give a generator for $\mathcal{D}_{\oplus\ell}$.

\paragraph{A circuit for $\Gamma_{\oplus\ell}$.}{
Recall that a circuit $\mathcal{C}:(\zo^\ell)^{n}\to\zo$ represents $\Gamma_{\oplus\ell}:(\zo^\ell)^{n}\to\zo$ over $\mathcal{D}_{\oplus\ell}$ if $\dist_{\mathcal{D}_{\oplus\ell}}(\mathcal{C},\Gamma_{\oplus\ell})=0$. The function $\Gamma:\zo^n\to\zo$ is computed over $\mathcal{D}$ by the disjunction of all $n$ variables. That is, $\dist_{\mathcal{D}}(\Gamma,x_1\lor\cdots\lor x_n)=0$.\footnote{This observation can equivalently be viewed as an application of \Cref{fact:Gamma-is-small-disjunction} plus the fact that $\opt(\mathcal{S})\le |S|=n$ holds for all $\mathcal{S}$.} Therefore, for $y=(y_1,\ldots,y_n)\in\supp(\mathcal{D}_{\oplus\ell})$,
\begin{align*}
    \Gamma_{\oplus\ell}(y)&=\Gamma(\oplus y_1,\ldots,\oplus y_n)\tag{Definition of $\Gamma_{\oplus\ell}$}\\
    &=(\oplus y_1)\lor \ldots\lor (\oplus y_n)\tag{$\BlockwisePar(y)\in\supp(\mathcal{D})$}
\end{align*}
It follows that the circuit given by
$$
\mathcal{C}(y)\coloneqq \bigvee_{i\in[n]}\bigoplus_{j\in[\ell]} (y_i)_j
$$
computes $\Gamma_{\oplus\ell}$ over $\mathcal{D}_{\oplus\ell}$. See \Cref{fig:circuit-for-GammaL} for an illustration of $\mathcal{C}$. Since this circuit has size $n\cdot\ell$ and depth $3$, the first part of the lemma statement follows.
}

\paragraph{A generator for $\mathcal{D}_{\oplus\ell}$.}{
Recall that a generator for a distribution takes uniform random bits as input and outputs bits distributed according to the desired distribution. First, we observe that there is an efficient generator for $\mathcal{D}$ using $1+\log |U|$ uniform random bits. Specifically, use $1$ uniform random bit to decide between the two cases:
\begin{enumerate}[label=(\arabic*)]
    \item output $0^n$
    \item output $u\in U$ uniformly at random.
\end{enumerate}
The second case can be accomplished with $\log |U|$ uniform random bits. Then the following procedure generates the distribution $\mathcal{D}_{\oplus\ell}^1$: 
\begin{enumerate}[label=(\arabic*)]
    \item use $n(\ell-1)$ uniform random bits to select $z\in (\zo^{\ell-1})^n$
    \item use $1+\log |U|$ bits to sample $\bx\sim\mathcal{D}$
    \item output $\ParComplete_1(\bz,\bx)$.
\end{enumerate}
By \Cref{prop:Dj_equals_D}, this procedure equivalently generates the distribution $\mathcal{D}_{\oplus\ell}$. The procedure uses $n(\ell-1)+1+\log|U|$ bits. We can assume without loss of generality that $1+\log|U|\le |S|=n$\footnote{If $|S|<1+\log|U|$, we just replicate sets until $|U|\le |S|$. This change at most doubles $N$ and does not affect $\opt(\mathcal{S})$.} so that $n(\ell-1)+1+\log|U|\le n\ell$. It follows that this procedure efficiently generates $\mathcal{D}_{\oplus\ell}$ from $n\ell$ uniform random bits.  
}
\end{proof}

\subsection{Warmup for~\Cref{lem:Set-Cover-to-DTs}: Lower bounds against decision tree hypotheses}

We will prove~\Cref{lem:Set-Cover-to-DTs} with the function being $\Gamma_{\oplus \ell}$ and the distribution being $\mathcal{D}_{\oplus \ell}$.  The first bullet of the lemma statement is given by~\Cref{fact:Gamma_ell-is-small-junta}, and so the bulk of the remaining work goes into establishing the second bullet of the lemma statement.  

 We begin with a warmup, showing the weaker statement that $\Gamma_{\oplus \ell}$ is far from any small {\sl decision tree} under $\mathcal{D}_{\oplus \ell}$.  This proof will illustrate many of the key ideas in the actual proof for DNFs, which we give in the next subsection.  Furthermore, this lower bound is already sufficient to establish~\Cref{thm:DT-DT}, and will be the starting point of our lower bounds for {\sc DT-Estimation} that we prove in the next section.

\begin{lemma}
\label{lem:set_cover-reduction} 
Let $\mathcal{S}=(S,U,E)$ be an $N$-vertex set cover instance and let $\ell\ge 2$.  If $T : (\zo^\ell)^n \to \zo$ is a decision tree of size $|T|< 2^{\opt(\mathcal{S}) {\ell}/8}$, then  $\dist_{\mathcal{D}_{\oplus\ell}}(T,\Gamma_{\oplus\ell})\ge 1/(4N)$.
\end{lemma}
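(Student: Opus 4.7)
The plan is to argue by contradiction. Write $k := \opt(\mathcal{S})$ and suppose $|T|<2^{k\ell/8}$ while $\dist_{\mathcal{D}_{\oplus\ell}}(T,\Gamma_{\oplus\ell})<1/(4N)$. To each leaf $L$ of $T$ with root-to-leaf path $\pi$ I would attach three pieces of data: the set $V(L)\subseteq[n]$ of ``completed'' blocks (those in which $\pi$ queries all $\ell$ coordinates), the string $\epsilon(L)\in\zo^{V(L)}$ whose $i$th entry is the XOR of the answers given by $\pi$ to the queries in block $i$, and the quantity $r(L):=\depth(L)-|V(L)|$.

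The first step is to establish the clean reaching-probability formula
\[
\Pr_{\by\sim\mathcal{D}_{\oplus\ell}}\bigl[\by\text{ reaches }L\,\big|\,\BlockwisePar(\by)=v\bigr] \;=\; 2^{-r(L)}\cdot \mathbf{1}\bigl[v|_{V(L)}=\epsilon(L)\bigr],
\]
valid for every $v\in\zo^n$. This falls out block-by-block: conditional on $\BlockwisePar(\by)=v$ the blocks $\by_i$ are independent and uniform over the $2^{\ell-1}$ strings of parity $v_i$, so a completed block contributes $2^{-(\ell-1)}\cdot\mathbf{1}[\epsilon_i(L)=v_i]$ while an uncompleted block with $q_i<\ell$ queried coordinates contributes $2^{-q_i}$, and the exponents sum to $r(L)$. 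Summing the identity over all leaves yields the normalization $\sum_{L:\,v|_{V(L)}=\epsilon(L)} 2^{-r(L)}=1$ for each fixed $v$, which is the combinatorial engine of the remaining steps.

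Next I would decompose the error. Setting $A:=\sum_{L:\,\mathrm{lab}(L)=1,\,\epsilon(L)=0}2^{-r(L)}$ and $B_u:=\sum_{L:\,\mathrm{lab}(L)=0,\,u|_{V(L)}=\epsilon(L)}2^{-r(L)}$, the error is exactly $\tfrac{1}{2}A+\tfrac{1}{2|U|}\sum_{u\in U} B_u$, so the hypothesis forces $A<1/(2N)$ and $\sum_u B_u<|U|/(2N)\le 1/2$. The crucial observation is that any $0$-labeled $\epsilon=0$ leaf $L$ whose $V(L)$ is NOT a set cover for $\mathcal{S}$ has some $u_L\in U$ with $V(L)\cap\mathrm{supp}(u_L)=\emptyset$, which forces $u_L|_{V(L)}=0=\epsilon(L)$ and therefore makes $L$ contribute to $B_{u_L}$. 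Combined with the normalization identity at $v=0^n$, this lets me isolate
\[
\mathcal{L}^{\star}:=\bigl\{L:\mathrm{lab}(L)=0,\;\epsilon(L)=0,\;V(L)\text{ is a set cover for }\mathcal{S}\bigr\}
\]
and conclude $\sum_{L\in\mathcal{L}^\star}2^{-r(L)}>1-\tfrac{1}{2N}-\tfrac{1}{2}>\tfrac{1}{4}$.

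The proof then concludes with a counting step. For each $L\in\mathcal{L}^\star$, $|V(L)|\ge k$ and $\depth(L)\ge\ell|V(L)|$, so $r(L)\ge(\ell-1)|V(L)|\ge(\ell-1)k$ and each such leaf contributes at most $2^{-(\ell-1)k}$ to the weighted sum; hence $|T|\ge|\mathcal{L}^\star|\ge\tfrac{1}{4}\cdot 2^{(\ell-1)k}$. For $\ell\ge 2$ this exceeds $2^{k\ell/8}$ outside a small handful of $(k,\ell)$ values, all of which satisfy $2^{k\ell/8}<2$; in those exceptional cases $|T|<2^{k\ell/8}$ forces $T$ to be constant, and constant trees have error $\tfrac{1}{2}\gg 1/(4N)$, again contradicting the hypothesis. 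I expect the main technical hurdle to be pinning down the reaching-probability formula cleanly, so that both the indicator $\mathbf{1}[v|_{V(L)}=\epsilon(L)]$ and the weight $2^{-r(L)}$ drop out simultaneously; once this formula is in hand, the set-cover extraction and the counting bound follow directly.
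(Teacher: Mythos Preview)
Your proof is correct and takes a genuinely different route from the paper's. The paper proceeds in three modular steps: (i) size $s$ implies average depth $\le 2\log s$ under $\mathcal{D}_{\oplus\ell}$, via the crude bound $\Pr[\by\text{ reaches }L]\le 2^{-|L|/2}$; (ii) a random-restriction argument (fix all coordinates except one per block) turns a good depth-$d$ approximator for $\Gamma_{\oplus\ell}$ into a good depth-$2d/\ell$ approximator for $\Gamma$; (iii) any tree for $\Gamma$ with average depth $<k/2$ errs on some support point because the leftmost root-to-leaf path must be a set cover. Chaining these gives the $2^{k\ell/8}$ threshold.

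You instead work directly with the tree for $\Gamma_{\oplus\ell}$: you track, for each leaf, which blocks are fully read and extract the set cover from those blocks rather than from a restricted tree. Your exact reaching-probability formula replaces the paper's lossy $2^{-|L|/2}$ bound (which deliberately conflates the completed-block and incomplete-block cases), and the normalization identity at $v=0^n$ plays the role of both the averaging argument and the set-cover extraction simultaneously. A pleasant byproduct is that your leaf count $|T|>\tfrac14\,2^{(\ell-1)k}$ is quantitatively much stronger than the paper's $2^{k\ell/8}$ for large $\ell$. On the other hand, the paper's modular decomposition is what lets them port the argument to DNF hypotheses (replacing ``depth'' by ``width'') and to aborting trees with essentially no new ideas; your completed-block bookkeeping is tied more tightly to the decision-tree structure and would need reworking for those extensions.
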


\paragraph{High level idea.}{
\label{para:high-level-idea-of-DT-gamma-lb}
There are three main steps:  
\begin{enumerate}[itemsep=0.1em]
    \item No decision tree with small average depth can approximate $\Gamma$ under $\mathcal{D}$ (\Cref{claim:depth_error-new}).
    \item Any decision tree with small average depth that approximates $\Gamma_{\oplus \ell}$ under $\mathcal{D}_{\oplus \ell}$ can be used to construct decision tree of much smaller average depth that approximates $\Gamma$ under $\mathcal{D}$ (\Cref{claim:extracting_small_dt-new}). This is the key claim. 
    \item Any small size decision tree must have small average depth with respect to $\mathcal{D}_{\oplus \ell}$ (\Cref{claim:avg_depth-new}). 
\end{enumerate}
Together, these three claims imply that no small size decision tree can approximate $\Gamma_{\oplus\ell}$ under $\mathcal{D}_{\oplus \ell}$, thereby yielding~\Cref{lem:set_cover-reduction}. 
}

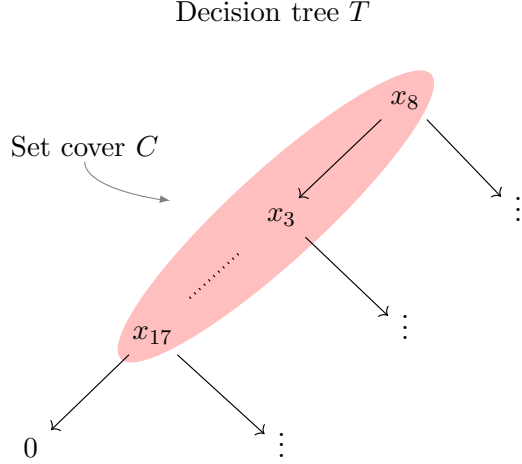
\begin{figure}[h]
\centering
\[\begin{tikzcd}[execute at end picture={
    \begin{scope}[on background layer]
        \fill[red!50,opacity=.5,rotate=42.5] (0.55,0.54) ellipse (2.8 and 0.6);
        \filldraw[black] (-2.5,1.7) node[] {Set cover $C$};
        \draw[-{latex reversed[length=3mm, width=2mm]},gray,tips=proper] (-2.5,1.5) edge [out=270, in=-8] (-1.5,1);
        \node[yshift=3.5cm] at (0,0) {Decision tree $T$};
    \end{scope}}]
	&&& {x_{8}} \\
	&& {x_{3}} && \vdots \\
	& {x_{17}} && \vdots \\
	0 && \vdots
	\arrow[from=1-4, to=2-3]
	\arrow[dotted, line width=1pt, no head, from=2-3, to=3-2, shorten=2ex]
	\arrow[from=3-2, to=4-1]
	\arrow[from=1-4, to=2-5]
	\arrow[from=2-3, to=3-4]
	\arrow[from=3-2, to=4-3]
\end{tikzcd}\]
\caption{Any decision tree for $\Gamma$ implicitly defines a set cover of $\mathcal{S}$ consisting of the variables highlighted in red.}
\label{fig:set-cover-from-DT-for-Gamma}
\end{figure}

\begin{claim}[Good approximators for $\Gamma$ require large depth]
\label{claim:depth_error-new}
Let $T : \zo^n \to \zo$ be a decision tree and $\mathcal{S}=(S,U,E)$ be an $N$-vertex set cover instance with $|S|=n$. If $\underset{\bx\sim \mathcal{D}}{\E}[\depth_T(\bx)]<\opt(\mathcal{S})/2$ then $\dist_{\mathcal{D}}(T,\Gamma)\ge 1/(2N)$.
\end{claim}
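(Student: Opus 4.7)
\textbf{Proof plan for \Cref{claim:depth_error-new}.} The plan is to extract a small set cover directly from the root-to-leaf path in $T$ that is consistent with the input $0^n$, in the spirit of the figure. First I would use the assumption to control the depth at $0^n$: since $\mathcal{D}$ places mass $\tfrac{1}{2}$ on $0^n$, the contribution of this single point alone forces
\[
\tfrac{1}{2}\,\depth_T(0^n)\;\le\;\E_{\bx\sim \mathcal{D}}[\depth_T(\bx)]\;<\;\opt(\mathcal{S})/2,
\]
so $\depth_T(0^n)<\opt(\mathcal{S})$. Let $L$ be the leaf reached by $0^n$, and let $C\subseteq S$ be the set of variables queried on the root-to-leaf path ending at $L$; each of these variables is read as $0$ along this path, and $|C|=\depth_T(0^n)<\opt(\mathcal{S})$.

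Next I would split on the label of $L$. If $L$ is labeled $1$, then $T(0^n)=1\ne 0=\Gamma(0^n)$, contributing $\tfrac{1}{2}\ge\tfrac{1}{2N}$ to $\dist_{\mathcal{D}}(T,\Gamma)$ and we are done. Otherwise $L$ is labeled $0$. Since $|C|<\opt(\mathcal{S})$, the collection $C$ cannot be a set cover of $\mathcal{S}$, so there exists some $u\in U$ whose neighborhood $\mathcal{N}_{\mathcal{S}}(u)$ is disjoint from $C$. Viewing $u$ as its indicator vector in $\zo^n$, this means $u_s=0$ for every $s\in C$, and therefore the path consistent with $u$ in $T$ agrees with the path consistent with $0^n$ on all of the queries in $C$, so $u$ also reaches $L$. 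Consequently $T(u)=0\ne 1=\Gamma(u)$, and this single disagreement contributes at least $\mathcal{D}(u)=\tfrac{1}{2|U|}\ge\tfrac{1}{2N}$ to $\dist_{\mathcal{D}}(T,\Gamma)$ (using $|U|\le N$).

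I expect this claim to be the easy base case of the three-step plan outlined in the ``high level idea'' paragraph: the only mild subtlety is recognizing that one should use the pointwise depth of $0^n$ (not the average over all inputs) to produce the would-be set cover, which is exactly what the large mass of $0^n$ under $\mathcal{D}$ allows. The real obstacle will come in the subsequent \Cref{claim:extracting_small_dt-new}, where one has to convert a low-average-depth tree for $\Gamma_{\oplus\ell}$ under $\mathcal{D}_{\oplus\ell}$ into one for $\Gamma$ under $\mathcal{D}$, presumably via the $\mathcal{D}_{\oplus\ell}^{j}$ reformulation and a random choice of $j\in[\ell]$ to amplify the depth savings by a factor of roughly $\ell$. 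By contrast, the present claim is a direct structural observation about $\Gamma$ and $\mathcal{D}$.
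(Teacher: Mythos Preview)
Your proposal is correct and takes essentially the same approach as the paper: both extract the would-be set cover $C$ from the root-to-leaf path of $0^n$, use the mass $\tfrac12$ at $0^n$ to bound $|C|<\opt(\mathcal{S})$, and then exhibit an uncovered $u$ that follows the same path. The only cosmetic difference is that the paper phrases it by contradiction (assuming $\dist_{\mathcal{D}}(T,\Gamma)<1/(2|U|)$ forces $\dist_{\mathcal{D}}(T,\Gamma)=0$, hence $C$ must be a cover, hence $|C|\ge\opt(\mathcal{S})$), whereas you argue directly with a case split on the label of $L$; the underlying idea is identical.
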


\begin{proof}
Let $T$ be a decision tree satisfying $\underset{\bx\sim D}{\E}[\depth_T(\bx)]<\opt(\mathcal{S})/2$. We actually prove the stronger claim that $\dist_{\mathcal{D}}(T,\Gamma)\ge 1/(2|U|)$. Suppose for contradiction that $\dist_{\mathcal{D}}(T,\Gamma)< 1/(2|U|)$. Each $x\in \supp(\mathcal{D})$ has mass $\ge 1/(2|U|)$ under $\mathcal{D}$ and so we must have $\dist_{\mathcal{D}}(T,\Gamma)=0$. Let $C\sse [n]=S$ be the set of vertices that $T$ queries in the computation of {$0^n$} (equivalently, $C$ is the {leftmost}  root-to-leaf path in $T$). See \Cref{fig:set-cover-from-DT-for-Gamma} for an illustration of $C$.  Since $\dist_{\mathcal{D}}(T,\Gamma) = 0$, we have that $T(0^n) = \Gamma(0^n) = 0$.

We claim $C$ is a valid set cover for $\mathcal{S}$. 
Indeed, if some $u \in U$ is not covered by $C$, then $\mathcal{N}_\mathcal{S}(u) \cap C = \varnothing$, and  $u$ would follow this same path $C$ as {$0^n$} in $T$.  This would imply that {$0 = T(u) \ne \Gamma(u) = 1$}, contradicting the fact that $\dist_{\mathcal{D}}(T,\Gamma) = 0$.  

Since $C$ is a valid set cover, it follows that $|C| \ge \opt(\mathcal{S})$ and so:   
\begin{align*}
    \underset{\bx\sim \mathcal{D}}{\E}[\depth_T(\bx)]&\ge \underset{\bx\sim \mathcal{D}}{\Pr}[\bx=0^n]\cdot |C|\tag{$\depth_T(\violet{0^n})=|C|$}\\
    &= \frac{|C|}{2} \tag{$\mathcal{D}$ places weight $\frac1{2}$ on $0^n$}\\
    &\ge \frac{\opt(\mathcal{S})}{2}
\end{align*}
which contradicts our original assumption on the average depth of $T$.
\end{proof}

Our high-level proof strategy for the next claim is loosely inspired by \cite{BKLS20} (which itself built on \cite{BB19}). This proof also crucially relies on \Cref{prop:Dj_equals_D}. 
\begin{claim}[Good approximators for $\Gamma_{\oplus\ell}$ yield good approximators for $\Gamma$]
\label{claim:extracting_small_dt-new}
Let $T : (\zo^\ell)^n\to\zo$ be a decision tree such that 
\[ \dist_{\mathcal{D}_{\oplus \ell}}(T,\Gamma_{\oplus \ell})\le \eps \quad \text{and}\quad \underset{\by\sim \mathcal{D}_{\oplus \ell}}{\E}[\depth_T(\by)]\le d. \] 
Then there is a restriction $T^* : \zo^n \to \zo$ of $T$  satisfying 
\[ \dist_\mathcal{D}(T^*,\Gamma)\le 2\eps \quad \text{and} \quad 
\underset{\bx\sim \mathcal{D}}{\E}[\depth_{T^*}(\bx)]\le \frac{2d}{\ell}.
\] 
\end{claim}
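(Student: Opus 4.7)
My plan is to define, for each $j \in [\ell]$ and $z \in (\zo^{\ell-1})^n$, a decision tree $T^*_{j,z} \colon \zo^n \to \zo$ by $T^*_{j,z}(x) \coloneqq T(\ParComplete_j(z, x))$, and then select a good $(j^*, z^*)$ via averaging. The function $T^*_{j,z}$ is a decision tree because it arises from $T$ by first fixing the $\ell - 1$ non-$j$-th coordinates of each block according to $z$ (a genuine restriction on those inputs), and then relabeling each remaining variable $(y_i)_j$ as $x_i \oplus c_i$ for $c_i \coloneqq \bigoplus_{k \ne j} z_{i,k}$, which amounts to negating some input variables and preserves the decision-tree structure.

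For the error bound, I invoke \Cref{prop:Dj_equals_D}: for any fixed $j$, the pushforward of the product of uniform $\bz \sim (\zo^{\ell-1})^n$ and $\bx \sim \mathcal{D}$ under the map $\ParComplete_j$ is exactly $\mathcal{D}_{\oplus \ell}$. Combining this with the identity $\Gamma(x) = \Gamma_{\oplus \ell}(\ParComplete_j(z, x))$, which follows from $\BlockwisePar \circ \ParComplete_j(z, \cdot) = \mathrm{id}$, gives
\[
 \E_{j, \bz}\!\left[\dist_{\mathcal{D}}(T^*_{j, \bz}, \Gamma)\right] \;=\; \Pr_{\by \sim \mathcal{D}_{\oplus \ell}}\!\left[T(\by) \ne \Gamma_{\oplus \ell}(\by)\right] \;\le\; \eps.
\]

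For the depth bound, the key observation is that $\depth_{T^*_{j,z}}(x)$ equals the number of coordinates of the form $(y_i)_j$ that $T$ queries along its root-to-leaf path on input $\ParComplete_j(z, x)$. For any fixed $\bx$, the law of $\ParComplete_j(\bz, \bx)$ under uniform $\bz$ is uniform on $\{y : \BlockwisePar(y) = \bx\}$, and crucially this law does \emph{not} depend on $j$. This lets me exchange the $j$-average and the $\bz$-average, since summing the per-$j$ contribution over all $j \in [\ell]$ recovers the full depth of $T$ on the common $\by$:
\[
 \E_{j, \bz, \bx \sim \mathcal{D}}\!\left[\depth_{T^*_{j, \bz}}(\bx)\right] \;=\; \frac{1}{\ell}\, \E_{\by \sim \mathcal{D}_{\oplus \ell}}\!\left[\depth_T(\by)\right] \;\le\; \frac{d}{\ell}.
\]

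To conclude, I apply a weighted averaging: since $\E_{j, \bz}\!\left[\dist_{\mathcal{D}}(T^*_{j, \bz}, \Gamma) + \tfrac{\eps \ell}{d}\, \E_{\bx \sim \mathcal{D}}[\depth_{T^*_{j, \bz}}(\bx)]\right] \le 2\eps$, there exists $(j^*, z^*)$ with $\dist_{\mathcal{D}}(T^*_{j^*, z^*}, \Gamma) \le 2\eps$ and $\E_{\bx \sim \mathcal{D}}[\depth_{T^*_{j^*, z^*}}(\bx)] \le 2d/\ell$ simultaneously, establishing the claim. The main obstacle I anticipate is cleanly justifying the depth rearrangement: the set of coordinates queried by $T$ along its root-to-leaf path depends on the input, which itself depends on $j$ through $\ParComplete_j(z, x)$, so one must explicitly use the $j$-invariance of the $\bz$-marginal to legitimately exchange the two averages. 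A minor secondary issue is the book-keeping for the input negations when identifying $T^*_{j,z}$ with a decision tree on $\zo^n$, but this is just a trivial swap of children at nodes querying negated variables.
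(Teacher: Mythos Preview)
Your proposal is correct and follows essentially the same approach as the paper: both use \Cref{prop:Dj_equals_D} to express the error and the per-index depth contribution $q_j$ as averages over $(\bz,\bx)$, and then fix a good $(j,z)$ by an averaging argument. The only cosmetic difference is in the final step---the paper first picks $\underline{j}$ as the minimizer of $\E[q_j(\by)]$ and then applies Markov twice plus a union bound over $\bz$, whereas you average over $(j,\bz)$ jointly and extract a good pair via a single weighted sum; both yield the same constants.
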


\begin{proof}
 Recalling the notation from \Cref{def:alternative_Dlj},  when $\underline{z} \in (\zo^{\ell-1})^n$ and $\underline{j} \in [\ell]$ are fixed, the function $x\mapsto \ParComplete_{\underline{j}}(\underline{z},x)$ is a function from $\zo^n$ to $(\zo^{\ell})^n$. Our proof proceeds by finding a suitable $\underline{z}$ and $\underline{j}$ so that $x\mapsto T(\ParComplete_{\underline{j}}(\underline{z},x))$ is a tree of much smaller average depth and computes $\Gamma$ accurately over $\mathcal{D}$. Restricting $T$ according to the values specified by $\underline{z}$ and $\underline{j}$ yields the desired decision tree. 

For $j\in [\ell]$ and $y\in (\zo^\ell)^n$, write $q_j(y)$ for the number of times that $T$, on the input $y$, queries $(y_i)_j$ for some $i\in[n]$. Thus, $\depth_T(y)=\sum_{j\in[\ell]} q_j(y)$ and likewise
\[
\sum_{j\in[\ell]} \underset{\by\sim \mathcal{D}_{\oplus \ell}}{\E}[q_j(\by)]=\underset{\by\sim \mathcal{D}_{\oplus \ell}}{\E}[\depth_T(\by)]\le d.
\] 
Let $\underline{j}\in [\ell]$ be the index that minimizes $\underset{\by\sim \mathcal{D}_{\oplus \ell}}{\E}[q_j(\by)]$. By averaging, this $\underline{j}$ must satisfy $\underset{\by\sim \mathcal{D}_{\oplus \ell}}{\E}[q_{\underline{j}}(\by)]\le d/\ell$.  By~\Cref{prop:Dj_equals_D}, we can write
\begin{align*} 
\frac{d}{\ell} &\ge \underset{\by\sim \mathcal{D}_{\oplus \ell}}{\E}[q_{\underline{j}}(\by)] \\
&= \Ex_{\by\sim\mathcal{D}_{\oplus \ell}^{\underline{j}}}[q_{\underline{j}}(\by)] \tag{\Cref{prop:Dj_equals_D}} \\
&=  \underset{\bz\sim \mathcal{U}_{n(\ell-1)}}{\E}\left[\underset{\bx\sim \mathcal{D}}{\E}\left[q_{\underline{j}}(\ParComplete_{\underline{j}}(\bz,\bx))\right]\right]. \tag{Definition of $\mathcal{D}_{\oplus\ell}^j$}
\end{align*} 
Similarly, we also have: 
\begin{align*}
    \eps&\ge\underset{\by\sim \mathcal{D}_{\oplus \ell}}{\Pr}[T(\by)\neq \Gamma_{\oplus\ell}(\by)]\\
    &=\underset{\by\sim \mathcal{D}^{\underline{j}}_{\oplus \ell}}{\Pr}[T(\by)\neq \Gamma_{\oplus\ell}(\by)]\tag{\Cref{prop:Dj_equals_D}}\\
    &=\underset{\bz\sim \mathcal{U}_{n(\ell-1)}}{\E}\left[\Prx_{\bx\sim\mathcal{D}}\left[T(\ParComplete_{\underline{j}}(\bz,\bx))\neq \Gamma(\bx)\right]\right].\tag{Definition of $\mathcal{D}_{\oplus\ell}^j$}
\end{align*}
Applying Markov's inequality twice, we have  
\begin{align*}
     \underset{\bz\sim \mathcal{U}_{n(\ell-1)}}{\Pr}\left[\Prx_{\bx\sim\mathcal{D}}\left[T(\ParComplete_{\underline{j}}(\bz,\bx))\neq \Gamma(\bx)\right]>2\eps\right] <\frac{1}{2}\\
\text{and} \quad \underset{\bz\sim \mathcal{U}_{n(\ell-1)}}{\Pr}\left[\underset{\bx\sim \mathcal{D}}{\E}\left[q_{\underline{j}}(\ParComplete_{\underline{j}}(\bz,\bx))\right]>\frac{2d}{\ell}\right]<\frac{1}{2}.
\end{align*}

And thus by a union bound, there is some fixed $\underline{z}\in \zo^{n(\ell-1)}$ satisfying
\[ 
\Prx_{\bx\sim\mathcal{D}}\left[T(\ParComplete_{\underline{j}}(\underline{z},\bx))\neq \Gamma(\bx)\right] \le 2\eps \quad \text{and} \quad 
\underset{\bx\sim \mathcal{D}}{\E}\left[q_{\underline{j}}(\ParComplete_{\underline{j}}(\underline{z},\bx))\right]\le \frac{2d}{\ell}.
\] 
The tree $T^*$ is formed by restricting $T$ according to $\underline{z}$ and $\underline{j}$. Also, this tree $T^*$ satisfies $\depth_{T^*}(x)=q_{\underline{j}}(\ParComplete_{\underline{j}}(\underline{z},x))$ by construction. The claim then follows.
\end{proof}

To prove~\Cref{claim:avg_depth-new}, we first need a simple proposition stating that the probability a string $\by\sim\mathcal{D}_{\oplus\ell}$ matches some fixed substring decays exponentially with the length of the substring.

\begin{proposition}[$\mathcal{D}_{\oplus\ell}$ is uniform-like]
\label{prop:avg_depth}
Let $\ell \ge 2$.  For all $R\sse [\ell]$, $r\in\zo^{|R|}$, $i\in [n]$, and $b \in \zo$, we have \[ \underset{\by\sim \mathcal{D}_{\oplus\ell}}{\Pr}[(\by_i)_R=r\mid \oplus \by_i =b]\le 2^{-|R|/2}
\] 
where $(\by_i)_R\in\zo^{|R|}$ is the substring of $\by_i\in\zo^{\ell}$ consisting of the coordinates specified by $R$.
\end{proposition}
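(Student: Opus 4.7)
The plan is to compute the conditional probability essentially exactly and then verify the claimed bound in each case. Recall from the definition of $\mathcal{D}_{\oplus\ell}$ that we first sample $\bx \sim \mathcal{D}$ and then, independently for each block, draw $\by_i$ uniformly at random among the $2^{\ell-1}$ strings in $\zo^\ell$ whose parity equals $\bx_i$. In particular, the event $\{\oplus\by_i = b\}$ is the same as the event $\{\bx_i = b\}$, and conditioned on it, $\by_i$ is uniform over the $2^{\ell-1}$ parity-$b$ strings in $\zo^\ell$. The marginal distribution of $\bx_i$ under $\mathcal{D}$ plays no further role: once we condition on $\bx_i = b$, the fibers in which $\by_i$ lives are completely determined.

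It then suffices to count the number of parity-$b$ strings $y_i \in \zo^\ell$ for which $(y_i)_R = r$, and divide by $2^{\ell-1}$. Here I would split into two cases:
\begin{itemize}
\item[$\circ$] If $|R| < \ell$, then at least one coordinate is outside $R$. After fixing $(y_i)_R = r$, the remaining $\ell - |R|$ coordinates are free subject to a single parity constraint, giving $2^{\ell-|R|-1}$ strings and hence conditional probability exactly $2^{\ell-|R|-1}/2^{\ell-1} = 2^{-|R|}$, which is at most $2^{-|R|/2}$.
\item[$\circ$] If $|R| = \ell$, then $r$ already determines $y_i$ entirely. The conditional probability is $0$ if $\oplus r \neq b$ and $1/2^{\ell-1}$ otherwise.
\end{itemize}

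The only place the hypothesis $\ell \ge 2$ is needed is in the last subcase: I want to verify
\[ 2^{-(\ell-1)} \le 2^{-\ell/2}, \]
which rearranges to $\ell \ge 2$. Combining the two cases gives the claimed uniform bound $2^{-|R|/2}$. There is no real obstacle here; the only thing to be a little careful about is remembering that the individual block $\by_i$ depends on $\bx$ only through $\bx_i$, so conditioning on $\oplus \by_i = b$ makes the rest of the sampling procedure (over the other blocks and over $\mathcal{D}$) irrelevant for computing this single-block marginal.
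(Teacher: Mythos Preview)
Your proof is correct and essentially identical to the paper's: both split on whether $|R|<\ell$ or $|R|=\ell$, compute the conditional probability exactly (getting $2^{-|R|}$ in the first case and either $0$ or $2^{-(\ell-1)}$ in the second), and then check the bound. You make the role of the hypothesis $\ell\ge 2$ more explicit than the paper does, but otherwise the arguments match.
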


\begin{proof}
We first consider the case when $|R| < \ell$.  By the definition of~$\mathcal{D}_{\oplus \ell}$, the conditional distribution in question is the uniform distribution over all strings in $\zo^{\ell}$ whose parity is $b$.  The marginal distribution of this distribution over any set of $|R| < \ell$ coordinates is uniform, and therefore: 
\[ \Prx_{\by\sim\mathcal{D}_{\oplus \ell}}[(\by_i)_R = r \mid \oplus \by_i = b] = 2^{-|R|}.  \] 
If $|R|=\ell$, then depending on whether the parity of the bits in $r$ match $b$, we have:
\[ 
\Prx_{\by\sim\mathcal{D}_{\oplus \ell}}[\by_i = r \mid \oplus \by_i = b] = 
\begin{cases}
0 & \text{if $\oplus r \ne b$} \\
2^{-|R|+1} & \text{if $\oplus r = b$}. 
\end{cases}
\]  
In either case, we have the desired probability bound. 
\end{proof}

\begin{claim}[Small trees have small average depth]
\label{claim:avg_depth-new}
Let $T$ be a size-$s$ decision tree, then $$\underset{\by\sim \mathcal{D}_{\oplus \ell}}{\E}\left[\depth_T(\by)\right]\le 2\log s.$$
\end{claim}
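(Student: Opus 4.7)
The plan is to bound the probability of reaching each leaf using the ``uniform-like'' property in \Cref{prop:avg_depth}, and then convert the resulting depth bounds into an entropy bound.

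\textbf{Step 1: For every leaf $L$ of $T$ at depth $d_L$, bound $\Prx_{\by\sim\mathcal{D}_{\oplus\ell}}[\by \text{ reaches } L]$ by $2^{-d_L/2}$.} The root-to-leaf path to $L$ specifies, for each block index $i\in[n]$, a set $R_i\subseteq[\ell]$ of queried coordinates of size $r_i\ge 0$ together with a target assignment $\rho^{(i)}\in\zo^{r_i}$; these satisfy $\sum_i r_i = d_L$. Under $\mathcal{D}_{\oplus\ell}$, the blocks $\by_1,\ldots,\by_n$ are conditionally independent given $\bx = \BlockwisePar(\by)\sim\mathcal{D}$, so
\[
\Prx_{\by\sim\mathcal{D}_{\oplus\ell}}[\by \text{ reaches } L]
\;=\; \Ex_{\bx\sim\mathcal{D}}\!\left[\prod_{i=1}^{n} \Prx_{\by\sim\mathcal{D}_{\oplus\ell}}\!\bigl[(\by_i)_{R_i}=\rho^{(i)}\,\bigm|\,\oplus\by_i=\bx_i\bigr]\right].
\]
Applying \Cref{prop:avg_depth} block-by-block (and using the trivial bound $1$ for blocks with $r_i=0$) gives $\prod_i 2^{-r_i/2} = 2^{-d_L/2}$.

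\textbf{Step 2: Turn the leaf-probability bound into an entropy bound on the average depth.} Write $p_L \coloneqq \Prx_{\by\sim\mathcal{D}_{\oplus\ell}}[\by \text{ reaches } L]$; then $\sum_{L\in T} p_L = 1$. By Step~1, $p_L \le 2^{-d_L/2}$, which rearranges to $d_L \le 2\log(1/p_L)$. Therefore
\[
\Ex_{\by\sim\mathcal{D}_{\oplus\ell}}[\depth_T(\by)] \;=\; \sum_{L\in T} p_L\, d_L \;\le\; 2\sum_{L\in T} p_L \log(1/p_L) \;=\; 2H(p),
\]
where $H(p)$ is the Shannon entropy of the distribution $(p_L)_{L\in T}$ supported on the $s$ leaves of $T$. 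Since entropy is maximized by the uniform distribution on the support, $H(p)\le \log s$, yielding the desired bound $\Ex[\depth_T(\by)] \le 2\log s$.

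I do not expect any real obstacles: Step~1 is essentially a direct invocation of \Cref{prop:avg_depth} combined with conditional independence of blocks under $\mathcal{D}_{\oplus\ell}$, and Step~2 is a standard Kraft/entropy-style argument (the factor of $2$ loss compared to the uniform case comes from the $2^{-|R|/2}$ rather than $2^{-|R|}$ bound in \Cref{prop:avg_depth}). The only point worth being careful about is that blocks with $r_i=0$ must be handled as the trivial factor $1$, and that the product-form expression for leaf reachability truly follows from the conditional independence of the blocks given $\bx$, both of which are routine.
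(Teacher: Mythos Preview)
Your proof is correct. Step~1 is identical to the paper's argument: condition on $\BlockwisePar(\by)$, use block independence, and invoke \Cref{prop:avg_depth} to get $\Pr[\by\text{ reaches }L]\le 2^{-d_L/2}$.

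Step~2 differs slightly in packaging. The paper applies Jensen's inequality (concavity of $\log$) to the random variable $2^{\depth_T(\by)/2}$, obtaining $\tfrac12\E[\depth_T(\by)]\le\log\bigl(\sum_L p_L\cdot 2^{d_L/2}\bigr)\le\log\bigl(\sum_L 1\bigr)=\log s$. You instead rearrange $p_L\le 2^{-d_L/2}$ to $d_L\le 2\log(1/p_L)$ and bound $\sum_L p_L d_L$ by twice the entropy of the leaf distribution, then use $H(p)\le\log s$. These are two standard ways to extract the same Kraft-type bound; your entropy formulation is arguably a touch more direct, while the paper's Jensen route avoids any case analysis for leaves with $p_L=0$. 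Either way the constants and conclusion are identical.
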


\begin{proof}
 We start by upper bounding $\Pr[\text{$\by$ reaches $L$}]$ for any fixed leaf $L$ of $T$.  For each block $i\in [n]$, we write $R_i(L)$ to denote the variables from the $i$th block queried on the root-to-$L$ path, and $r_i(L)\in \zo^{R_i(L)}$ to denote the values that the path assigns to these variables.  Note that $\sum_{i\in [n]} |R_i(L)| = |L|$, the depth of $L$ in $T$.  With this notation in hand, for any fixed $x\in \zo^n$, we have 
\begin{align*} 
&\Prx_{\by\sim\mathcal{D}_{\oplus \ell}}[\text{$\by$ reaches $L$}\mid \BlockwisePar(\by) = x] \\
&=\prod_{i\in [n]} \Prx_{\by\sim\mathcal{D}_{\oplus \ell}}[(\by_i)_{R_i(L)} = r_i(L) \mid \BlockwisePar(\by) = x] \tag{Independence of the $\by_i$'s for fixed $x$} \\
&= \prod_{i\in [n]} \Prx_{\by\sim\mathcal{D}_{\oplus \ell}}[(\by_i)_{R_i(L)} = r_i(L) \mid \oplus \by_i = x_i] \\
&\le \prod_{i\in [n]} 2^{-|R_i(L)|/2} \tag{\Cref{prop:avg_depth}} \\
&= 2^{-|L|/2}. 
\end{align*} 
Since this holds for every $x$, it follows that 
\begin{equation}  \Prx_{\by\sim\mathcal{D}_{\oplus \ell}}[\text{$\by$ reaches $L$}] \le 2^{-|L|/2}.\label{eq:exponential} 
\end{equation} 
We therefore conclude that 
\begin{align*} 
\lfrac1{2}\cdot \underset{\by\sim \mathcal{D}_{\oplus \ell}}{\E}\left[\depth_T(\by)\right]&=
\Ex_{\by\sim\mathcal{D}_{\oplus \ell}}\left[\log\big(2^{\depth_T(\by)/2}\big)\right] \\
&\le \log\left(\Ex_{\by\sim\mathcal{D}_{\oplus \ell}}[2^{\depth_T(\by)/2}]\right) \tag{Concavity of $\log(\cdot)$}\\
&= \log\left( \sum_{L \in T} \Prx_{\by\sim\mathcal{D}_{\oplus \ell}}[\text{$y$ reaches $L$}] \cdot 2^{|L|/2} \right) \\ 
&\le  \log\left( \sum_{L\in T} 2^{-|L|/2} \cdot 2^{|L|/2} \right)  \tag{\Cref{eq:exponential}} \\
&= \log s.
\end{align*}
Rearranging completes the proof. 
\end{proof}

\paragraph{Putting things together: Proof of~\Cref{lem:set_cover-reduction}.} Suppose there is some tree $T$ computing $\Gamma_{\oplus \ell}$ with $|T|\le 2^{\opt(\mathcal{S})\ell/8}$. We show that $\dist(T,\Gamma_{\oplus \ell})\ge 1/(4N)$. Suppose for contradiction that $\dist(T,\Gamma_{\oplus\ell})<1/(4N)$. By \Cref{claim:avg_depth-new}, we have $\eyd{\depth_T(\by)}< 2\cdot\log\left(2^{\opt(\mathcal{S})\ell/8}\right)=\opt(\mathcal{S})\ell/4$. Then by \Cref{claim:extracting_small_dt-new} there is a decision tree $T^*$ satisfying
$$
\dist_\mathcal{D}(T^*,\Gamma)< \frac{1}{2N}\qquad \underset{\bx\sim \mathcal{D}}{\E}[\depth_{T^*}(\bx)]< \frac{\opt(\mathcal{S})}{2}.
$$
But this contradicts \Cref{claim:depth_error-new}. \hfill{$\square$} 

\subsection{Proof of~\Cref{lem:Set-Cover-to-DTs}: Lower bounds against DNF hypotheses}
We extend \Cref{lem:set_cover-reduction} to show that $\Gamma_{\oplus\ell}$ cannot even be approximated by small DNFs. This extension will allow us to complete the proof of \Cref{lem:Set-Cover-to-DTs}.
For this section, we use the negation of $\Gamma$:
$$
\overline{\Gamma}(x)=
\begin{cases}
1 & x=0^n\\
0 & x=u,u\in U
\end{cases}.
$$
Analogous to \Cref{fact:Gamma-is-small-disjunction}, any set cover $C\sse S$ yields a conjunction of $k$ literals which computes $\Gamma$ under $\mathcal{D}$.
\begin{fact}
\label{fact:NGamma-is-small-conjunction}
If $\opt(\mathcal{S})\le k$, then $\overline{\Gamma}$ is a conjunction of $k$ literals under $\mathcal{D}$.
\end{fact}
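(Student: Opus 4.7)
The plan is to derive \Cref{fact:NGamma-is-small-conjunction} as an immediate corollary of \Cref{fact:Gamma-is-small-disjunction} via De Morgan duality. By hypothesis $\opt(\mathcal{S}) \le k$, so there is a set cover $C \subseteq S$ with $|C| \le k$. \Cref{fact:Gamma-is-small-disjunction} already guarantees that the monotone disjunction $\phi(x) = \bigvee_{i \in C} x_i$ agrees with $\Gamma$ on every point of the support of $\mathcal{D}$. Negating pointwise and applying De Morgan's law gives
\[
\neg \phi(x) \;=\; \bigwedge_{i \in C} \neg x_i,
\]
which is a conjunction of $|C| \le k$ literals and agrees with $\overline{\Gamma}$ wherever $\phi$ agrees with $\Gamma$.

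For concreteness I would verify the two kinds of supported inputs directly. On $x = 0^n$, every literal $\neg x_i$ evaluates to $1$, so the conjunction outputs $1 = \overline{\Gamma}(0^n)$. On $x = u$ for any $u \in U$, the set-cover property produces some $i \in C \cap \mathcal{N}_{\mathcal{S}}(u)$, and for this $i$ we have $x_i = 1$, so $\neg x_i = 0$ and the conjunction outputs $0 = \overline{\Gamma}(u)$. This covers all of $\supp(\mathcal{D})$.

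There is essentially no obstacle: the statement is the syntactic dual of \Cref{fact:Gamma-is-small-disjunction}, and the argument is a one-line De Morgan transformation. The only mild point of care is cosmetic, namely that the resulting conjunction is built entirely from negated literals (an ``anti-monotone'' conjunction), which is perfectly consistent with the stated phrasing ``conjunction of $k$ literals''; and that if $|C| < k$ strictly, one either accepts ``conjunction of at most $k$ literals'' or pads by repeating an existing literal, which does not change the computed function.
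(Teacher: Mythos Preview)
Your proposal is correct and matches the paper's own treatment: the paper likewise observes that the negations of the variables in a size-$k$ set cover $C$ form a conjunction computing $\overline{\Gamma}$ under $\mathcal{D}$, which is precisely the De~Morgan dual of \Cref{fact:Gamma-is-small-disjunction}. Your explicit verification on $0^n$ and on each $u\in U$ is more detailed than the paper's one-line remark, but the idea is identical.
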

The literals in this case are the negation of the variables in the set cover $C\sse S$. We will likewise use the negation of $\Gamma_{\oplus\ell}$:
$$
\overline{\Gamma}_{\oplus\ell}(y)=\overline{\Gamma}(\BlockwisePar(y)).
$$
The analogue of \Cref{fact:Gamma_ell-is-small-junta} becomes:
\begin{fact}
\label{fact:NGamma_ell-is-small-junta}
If $\opt(\mathcal{S})\le k$ then $\overline{\Gamma}_{\oplus \ell}$ is a $k\ell$-junta (a conjunction of $k$ many parities, each over $\ell$ variables) under $\mathcal{D}_{\oplus \ell}$.
\end{fact}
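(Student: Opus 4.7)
The plan is to mirror the argument behind \Cref{fact:Gamma_ell-is-small-junta} essentially verbatim, replacing the disjunction structure with a conjunction structure. The statement is a routine consequence of \Cref{fact:NGamma-is-small-conjunction} together with the compatibility between $\mathcal{D}_{\oplus\ell}$ and $\mathcal{D}$ recorded in \Cref{fact:D_ell-induces-D}; there is no real obstacle, so I expect the proof to be a short chain of definitional unfoldings.

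First I would invoke \Cref{fact:NGamma-is-small-conjunction}: since $\opt(\mathcal{S})\le k$, there is a set cover $C\subseteq S$ with $|C|\le k$, and on $\supp(\mathcal{D})$ the function $\overline{\Gamma}$ agrees with the conjunction $\bigwedge_{i\in C}\neg x_i$. Indeed, at $x=0^n$ each $\neg x_i$ evaluates to $1$, matching $\overline{\Gamma}(0^n)=1$; and at $x=u$ for $u\in U$, since $C$ covers $u$ some $i\in C$ satisfies $x_i = 1$ and so the conjunction evaluates to $0$, matching $\overline{\Gamma}(u)=0$.

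Next I would lift this to the $\oplus\ell$ setting. By definition $\overline{\Gamma}_{\oplus\ell}(y)=\overline{\Gamma}(\BlockwisePar(y))$, and \Cref{fact:D_ell-induces-D} tells us that for $\by\sim\mathcal{D}_{\oplus\ell}$ the string $\BlockwisePar(\by)$ is distributed according to $\mathcal{D}$, so in particular $\BlockwisePar(\by)\in\supp(\mathcal{D})$ with probability $1$. Substituting $x_i\mapsto \oplus y_i$ into the conjunction from the previous step therefore gives, for every $y$ in the support of $\mathcal{D}_{\oplus\ell}$,
\[
\overline{\Gamma}_{\oplus\ell}(y) \;=\; \bigwedge_{i\in C} \neg\!\left(\bigoplus_{j\in[\ell]} (y_i)_j\right).
\]

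Finally I would read off the junta bound from this expression: each of the $|C|\le k$ conjuncts depends only on the $\ell$ coordinates of a single block $y_i\in\zo^\ell$, and the blocks indexed by $C$ are pairwise disjoint, so the total number of relevant variables is at most $k\ell$. Hence $\overline{\Gamma}_{\oplus\ell}$ is computed under $\mathcal{D}_{\oplus\ell}$ by a conjunction of $k$ negated parities over disjoint blocks of $\ell$ variables, which is in particular a $k\ell$-junta. This is exactly the claimed conclusion.
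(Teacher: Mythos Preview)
Your proposal is correct and matches the paper's intended argument: the paper states this fact without proof, treating it as the immediate analogue of \Cref{fact:Gamma_ell-is-small-junta} obtained by negating, and your unfolding via \Cref{fact:NGamma-is-small-conjunction} and \Cref{fact:D_ell-induces-D} is exactly that. The only cosmetic point is that you obtain a conjunction of \emph{negated} parities, but since $\neg(\oplus_j (y_i)_j)=1\oplus\bigoplus_j (y_i)_j$ is itself a parity (affine) function on the same $\ell$ variables, this is consistent with the paper's phrasing.
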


Ultimately, this change allows us to prove that $\overline{\Gamma}_{\oplus\ell}$ cannot be approximated by small-size DNF formulas. If instead, one were interested in proving hardness against CNF formulas, one could work directly with the unnegated $\Gamma_{\oplus\ell}$. We find that working with DNFs is slightly less cumbersome than with CNFs which is why we focus on the negated function in this section. Specifically, we prove the following extension of \Cref{lem:set_cover-reduction}.\footnote{The lemma is indeed an ``extension'' because any size-$s$ decision tree computing $\Gamma_{\oplus\ell}$ yields a size-$s$ decision tree computing $\overline{\Gamma}_{\oplus\ell}$ simply by flipping leaf labels, and so \Cref{lem:set_cover-reduction} can equivalently be viewed as a statement about $\overline{\Gamma}_{\oplus\ell}$.}

\begin{lemma}
\label{lem:NGamma-has-no-small-DNF}
Let $\mathcal{S}=(S,U,E)$ be an $N$-vertex set cover instance and let $\ell\ge 2$. If $F:(\zo^\ell)^n\to\zo$ is a DNF of size $|F|<2^{\opt(\mathcal{S})\ell/16}$, then $\dist_{\mathcal{D}_{\oplus\ell}}(\overline{\Gamma}_{\oplus\ell},F)\ge 1/(4N)$. 
\end{lemma}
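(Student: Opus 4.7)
The plan is to mirror the proof of \Cref{lem:set_cover-reduction} step-by-step, replacing ``depth'' with ``width'' and ``decision tree'' with ``DNF'' throughout. Specifically, I establish three analogues of \Cref{claim:depth_error-new,claim:extracting_small_dt-new,claim:avg_depth-new} and then combine them.

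The analogue of \Cref{claim:depth_error-new} asserts: if $F$ is a DNF with $\E_{\bx\sim\mathcal{D}}[\width_F(\bx)]<\opt(\mathcal{S})/2$, then $\dist_\mathcal{D}(F,\overline{\Gamma})\ge 1/(2N)$. Assume otherwise, so that $F$ computes $\overline{\Gamma}$ exactly under $\mathcal{D}$. Since $\overline{\Gamma}(0^n)=1$, some term $t^*$ of $F$ accepts $0^n$, and all of its literals are then negations. For each $u\in U$, the term $t^*$ must reject the indicator vector of $\mathcal{N}_\mathcal{S}(u)$, which forces some negated variable of $t^*$ to lie in $\mathcal{N}_\mathcal{S}(u)$. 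Hence the variables appearing in $t^*$ form a set cover, so $\width_F(0^n)\ge \opt(\mathcal{S})$; weighting by $\Pr_{\bx\sim\mathcal{D}}[\bx=0^n]=\tfrac{1}{2}$ contradicts the hypothesis.

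The analogue of \Cref{claim:extracting_small_dt-new} is the main content. For each $\by$ with $F(\by)=1$, fix $t^*(\by)$ to be the smallest-width term of $F$ accepting $\by$ (with an arbitrary tie-breaking rule), and define $q_j(\by)$ to be the number of literals of $t^*(\by)$ on variables $(y_i)_j$ for $i\in[n]$; set $q_j(\by)=0$ if $F(\by)=0$. Then $\width_F(\by)=\sum_j q_j(\by)$, so some $\underline{j}\in[\ell]$ has $\E[q_{\underline{j}}(\by)]\le w/\ell$. Rewriting expectations via $\mathcal{D}_{\oplus\ell}^{\underline{j}}=\mathcal{D}_{\oplus\ell}$ (\Cref{prop:Dj_equals_D}) and then applying Markov's inequality together with a union bound (exactly as in the proof of \Cref{claim:extracting_small_dt-new}) produces $\underline{z}\in(\zo^{\ell-1})^n$ for which $F^*\coloneqq F|_{\underline{z},\underline{j}}$ satisfies $\dist_\mathcal{D}(F^*,\overline{\Gamma})\le 2\eps$ and $\E_{\bx}[\width_{F^*}(\bx)]\le 2w/\ell$. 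The width bound relies on the observation that for $y=\ParComplete_{\underline{j}}(\underline{z},x)$, the term $t^*(y)$ survives the restriction (since no literal of it is falsified) and its restriction in $F^*$ has width exactly $q_{\underline{j}}(y)$, so $\width_{F^*}(x)\le q_{\underline{j}}(y)$.

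For the analogue of \Cref{claim:avg_depth-new}, I show $\E_{\by\sim\mathcal{D}_{\oplus\ell}}[\width_F(\by)]=O(\log s)$ for any size-$s$ DNF. The key analogue of \Cref{prop:avg_depth} is that for any fixed term $t$, $\Pr_{\by\sim\mathcal{D}_{\oplus\ell}}[t\text{ accepts }\by]\le 2^{-|t|/2}$, which follows from the same block-independence computation used there. Partitioning the accepting inputs according to the identity of $t^*(\by)$ yields disjoint events $A_1,\dots,A_s$ with $\Pr[\by\in A_i]\le\Pr[t_i\text{ accepts }\by]\le 2^{-|t_i|/2}$, and a Jensen's-inequality argument parallel to the decision-tree case then gives $\E[2^{\width_F(\by)/2}]\le 1+s$, hence $\E[\width_F]=O(\log s)$. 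Chaining the three claims: if $\dist_{\mathcal{D}_{\oplus\ell}}(F,\overline{\Gamma}_{\oplus\ell})<1/(4N)$ and $|F|<2^{\opt(\mathcal{S})\ell/16}$, the third claim bounds $\E_{\by\sim\mathcal{D}_{\oplus\ell}}[\width_F]=O(\opt(\mathcal{S})\ell)$, the second yields an $F^*$ with $\dist_\mathcal{D}(F^*,\overline{\Gamma})<1/(2N)$ and expected width $<\opt(\mathcal{S})/2$, and these together contradict the first claim. The main obstacle, compared to the decision-tree case, lies in the third step: unlike decision trees, DNFs do not come with a canonical partition of the input space, so fixing a tie-breaker to define $t^*(\by)$ canonically is what allows the partition-style counting argument to go through.
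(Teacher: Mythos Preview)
Your proposal is correct and follows essentially the same three-claim structure as the paper's proof (\Cref{claim:small-width-F-badly-approximates-NGamma,claim:NGammaL-approximators-yield-NGamma-approximators,claim:small-size-DNFs-have-small-avg-width}). The one noteworthy difference is in your handling of the third step: the paper conditions on the event $F(\by)=1$ and therefore needs the auxiliary hypothesis $\dist_{\mathcal{D}_{\oplus\ell}}(F,\overline{\Gamma}_{\oplus\ell})\le 1/4$ to lower-bound $\Pr[F(\by)=1]\ge 1/4$ before applying Jensen, whereas your partition of the $1$-inputs by the canonical accepting term $t^*(\by)$ yields $\E[2^{\width_F(\by)/2}]\le 1+s$ directly, with no side hypothesis. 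This is a mild simplification (and your observation that $\width_{F^*}(x)\le q_{\underline{j}}(y)$ rather than $=$ is in fact the more careful statement, since another term of $F$ could restrict to something even shorter), but the overall route is the same.
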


The high level proof strategy follows that of \Cref{lem:set_cover-reduction} and can be divided into the same three steps outlined in \Cref{para:high-level-idea-of-DT-gamma-lb}. The only difference is that ``average depth'' is no longer a well-defined quantity with DNF formulas. Instead, we consider ``average width'' which is a generalization of average depth suited to our purposes. 

\begin{claim}[Good approximators for $\overline{\Gamma}$ require large width]
\label{claim:small-width-F-badly-approximates-NGamma}
Let $F:\zo^n\to\zo$ be a DNF formula and $\mathcal{S}=(S,U,E)$ be an $N$-vertex set cover instance with $|S|=n$. If $\exd{\width_{F}(\bx)}<\opt(\mathcal{S})/2$, then $\dist_\mathcal{D}(F,\overline{\Gamma})\ge 1/(2N)$. 
\end{claim}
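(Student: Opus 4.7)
The plan is to mirror the proof of \Cref{claim:depth_error-new} almost verbatim, with ``average depth'' replaced by ``average width'' and ``leftmost root-to-leaf path of $T$'' replaced by ``a minimum-width term of $F$ accepting $0^n$''. The key structural observation is that any term accepting $0^n$ must consist purely of \emph{negated} literals, and the indices of those negated literals form a set cover of $\mathcal{S}$.

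I would proceed as follows. Suppose for contradiction that $\dist_{\mathcal{D}}(F,\overline{\Gamma}) < 1/(2N)$. Since every point in $\supp(\mathcal{D})$ has mass at least $1/(2|U|) \ge 1/(2N)$, this forces $\dist_{\mathcal{D}}(F,\overline{\Gamma}) = 0$, i.e., $F$ agrees with $\overline{\Gamma}$ on all of $\supp(\mathcal{D})$. In particular $F(0^n) = \overline{\Gamma}(0^n) = 1$, so some term $t^{\star}$ of $F$ accepts $0^n$. Every literal in $t^{\star}$ must evaluate to $1$ on $0^n$, so $t^{\star}$ contains only negated literals; let $C^{\star} \subseteq [n]$ be the set of variables appearing (negatively) in $t^{\star}$.

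Next I would verify that $C^{\star}$ is a valid set cover for $\mathcal{S}$, mirroring the argument for the leftmost path in the decision tree proof. For every $u \in U$, we have $F(u) = \overline{\Gamma}(u) = 0$, so in particular $t^{\star}(u) = 0$; hence some negated literal $\overline{x_i} \in t^{\star}$ must satisfy $u_i = 1$, which is exactly the statement that $i \in \mathcal{N}_{\mathcal{S}}(u) \cap C^{\star}$. Thus $C^{\star}$ hits every neighborhood and $|t^{\star}| = |C^{\star}| \ge \opt(\mathcal{S})$. Since this argument applies to \emph{any} term accepting $0^n$, the minimum-width such term has width $\ge \opt(\mathcal{S})$, i.e., $\width_F(0^n) \ge \opt(\mathcal{S})$.

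Finally I would finish with the analogue of the depth calculation:
\[
\exd{\width_F(\bx)} \;\ge\; \Prx_{\bx \sim \mathcal{D}}[\bx = 0^n] \cdot \width_F(0^n) \;\ge\; \tfrac{1}{2}\,\opt(\mathcal{S}),
\]
contradicting the hypothesis $\exd{\width_F(\bx)} < \opt(\mathcal{S})/2$. The only real step with content is the translation from decision trees to DNFs, and the crucial insight is that the role of ``path to the $0$-leaf reached by $0^n$'' is played by the variable set of any term covering $0^n$; I do not anticipate a genuine technical obstacle beyond getting this correspondence right.
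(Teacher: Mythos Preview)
Your proposal is correct and follows essentially the same argument as the paper: both hinge on the observation that any term of $F$ accepting $0^n$ consists solely of negated literals whose index set must be a set cover, yielding $\width_F(0^n)\ge \opt(\mathcal{S})$ whenever $F$ agrees with $\overline{\Gamma}$ on $\supp(\mathcal{D})$. The only cosmetic difference is that the paper argues directly (using the width hypothesis to find an uncovered $u$ and exhibit a single error point), whereas you run the contrapositive by contradiction---mirroring the decision-tree proof of \Cref{claim:depth_error-new} exactly as you intended.
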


\begin{proof}
Let $F=t_1\lor\cdots\lor t_s$ be a DNF formula. If $F(0^n)=0$, then $\dist_\mathcal{D}(F,\overline{\Gamma})\ge 1/2$ since $\overline{\Gamma}(0^n)=1$. Otherwise, let $t_i$ be the smallest width term such that $t_i(0^n)=1$ so that $|t_i|=\width_F(0^n)$. Since $t_i$ accepts the all $0$s input, it is a conjunction of $|t_i|$ negated variables. Let $C\sse S$ be the set of variables in $t_i$. Since
\[ 
\frac{|t_i|}{2}
    =\prxd{\bx=0^n}\cdot\width_F(0^n) \le \exd{\width_{F}(\bx)} < \frac{\opt(\mathcal{S})}{2},
\] 
$C$ is not a set cover. Let $u\in U$ be some vertex not covered by $C$: $\mathcal{N}_\mathcal{S}(u)\cap C=\varnothing$. Then, $u$ is encoded with $0$s for all variables in $C$. It follows that $t_i(u)=1$ and $F(u)=1\neq 0=\overline{\Gamma}(u)$. Therefore:
\[ 
    \dist_\mathcal{D}(F,\overline{\Gamma}) \ge \underset{\bx\sim\mathcal{D}}{\Pr}\left[{\bx=u}\right]
    =\frac{1}{2|U|}
    \ge \frac{1}{2N}. \qedhere \] 
\end{proof}

\begin{claim}[Good approximators for $\overline{\Gamma}_{\oplus\ell}$ yield good approximators for $\overline{\Gamma}$]
\label{claim:NGammaL-approximators-yield-NGamma-approximators}
Let $F:(\zo^\ell)^n\to\zo$ be a DNF formula such that
$$
\dist_{\mathcal{D}_{\oplus\ell}}(F,\overline{\Gamma}_{\oplus\ell})\le \eps\quad\text{and}\quad \eyd{\width_F(\by)}\le w.
$$
Then there is a restriction $F^*:\zo^n\to\zo$ of $F$ satisfying
$$
\dist_\mathcal{D}(F^*,\overline{\Gamma})\le 2\eps\quad\text{and}\quad \exd{\width_{F^*}(\bx)}\le \frac{2w}{\ell}.
$$
\end{claim}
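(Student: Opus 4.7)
The plan is to mirror the proof of Claim~\ref{claim:extracting_small_dt-new}, replacing the notion of \emph{depth} of an input in a decision tree with the notion of \emph{width} of an accepting term, and once again using the equivalent sampling procedure for $\mathcal{D}_{\oplus\ell}$ via $\mathcal{D}_{\oplus\ell}^j$ (\Cref{prop:Dj_equals_D}). Concretely, for each input $y \in (\zo^{\ell})^n$, fix once and for all a canonical minimum-width accepting term $t(y)$ of $F$ (say, lexicographically first; if no term accepts $y$, set $t(y) = \bot$ and treat its width as $0$). For each $j \in [\ell]$, let $w_j(y)$ denote the number of literals of $t(y)$ whose underlying variable is the $j$th coordinate of some block. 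By construction,
\[
\width_F(y) \;=\; \sum_{j \in [\ell]} w_j(y),
\qquad \text{so} \qquad
\sum_{j \in [\ell]} \eyd{w_j(\by)} \;=\; \eyd{\width_F(\by)} \;\le\; w.
\]

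By averaging, pick $\underline{j} \in [\ell]$ minimizing $\eyd{w_{\underline{j}}(\by)}$, so $\eyd{w_{\underline{j}}(\by)} \le w/\ell$. Rewriting the expectation via \Cref{prop:Dj_equals_D} and then the definition of $\mathcal{D}_{\oplus\ell}^{\underline{j}}$,
\[
\frac{w}{\ell} \;\ge\; \eydj{w_{\underline{j}}(\by)}
\;=\; \Ex_{\bz \sim \mathcal{U}_{n(\ell-1)}}\!\left[ \exd{w_{\underline{j}}(\ParComplete_{\underline{j}}(\bz, \bx))}\right],
\]
and similarly
\[
\eps \;\ge\; \Ex_{\bz \sim \mathcal{U}_{n(\ell-1)}}\!\left[\Prx_{\bx\sim\mathcal{D}}\left[F(\ParComplete_{\underline{j}}(\bz, \bx)) \ne \overline{\Gamma}(\bx)\right]\right].
\]
Two applications of Markov's inequality followed by a union bound yield some fixed $\underline{z} \in (\zo^{\ell-1})^n$ for which both events hold with the factor-$2$ loss specified in the claim.

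Let $F^*$ be the DNF obtained by restricting $F$ according to $\underline{z}$ and identifying the remaining free variable in each block (the $\underline{j}$th coordinate) with $x_i$ in the natural way so that $\BlockwisePar(\ParComplete_{\underline{j}}(\underline{z},x)) $ is obtained from $x$ by XOR-ing with the fixed string $(\oplus \underline{z}_1, \ldots, \oplus \underline{z}_n)$; this fixed XOR can be absorbed into the literals of $F^*$. Approximation error is preserved by this restriction, giving $\dist_{\mathcal{D}}(F^*, \overline{\Gamma}) \le 2\eps$. For the width bound: for each $x$, the canonical term $t(\ParComplete_{\underline{j}}(\underline{z},x))$ accepts $\ParComplete_{\underline{j}}(\underline{z},x)$, so all of its non-$\underline{j}$ literals are satisfied by $\underline{z}$ and the term survives the restriction as a term of width exactly $w_{\underline{j}}(\ParComplete_{\underline{j}}(\underline{z},x))$ that still accepts $x$ in $F^*$. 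Hence $\width_{F^*}(x) \le w_{\underline{j}}(\ParComplete_{\underline{j}}(\underline{z},x))$, and averaging over $\bx \sim \mathcal{D}$ gives the desired bound $\exd{\width_{F^*}(\bx)} \le 2w/\ell$.

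The main conceptual obstacle, compared to the decision-tree case, is that ``width'' is not automatically additive across coordinates in the way that ``depth'' is on a root-to-leaf path; the fix is to commit to a canonical minimum-width accepting term and to verify that this term survives the restriction and bounds the post-restriction width. Everything else is a direct transcription of the proof of Claim~\ref{claim:extracting_small_dt-new}.
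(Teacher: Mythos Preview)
Your proposal is correct and follows essentially the same argument as the paper's proof: define the per-coordinate contribution to the minimum-width accepting term (the paper calls this $q_j$), average over $j$ to pick $\underline{j}$, rewrite via \Cref{prop:Dj_equals_D}, apply two Markov inequalities and a union bound to fix $\underline{z}$, and restrict. Your treatment is in fact slightly more careful than the paper's in two places --- you state only the inequality $\width_{F^*}(x) \le w_{\underline{j}}(\ParComplete_{\underline{j}}(\underline{z},x))$ (which is what is needed and is the easily justified direction), and you explicitly note the fixed XOR with $(\oplus \underline{z}_1,\ldots,\oplus \underline{z}_n)$ that must be absorbed into the literals when identifying the free coordinate of each block with $x_i$.
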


\begin{proof}
The proof is similar to that of \Cref{claim:extracting_small_dt-new}. First, let $q_j(y)$ denote the number of variables of the form $(y_i)_j$ for some $i\in[n]$ appearing in the smallest width term that accepts $y$ and $0$ if no term accepts $y$. Then, $\width_F(y)=\sum_{j\in[\ell]}q_j(y)$ for all $y\in\supp(D_{\oplus\ell})$. Therefore:
$$
\sum_{j\in[\ell]} \eyd{q_j(\by)}\le w.
$$
Let $\underline{j}\in[\ell]$ be the index that minimizes $\eyd{q_j(\by)}$. By averaging, $\underline{j}$ satisfies $\eyd{q_{\underline{j}}(\by)}\le w/\ell$. Using \Cref{prop:Dj_equals_D}:
\begin{align*}
    \frac{w}{\ell}&\ge \eyd{q_{\underline{j}}(\by)}\\
    &=\underset{\by\sim\mathcal{D}_{\oplus\ell}^{\underline{j}}}{\E}\left[{q_{\underline{j}}(\by)}\right]\tag{\Cref{prop:Dj_equals_D}}\\
    &=\underset{\bz\sim\mathcal{U}_{n(\ell-1)}}{\E}\left[\exd{q_{\underline{j}}(\ParComplete_{\underline{j}}(\bz,\bx))}\right].\tag{Definition of $\mathcal{D}_{\oplus\ell}^{\underline{j}}$}
\end{align*}
Similarly:
\begin{align*}
    \eps&\ge\underset{\by\sim \mathcal{D}_{\oplus \ell}}{\Pr}[F(\by)\neq \overline{\Gamma}_{\oplus\ell}(\by)]\\
    &=\underset{\by\sim \mathcal{D}^{\underline{j}}_{\oplus \ell}}{\Pr}[F(\by)\neq \overline{\Gamma}_{\oplus\ell}(\by)]\tag{\Cref{prop:Dj_equals_D}}\\
    &=\underset{\bz\sim \mathcal{U}_{n(\ell-1)}}{\E}\left[\Prx_{\bx\sim\mathcal{D}}\left[F(\ParComplete_{\underline{j}}(\bz,\bx))\neq \overline{\Gamma}(\bx)\right]\right].\tag{Definition of $\mathcal{D}_{\oplus\ell}^j$}
\end{align*}
Applying Markov's inequality twice, we have  
\begin{align*}
 \underset{\bz\sim \mathcal{U}_{n(\ell-1)}}{\Pr}\left[\Prx_{\bx\sim\mathcal{D}}\left[F(\ParComplete_{\underline{j}}(\bz,\bx))\neq \overline{\Gamma}(\bx)\right]>2\eps\right] <\frac{1}{2}\\
 \text{and} \quad \underset{\bz\sim \mathcal{U}_{n(\ell-1)}}{\Pr}\left[\underset{\bx\sim \mathcal{D}}{\E}\left[q_{\underline{j}}(\ParComplete_{\underline{j}}(\bz,\bx))\right]>\frac{2w}{\ell}\right]<\frac{1}{2}.
\end{align*}
And thus by a union bound, there is some fixed $\underline{z}\in \zo^{n(\ell-1)}$ satisfying
\[ 
\Prx_{\bx\sim\mathcal{D}}\left[F(\ParComplete_{\underline{j}}(\underline{z},\bx))\neq \overline{\Gamma}(\bx)\right] \le 2\eps \quad \text{and} \quad 
\underset{\bx\sim \mathcal{D}}{\E}\left[q_{\underline{j}}(\ParComplete_{\underline{j}}(\underline{z},\bx))\right]\le \frac{2w}{\ell}.
\] 
The DNF formula $F^*$ is formed by restricting $F$ according to the string $\underline{z}$. Also, this $F^*$ satisfies $\width_{F^*}(x)=q_{\underline{j}}(\ParComplete_{\underline{j}}(\underline{z},x))$ by construction. The claim then follows.
\end{proof}

\begin{claim}[Small DNFs have small average width]
\label{claim:small-size-DNFs-have-small-avg-width}
Let $F$ be a size-$s$ DNF formula for $s\ge 4$ such that $\dist_{\mathcal{D}_{\oplus\ell}}(F,\overline{\Gamma}_{\oplus\ell})\le 1/4$, then
$$
\eyd{\width_F(\by)}\le 4\log (s).
$$
\end{claim}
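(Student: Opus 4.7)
The plan is to follow the same template as the tree analogue \Cref{claim:avg_depth-new}, with ``leaves'' replaced by ``terms'' and ``depth'' by ``width''. The core of the proof bounds the moment generating function $\eyd{2^{\width_F(\by)/2}}$ and then invokes concavity of $\log$ to translate this into a bound on $\eyd{\width_F(\by)}$.

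The first step is to prove the term-wise analogue of \Cref{eq:exponential}: for every term $t_i$ of $F$,
\[ \pryd{t_i(\by)=1} \;\le\; 2^{-|t_i|/2}. \]
This is proved exactly as in the leaf case---condition on $\BlockwisePar(\by)=x$, use that the blocks $\by_1,\ldots,\by_n$ are conditionally independent given $x$ (each $\by_i$ is uniform subject to $\oplus \by_i = x_i$), and apply \Cref{prop:avg_depth} separately to each block that $t_i$ touches; the bound multiplies to $2^{-|t_i|/2}$ and, being uniform in $x$, passes to the unconditional probability. Next I would bound $\eyd{2^{\width_F(\by)/2}}$. Inputs with $F(\by)=0$ have $\width_F(\by)=0$ and contribute a total of at most $1$. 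For each $\by$ with $F(\by)=1$, assign a canonical minimum-width accepting term $t_{\sigma(\by)}$ (breaking ties by index), so that $\width_F(\by) = |t_{\sigma(\by)}|$. Grouping accepted inputs by their canonical term,
\[ \eyd{2^{\width_F(\by)/2}} \;\le\; 1 \,+\, \sum_{i=1}^{s} 2^{|t_i|/2}\cdot \pryd{\sigma(\by)=i} \;\le\; 1 \,+\, \sum_{i=1}^{s} 2^{|t_i|/2}\cdot 2^{-|t_i|/2} \;=\; 1+s. \]

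Finally, applying concavity of $\log$ gives
\[ \tfrac{1}{2}\,\eyd{\width_F(\by)} \;=\; \eyd{\log\bigl(2^{\width_F(\by)/2}\bigr)} \;\le\; \log\eyd{2^{\width_F(\by)/2}} \;\le\; \log(1+s), \]
and using $s \ge 4$ to bound $\log(1+s) \le 2\log s$, rearrangement yields $\eyd{\width_F(\by)} \le 4\log s$. The main subtlety is in the middle step: because $\width_F$ is a minimum over accepting terms rather than a sum, one must introduce the canonical accepting term $t_{\sigma(\by)}$ to avoid over-counting when converting the sum over inputs into a sum over terms; once this is done, the $2^{|t_i|/2}$ exponential factor exactly cancels the $2^{-|t_i|/2}$ acceptance-probability bound, producing the clean total $s$. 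Notably, the hypothesis $\dist_{\mathcal{D}_{\oplus\ell}}(F,\overline{\Gamma}_{\oplus\ell})\le 1/4$ does not actually enter the argument; it is presumably included in the statement only because that is the regime in which the claim will be applied downstream.
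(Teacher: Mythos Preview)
Your proof is correct and follows the same overall template as the paper's---bound $\pryd{t_i(\by)=1}\le 2^{-|t_i|/2}$ via \Cref{prop:avg_depth}, control the moment generating function $\eyd{2^{\width_F(\by)/2}}$, and then apply concavity of $\log$---but the middle step is organized differently. The paper uses the hypothesis $\dist_{\mathcal{D}_{\oplus\ell}}(F,\overline{\Gamma}_{\oplus\ell})\le 1/4$ to deduce $\pryd{F(\by)=1}\ge 1/4$, then bounds the \emph{conditional} expectation $\eyd{2^{\width_F(\by)/2-2}\mid F(\by)=1}$ by summing over terms, using $\pryd{t_i(\by)=1\mid F(\by)=1}\le 2^{-|t_i|/2+2}$. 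You instead handle the $F(\by)=0$ contribution directly (it adds at most $1$) and bound the unconditional expectation by introducing a canonical accepting term $t_{\sigma(\by)}$, which lets you sum over terms without overcounting. Your route is a bit more elementary and, as you observed, shows that the distance hypothesis is in fact not needed for the claim; the paper's route trades that extra assumption for not having to define the canonical-term map. Both arrive at $\eyd{\width_F(\by)}\le 4\log s$ with the same constants.
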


\begin{proof}
Let $F=t_1\lor\cdots\lor t_s$ be a DNF formula with $s$ terms satisfying $\dist_{\mathcal{D}_{\oplus\ell}}(F,\overline{\Gamma}_{\oplus\ell})\le 1/4$. We start by upper bounding the conditional probability $\Pr[t(\by)=1\mid F(\by)=1]$ for any fixed term $t\in\{t_1,\ldots,t_s\}$. We bound the probabilities $\Pr[t(\by)=1]$ and $\Pr[F(\by)=1]$ separately.
\begin{enumerate}[label=(\arabic*)]
\item {$\Pr[F(\by)=1]\ge 1/4$. We write
\begin{align*}
    \frac{1}{4}&\ge \dist_{\mathcal{D}_{\oplus\ell}}(F,\overline{\Gamma}_{\oplus\ell})\\
    &\ge \left|\pryd{F(\by)=1}-\pryd{\overline{\Gamma}_{\oplus\ell}(\by)=1}\right|\\
    &=\left|\pryd{F(\by)=1}-\frac{1}{2}\right|
\end{align*}
which implies $\Pr[F(\by)=1]\ge 1/4$.
}
\item {$\Pr[t(\by)=1]\le 2^{-|t|/2}$. For each $i\in [n]$, let $R_i(t)$ denote the variables from the $i$th block which appear in the term $t$ and let $r_i(t)\in\zo^{R_i(t)}$ denote the values assigned by those variables (i.e. $1$ if the variable is unnegated in $t$ and $0$ if the variable is negated in $t$). Then $\sum_{i\in[n]}|R_i(t)|=|t|$, the width of $t$. Using this notation, for any fixed $x\in\supp(\mathcal{D})$:
\begin{align*}
    &\pryd{t(\by)=1\mid \BlockwisePar(\by)=x}\\
    &=\prod_{i\in[n]}\pryd{(\by_i)_{R_i(t)}=r_i(t)\mid\BlockwisePar(\by)=x}\tag{Independence of the $\by_i$'s for fixed $x$}\\
    &=\prod_{i\in[n]}\pryd{(\by_i)_{R_i(t)}=r_i(t)\mid\oplus \by_i=x_i}\\
    &\le \prod_{i\in[n]}2^{-|R_i(t)|/2}\tag{\Cref{prop:avg_depth}}\\
    &=2^{-|t|/2}.
\end{align*}
Since this holds for any fixed $x$, it follows that
$$
\pryd{t(\by)=1}\le 2^{-|t|/2}.
$$
}
\end{enumerate}
Together, these two points imply
\begin{equation}
    \label{eq:pr-t(y)=1-given-1-is-small}
    \pryd{t(\by)=1\mid F(\by)=1}=\frac{\Pr[t(\by)=1]}{\Pr[F(\by)=1]}\le 2^{-|t|/2+2}.
\end{equation}
Lastly:
\begin{align*}
    \frac{1}{2}\cdot\eyd{\width_F(\by)}-2&=\eyd{\log\paren{2^{\width_F(\by)/2-2}}}\\
    &\le \log\paren{\eyd{2^{\width_F(\by)/2-2}}}\tag{Concavity of $\log$}\\
    &=\log\paren{\sum_{b\in\zo}\pryd{F(\by)=b}\cdot\eyd{2^{\width_F(\by)/2-2}\mid F(\by)= b}}\\
    &\le \log\paren{\eyd{2^{\width_F(\by)/2-2}\mid F(\by)= 1}}\tag{$\width_F(\by)=0$ if $F(\by)=0$ and $\Pr[F(\by)=b]\le 1$}\\
    &\le \log\paren{\sum_{i\in[s]}2^{|t_i|/2}\cdot \pryd{t_i(\by)=1\mid F(\by)=1}}\\
    &= \log\paren{\sum_{i\in[s]}2^{|t_i|/2-2}\cdot 2^{-|t_i|/2+2}}\tag{\Cref{eq:pr-t(y)=1-given-1-is-small}}\\
    &=\log s.
\end{align*}
Rearranging and applying the assumption that $2\le \log(s)$ completes the proof.
\end{proof}

\paragraph{Putting things together: Proof of~\Cref{lem:NGamma-has-no-small-DNF}} Suppose there is some DNF formula $F$ computing $\overline{\Gamma}_{\oplus \ell}$ with $|F|\le 2^{\opt(\mathcal{S})\ell/16}$. We show that $\dist(F,\overline{\Gamma}_{\oplus \ell})\ge 1/(4N)$. Suppose for contradiction that $\dist(T,\overline{\Gamma}_{\oplus\ell})<1/(4N)\le 1/4$. If $|F|<4$, we add dummy terms (e.g. by replicating the terms already in $F$) so that $|F|\ge 4$. We can then apply \Cref{claim:small-size-DNFs-have-small-avg-width}: $\eyd{\width_F(\by)}< 4\cdot\log\left(2^{\opt(\mathcal{S})\ell/16}\right)=\opt(\mathcal{S})\ell/4$. Then by \Cref{claim:NGammaL-approximators-yield-NGamma-approximators}, there is a DNF formula $F^*$ satisfying
$$
\dist_\mathcal{D}(F^*,\Gamma)< \frac{1}{2N}\qquad \underset{\bx\sim \mathcal{D}}{\E}[\depth_{F^*}(\bx)]< \frac{\opt(\mathcal{S})}{2}.
$$
But such an $F^*$ contradicts \Cref{claim:small-width-F-badly-approximates-NGamma}. \hfill{$\square$}

\paragraph{The last steps: finishing the proof of \Cref{lem:Set-Cover-to-DTs}.}{
We prove the following lemma which immediately implies \Cref{lem:Set-Cover-to-DTs}.
\begin{lemma}[$\overline{\Gamma}_{\oplus\ell}$ proves \Cref{lem:Set-Cover-to-DTs}]
\label{lem:given-set-cover-construct-NGamma-which-is-hard-to-approximate-by-DNFs}
Let $\mathcal{S}=(S,U,E)$ be an $N$-vertex instance of $(k,k')$-{\sc Set-Cover} and $\ell\le N$. Then there is an algorithm that runs in $\poly(N)$ time and outputs a circuit representation of $\overline{\Gamma}_{\oplus\ell}$ under $\mathcal{D}_{\oplus\ell}$ and a generator for $\mathcal{D}_{\oplus\ell}$ which satisfies:
\begin{itemize}
    \item[$\circ$] If $\opt(\mathcal{S})\le k$, then $\overline{\Gamma}_{\oplus\ell}$ is a $k\ell$-junta under $\mathcal{D}_{\oplus\ell}$.
    \item[$\circ$] If $\opt(\mathcal{S})> k'$, then any DNF of size $\le 2^{k'\ell/16}$ is $\frac{1}{4N}$-far from $\overline{\Gamma}_{\oplus\ell}$ under $\mathcal{D}_{\oplus\ell}$.
\end{itemize}
\end{lemma}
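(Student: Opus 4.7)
The plan is to observe that this lemma is essentially an assembly of the three main ingredients already developed in this section: the constructive circuit-and-generator result of \Cref{lem:circuit-for-gamma-generator-for-D}, the ``small cover $\Rightarrow$ small junta'' upper bound in \Cref{fact:NGamma_ell-is-small-junta}, and the ``large cover $\Rightarrow$ DNF hardness'' lower bound in \Cref{lem:NGamma-has-no-small-DNF}. No new analytic work should be required; the proof is a bookkeeping exercise that feeds the given $(k,k')$-{\sc Set-Cover} instance $\mathcal{S}$ into each of these statements.

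First I would invoke \Cref{lem:circuit-for-gamma-generator-for-D} on the input instance $\mathcal{S}$ with the given parameter $\ell \le N$. Since \Cref{lem:circuit-for-gamma-generator-for-D} produces a depth-$3$ circuit of size $n\ell$ computing $\Gamma_{\oplus\ell}$ over $\mathcal{D}_{\oplus\ell}$ in $\poly(N)$ time, together with a $\poly(N)$-time generator for $\mathcal{D}_{\oplus\ell}$, I immediately obtain the corresponding objects for $\overline{\Gamma}_{\oplus\ell}$ by composing the circuit with a single top-level NOT gate (the distribution $\mathcal{D}_{\oplus\ell}$ is unchanged). This handles the constructiveness claim.

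Next, for the first bullet, I would apply \Cref{fact:NGamma_ell-is-small-junta}: if $\opt(\mathcal{S}) \le k$, then $\overline{\Gamma}_{\oplus\ell}$ can be written under $\mathcal{D}_{\oplus\ell}$ as a conjunction of $k$ parities, each over $\ell$ variables, and so it depends on only $k\ell$ variables, i.e.~it is a $k\ell$-junta under $\mathcal{D}_{\oplus\ell}$. For the second bullet, I would apply \Cref{lem:NGamma-has-no-small-DNF} with the hypothesis that $\opt(\mathcal{S}) > k'$: the lemma states that any DNF $F$ with $|F| < 2^{\opt(\mathcal{S})\ell/16}$ satisfies $\dist_{\mathcal{D}_{\oplus\ell}}(\overline{\Gamma}_{\oplus\ell},F)\ge 1/(4N)$, and since $\opt(\mathcal{S}) > k'$ implies $2^{\opt(\mathcal{S})\ell/16} > 2^{k'\ell/16}$, the conclusion applies to every DNF of size $\le 2^{k'\ell/16}$.

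Since all of the heavy lifting—the three-step argument passing through small average depth/width, the restriction via $\ParComplete_{\underline{j}}(\underline{z},\cdot)$, and the uniform-like bound of \Cref{prop:avg_depth}—was already carried out in the proofs of \Cref{claim:small-width-F-badly-approximates-NGamma}, \Cref{claim:NGammaL-approximators-yield-NGamma-approximators}, and \Cref{claim:small-size-DNFs-have-small-avg-width}, I do not anticipate any obstacle here. The only point that warrants a brief mention is that \Cref{lem:NGamma-has-no-small-DNF} requires $\ell \ge 2$, which is harmless: the interesting parameter regime for our applications has $\ell$ growing with $N$, and the degenerate case $\ell = 1$ can be handled by setting $\ell := 2$ without affecting the asymptotic statement. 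With those pieces spliced together, \Cref{lem:given-set-cover-construct-NGamma-which-is-hard-to-approximate-by-DNFs} follows, and \Cref{lem:Set-Cover-to-DTs} is an immediate consequence.
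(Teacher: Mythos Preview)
Your proposal is correct and matches the paper's own proof essentially line for line: invoke \Cref{lem:circuit-for-gamma-generator-for-D} and add a NOT gate for constructiveness, then cite \Cref{fact:NGamma_ell-is-small-junta} for the first bullet and \Cref{lem:NGamma-has-no-small-DNF} for the second. The remark about $\ell\ge 2$ is a reasonable aside but not something the paper explicitly treats.
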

\begin{proof}
By \Cref{lem:circuit-for-gamma-generator-for-D}, there is an algorithm that runs in $\poly(N)$ time and outputs a circuit representation of ${\Gamma}_{\oplus\ell}$ and a generator for $\mathcal{D}_{\oplus\ell}$. Augmenting the circuit for ${\Gamma}_{\oplus\ell}$ with a single NOT gate yields a circuit for $\overline{\Gamma}_{\oplus\ell}$. Moreover, we have shown:
\begin{itemize}[leftmargin=0.8cm]
    \item[$\circ$] if $\opt(\mathcal{S})\le k$, then $\overline{\Gamma}_{\oplus\ell}$ is a $k\ell$-junta under $\mathcal{D}_{\oplus\ell}$;\hfill{(\Cref{fact:NGamma_ell-is-small-junta})}
    \item[$\circ$] if $\opt(\mathcal{S})>k'$, then any DNF of size $\le 2^{k'\ell/16}$ is $\frac{1}{4N}$-far from $\overline{\Gamma}_{\oplus\ell}$ under $\mathcal{D}_{\oplus\ell}$;\hfill{(\Cref{lem:NGamma-has-no-small-DNF})} 
\end{itemize}
which completes the proof of the lemma.
\end{proof}
}

\subsection{Implications of~\Cref{lem:Set-Cover-to-DTs}}

\subsubsection{Proofs of~\Cref{thm:DT-DT} and~\Cref{thm:junta-DNF}}
In this section, we prove the following theorem.
\begin{theorem}
\label{thm:DNF-construction-hardness}
Let $\mu:\N\to\N$ be any computable, non-decreasing function satisfying $\mu(n)=o\paren{\frac{\log\log n}{\log\log\log n}}$. Assuming randomized ETH, there is some constant $\lambda\in (0,1)$, a function $f:\zo^{n}\to\zo$, and distribution $\mathcal{D}$ over $\zo^n$ such that ~{\sc DT-Construction$(s,1/n)$} cannot be solved in time $${s^{\lambda\cdot\left(\frac{\log\log s}{\mu(n)\log\log\log s}\right)}}$$ for $f$ and for any $s\le n^{\mu(n)}$, even if $f$ is promised to be a $(\log n)$-junta over $\mathcal{D}$ and the algorithm returns a DNF hypothesis. 
\end{theorem}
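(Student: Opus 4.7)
}

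The plan is to instantiate \Cref{lem:given-set-cover-construct-NGamma-which-is-hard-to-approximate-by-DNFs} on a hard $(k,k')$-\setcover instance produced by Lin's theorem (\Cref{thm:SC_hardness}), with $k,k',\ell$ chosen as a function of $n$ and $s$ so that the junta promise in the {\sc Yes} case and the DNF-size promise in the {\sc No} case both line up with the parameters of \textsc{DT-Construction}$(s,1/n)$. Fix $\mu$ as in the statement and let $c_L \in (0,1)$ be Lin's constant. Given $n$ and $s \le n^{\mu(n)}$, we work with a \setcover instance of size $N = \Theta(n)$ (guaranteed by \Cref{thm:SC_hardness}; we may pad the set system and then choose $\ell$ so that $n_{SC}\ell = n$, where $n_{SC} = |S|$).

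\textbf{Parameter choice.} Pick a large absolute constant $C$ and set $k \coloneqq \big\lfloor \tfrac{\log\log N}{C \log\log\log N} \big\rfloor$, well within Lin's allowed range. This gives $k' = \tfrac{1}{2}(\log N/\log\log N)^{1/k} = (\log\log N)^{\Theta(C)}$. Choose $\ell$ as any integer in the interval
\[
\tfrac{16\log s}{k'} \;\le\; \ell \;\le\; \tfrac{\log n}{k}.
\]
The lower bound on $\ell$ gives $2^{k'\ell/16} \ge s$, so the second bullet of \Cref{lem:given-set-cover-construct-NGamma-which-is-hard-to-approximate-by-DNFs} rules out every DNF of size $\le s$ in the {\sc No} case. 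The upper bound gives $k\ell \le \log n$, so the first bullet gives a $(\log n)$-junta in the {\sc Yes} case. The interval is nonempty because, using $\log s \le \mu(n)\log n$, the ratio $(\log n/k)/(16\log s/k')$ is at least
\[
\tfrac{k'}{16k\mu(n)} \;=\; \tfrac{(\log\log N)^{\Theta(C)}}{O(\mu(n)\log\log N/\log\log\log N)},
\]
which tends to infinity once $C$ is large enough, by the hypothesis $\mu(n) = o(\log\log n/\log\log\log n)$.

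\textbf{Distinguishing YES from NO.} Let $(f,\mathcal{D}) = (\overline{\Gamma}_{\oplus\ell},\mathcal{D}_{\oplus\ell})$ be the circuit and generator produced in $\poly(N)$ time by the reduction. Suppose some algorithm $A$ solves \textsc{DT-Construction}$(s,1/n)$ on this instance in time $T$. We run $A$ on $(f,\mathcal{D})$ and empirically estimate $\mathrm{dist}_{\mathcal{D}}(A(f,\mathcal{D}),f)$ to within additive error $\tfrac{1}{8N}$ by Monte Carlo sampling with the generator and the circuit, in additional $\poly(n,T)$ time. In the {\sc Yes} case the output is $(1/n)$-close to $f$; in the {\sc No} case the output is a DNF of size $\le T$, which under the assumption $T \le 2^{k'\ell/16}$ must be at distance $> 1/(4N) \ge 1/n$ from $f$ by \Cref{lem:given-set-cover-construct-NGamma-which-is-hard-to-approximate-by-DNFs}. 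These two cases are separated by the empirical estimate, so $(k,k')$-\setcover on $\mathcal{S}$ is solved in time $\poly(n)\cdot(T+1)$.

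\textbf{Translating the lower bound.} \Cref{thm:SC_hardness} forces $\poly(n)\cdot(T+1) \ge N^{c_L k}$. Using $\log N = \Theta(\log n)$, $k = \Theta(\log\log n/\log\log\log n)$, and $\log s = t\log n$ with $t \le \mu(n)$, we have
\[
N^{c_L k} \;=\; 2^{\Omega\!\left(\log n \cdot \tfrac{\log\log n}{\log\log\log n}\right)} \;=\; s^{\,\Omega\!\left(\tfrac{1}{t}\cdot\tfrac{\log\log n}{\log\log\log n}\right)} \;\ge\; s^{\,\lambda \cdot \tfrac{\log\log s}{\mu(n)\log\log\log s}}
\]
for a sufficiently small constant $\lambda > 0$; the $\mu(n)$ in the denominator comes from $1/t \ge 1/\mu(n)$, and $\log\log s, \log\log\log s$ are within constant factors of $\log\log n, \log\log\log n$ throughout the allowed range of $s$. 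Thus $T$ cannot be this small, which is the stated lower bound.

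\textbf{Main obstacle.} The one nontrivial part is the parameter balancing: we must simultaneously satisfy (i) Lin's constraint on $k$, (ii) the junta upper bound $k\ell \le \log n$, and (iii) the DNF lower bound $2^{k'\ell/16} \ge s$, and then verify that the resulting $N^{c_L k}$ bound, after expressing everything in terms of $s$, yields the $s^{\lambda\log\log s/(\mu(n)\log\log\log s)}$ form. The hypothesis $\mu(n) = o(\log\log n/\log\log\log n)$ is exactly what keeps the interval $[16\log s/k',\,\log n/k]$ nonempty and what produces the $\mu(n)^{-1}$ factor in the final exponent.
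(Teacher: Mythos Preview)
Your approach is the same as the paper's---instantiate Lin's parameterized \setcover hardness inside \Cref{lem:given-set-cover-construct-NGamma-which-is-hard-to-approximate-by-DNFs}, run the hypothetical {\sc DT-Construction} algorithm, and estimate the error of its DNF output---but there is a real gap in your \textsc{No}-case analysis.

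The DNF that the algorithm outputs has size bounded by its running time $T$, not by $s$. So the sentence ``the second bullet \ldots\ rules out every DNF of size $\le s$'' is not what you need; you need it to rule out every DNF of size $\le T$, i.e.\ $T \le 2^{k'\ell/16}$. You later call this an ``assumption,'' but you never discharge it, and for the choices of $\ell$ you allow it can fail. At the bottom of your interval, $\ell = \lceil 16\log s/k'\rceil$, one has $2^{k'\ell/16} \approx s$, whereas the hypothesized fast running time is $T = s^{\lambda\log\log s/(\mu(n)\log\log\log s)}$, and that exponent is $\gg 1$ whenever $\mu$ is small (already for $\mu\equiv 1$ it is $\Theta(\log\log n/\log\log\log n)$). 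Thus $T \gg s \approx 2^{k'\ell/16}$ and the \textsc{No}-case argument collapses: the output DNF may be larger than the threshold in \Cref{lem:given-set-cover-construct-NGamma-which-is-hard-to-approximate-by-DNFs}, and you cannot conclude it is $\Omega(1/N)$-far from $f$.

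The fix is exactly what the paper does: take $\ell$ at the \emph{top} of the interval, $\ell = \lfloor \log n/k\rfloor$, independently of $s$. Then $k\ell \le \log n$ (junta bound), and in the \textsc{No} case one checks directly that the runtime cutoff $N^{\Theta(\lambda k)}$ is below $2^{k'\ell/16}$: with $\ell = \log N/k$ this reduces to $\Theta(\lambda k^2) < k'$, which holds for small enough $\lambda$ since $k' > k^2$ at $k = \tfrac12\log\log N/\log\log\log N$. With $\ell$ fixed this way, $f$ no longer depends on $s$, which is also what the theorem statement requires (a single $f$ that is hard for every $s\le n^{\mu(n)}$).

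A smaller point: $N = \Theta(n)$ is not correct, since the reduction produces a function on $n_{SC}\cdot\ell$ bits with $\ell$ potentially polylogarithmic; you only get (and only need) $\log N = \Theta(\log n)$.
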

\Cref{thm:DT-DT,thm:junta-DNF} immediately follow as a consequence of this theorem by choosing $\mu(n)=1$.

\begin{proof}[Proof of~\Cref{thm:DNF-construction-hardness}]
We give a reduction from gapped set cover. Let $\mathcal{S}=(S,U,E)$ be an $N$-vertex $\left(k,\frac{1}{2}\left(\frac{\log N}{\log\log N}\right)^{1/k}\right)$-\setcover instance where $k$ is taken to be
$$
k=\frac{1}{2}\cdot\frac{\log\log N}{\log\log\log N}.
$$
Using \Cref{lem:given-set-cover-construct-NGamma-which-is-hard-to-approximate-by-DNFs} with $\ell=\log(N)/k$, we obtain the target function $\overline{\Gamma}_{\oplus \ell}:\zo^{N\ell}\to\zo$ and the distribution $\mathcal{D}_{\oplus\ell}$.\footnote{Technically, $\overline{\Gamma}_{\oplus \ell}$ is a function defined on $|S|\ell$ bits, but as $|S|\le N$ we can pad the inputs to be $N\ell$ bits long.} 

Let $\mu:\N\to\N$ be as in the theorem statement. Set $s\coloneqq (N\ell)^{\mu(N\ell)}$. We show that any algorithm for {\sc DT-Construction$(s,1/(4N))$} running in time $\ds {s^{\lambda\cdot\left(\frac{\log\log s}{\mu(N\ell)\log\log\log s}\right)}}$ for $0<\lambda\le 1/128$ can be used to solve $\mathcal{S}$ in time $N^{8\lambda\cdot k}$ even if the output of the algorithm is a DNF formula. 

We run the algorithm for {\sc DT-Construction$(s,1/(4N))$} on $\overline{\Gamma}_{\oplus \ell}$ and $\mathcal{D}_{\oplus\ell}$ and terminate it after
$$
N^{4\lambda\cdot \left(\frac{\log\log N}{\log\log\log N}\right)}=N^{8\lambda k}
$$
times steps. The algorithm outputs some DNF formula $F$. We estimate the error of $F$ and $\overline{\Gamma}_{\oplus \ell}$ over the distribution $\mathcal{D}_{\oplus\ell}$ and output ``\textsc{Yes}'' if the error is $\le 1/(4N)$ and ``\textsc{No}'' otherwise.

\paragraph{Runtime.}{
Constructing the circuit for $\overline{\Gamma}_{\oplus \ell}$ and the generator for $\mathcal{D}_{\oplus\ell}$ requires $\poly(N)$ time by \Cref{lem:given-set-cover-construct-NGamma-which-is-hard-to-approximate-by-DNFs}.
We can efficiently sample from the distribution $\mathcal{D}_{\oplus\ell}$ to efficiently estimate the error of the output decision tree via random samples. So the overall runtime of our algorithm is $\le N^{8\lambda k}$. 
}

\paragraph{Correctness.}{
To prove the reduction is correct, we show that if there is a size $k$ set cover for $\mathcal{S}$ then we output \textsc{Yes} with high probability and otherwise if $S$ requires a set cover of size at least
$$
\frac{1}{2}\left(\frac{\log N}{\log\log N}\right)^{1/k}
$$
then we output \textsc{No} with high probability. \bigskip

{\sc Yes} {\bf case: $\opt(S)\le k$.} {In this case, by \Cref{lem:given-set-cover-construct-NGamma-which-is-hard-to-approximate-by-DNFs}, $\overline{\Gamma}_{\oplus\ell}$ is computed exactly by a $\opt(S)\ell\le k\ell=\log N$-junta over $\mathcal{D}_{\oplus\ell}$. Hence, it is computed by a DNF of width $k\ell$. The size of this DNF is at most $2^{k\cdot\ell}=N\le (N\ell)^{\mu(N\ell)}=s$. To upper bound the runtime, we start by calculating
\begin{align}
    \frac{\log\log s}{\log\log\log s}&\le \frac{\log\paren{2\mu(N^2)\log N}}{\log\log\log N}\tag{$N\le s\le N^{2\mu(N^2)}$}\nonumber\\
    &\le \frac{\log\paren{(\log N)^2}}{\log\log\log N}\tag{Assumption on $\mu$: $2\mu(N^2)\le \log N$}\\
    &=4k. \label{eq:lglgs-power-upper-bound}
\end{align}
By our assumption on {\sc DT-Construction$(s,1/(4N))$}, in the yes case, the algorithm runs for
\begin{align*}
    s^{\lambda\cdot\left(\frac{\log\log s}{\mu(N\ell)\log\log\log s}\right)}&\le s^{4\lambda k/\mu(N\ell)}\tag{\Cref{eq:lglgs-power-upper-bound}}\\
    &=(N\ell)^{4\lambda k}\tag{$s=(N\ell)^{\mu(N\ell)}$}\\
    &\le N^{8\lambda k}\tag{$N\ell\le N^2$}
\end{align*}
time steps and outputs a size-$s$ DNF formula with error $\le 1/(4N)$. Therefore, our algorithm outputs \textsc{Yes} with high probability (where the probability is taken over the random sampling procedure). 
}\bigskip 

{\sc No} {\bf case: $\opt(S)>\frac{1}{2}\left(\frac{\log N}{\log\log N}\right)^{1/k}.$} {By \Cref{lem:given-set-cover-construct-NGamma-which-is-hard-to-approximate-by-DNFs} any DNF for $\overline{\Gamma}_{\oplus\ell}$ with size at most $2^{\opt(\mathcal{S})\ell/16}$ has error at least $1/(4N)$. The runtime bound on our algorithm serves as an upper bound on the size of the DNF built by the {\sc DT-Construction} algorithm. Therefore, it is sufficient to show that 
\begin{equation}
    \label{eq:to_prove_size_bound}
    N^{8\lambda \cdot k}<2^{\opt(S)\ell/16}
\end{equation}
because this bound shows that our DNF must have error at least $1/(4N)$.  
Recalling that $k=\frac{1}{2}\cdot\frac{\log\log N}{\log\log\log N}$, we have $(2k^2)^k<\frac{\log N}{\log\log N}$. We observe
\begin{align*}
    \opt(\mathcal{S})&>\frac{1}{2}\left(\frac{\log N}{\log\log N}\right)^{1/k}\\
    &>k^2\\
    &\ge 128\lambda k^2\tag{$128\lambda\le 1$}
\end{align*}
which shows $k\ell(8\lambda k)<\opt(\mathcal{S})\ell/16$. Exponentiating both sides and using the fact that $N=2^{k\ell}$ completes the calculation and establishes \Cref{eq:to_prove_size_bound}. It follows that our algorithm finds the error to be $>1/(4N)$ and outputs \textsc{No} with high probability.
}
}
\paragraph{Refuting randomized ETH.}{
We now have an algorithm for solving $\left(k,\frac{1}{2}\left(\frac{\log N}{\log\log N}\right)^{1/k}\right)$-\setcover in time $N^{8\lambda k}$ with high probability. By \Cref{thm:SC_hardness}, there is a constant $c\in(0,1)$ such that $\left(k,\frac{1}{2}\left(\frac{\log N}{\log\log N}\right)^{1/k}\right)$-\setcover cannot be solved with high probability in time $N^{ck}$. Therefore, we derive a contradiction for any $\lambda\le \min\{c/8,1/128\}$.
}
\end{proof}

\subsubsection{PAC learning hardness}
\label{subsub:PAC-learning-hardness}
{
In this section, we discuss corollaries of \Cref{thm:DNF-construction-hardness}. For a brief background on PAC learning and the definitions that we use, see \Cref{appendix:pac-learning}. 

\begin{corollary}[Hardness of learning decision trees, DNFs, and CNFs]
\label{cor:pac-learning-hardness-of-dts-dnfs}
Assuming randomized ETH, there is a constant $\lambda\in(0,1)$ such that decision trees cannot be distribution-free, properly PAC learned to accuracy $\eps=1/n$ in time $s^{\lambda\frac{\log\log s}{\log\log\log s}}$ where $s$ is the size of the target. The same result also holds for properly learning DNFs and CNFs with size-$s$ targets.
\end{corollary}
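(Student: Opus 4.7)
The plan is to reduce \textsc{DT-Construction} to proper PAC learning by simulating the sample oracle using the given circuit and generator. Appealing to \Cref{thm:DNF-construction-hardness} with $\mu\equiv 1$, we obtain a constant $\lambda_0 \in (0,1)$, a circuit for a function $f : \zo^n \to \zo$, and a generator for a distribution $\mathcal{D}$ such that $f$ is a $(\log n)$-junta under $\mathcal{D}$---and so is simultaneously a size-$n$ decision tree, DNF, and CNF under $\mathcal{D}$---and such that \textsc{DT-Construction}$(n, 1/n)$ on $(f,\mathcal{D})$ admits no $n^{\lambda_0 \log\log n / \log\log\log n}$-time algorithm, even when the output is allowed to be a DNF.

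Given any proper PAC learner $\mathcal{L}$ for decision trees on targets of size $s$ with accuracy $1/n$ and runtime $s^{\lambda \log\log s / \log\log\log s}$, I would simulate its sample oracle as follows: draw $\bx$ using the generator for $\mathcal{D}$ and label it by evaluating the circuit for $f$, at $\poly(n)$ cost per sample. Running $\mathcal{L}$ with target size $s = n$ and accuracy $\eps = 1/n$ returns a decision-tree hypothesis $h$ with $\dist_\mathcal{D}(h,f) \le 1/n$, which also serves as a valid DNF output for \textsc{DT-Construction}$(n, 1/n)$ since any decision tree is a DNF of at most the same size. The total reduction time is $\poly(n) \cdot n^{\lambda \log\log n / \log\log\log n}$, and choosing $\lambda$ strictly less than $\lambda_0$ makes this asymptotically smaller than $n^{\lambda_0 \log\log n / \log\log\log n}$ (since the gap $(\lambda_0 - \lambda) \log\log n / \log\log\log n$ in the exponent grows unboundedly with $n$), contradicting \Cref{thm:DNF-construction-hardness}. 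The identical argument works verbatim when $\mathcal{L}$ is instead a proper PAC learner for DNFs.

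For CNFs, I would apply the standard negation trick: the function $\bar f$ is also a $(\log n)$-junta---and thus a size-$n$ CNF under $\mathcal{D}$---and is computed by the circuit for $f$ with a NOT gate appended, so samples labelled by $\bar f$ can be simulated at the same $\poly(n)$ per-sample cost. Feeding these samples to the CNF learner yields a CNF $h$ with $\dist_\mathcal{D}(h, \bar f) \le 1/n$, and $\bar h$ is then a DNF with $\dist_\mathcal{D}(\bar h, f) \le 1/n$. This yields a DNF solution to \textsc{DT-Construction}$(n, 1/n)$ on $(f, \mathcal{D})$ within the same runtime, giving the same contradiction via \Cref{thm:DNF-construction-hardness}.

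The only bookkeeping obstacle is verifying that the proper-PAC sample complexity at target size $n$ and accuracy $1/n$ is $\poly(n)$ for each of these classes---immediate from the $\poly(n)$ VC dimension of size-$n$ decision trees, DNFs, and CNFs on $n$ variables---so that the $\poly(n)$-per-sample simulation overhead is fully absorbed by the slack between $\lambda$ and $\lambda_0$ in the super-polynomial exponent.
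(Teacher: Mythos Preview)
Your proof is correct and follows essentially the same approach as the paper: simulate the example oracle for the PAC learner using the circuit for $f$ and the generator for $\mathcal{D}$, then invoke \Cref{thm:DNF-construction-hardness} with $\mu\equiv 1$. The paper handles CNFs with a one-line ``symmetric argument,'' which is exactly the negation trick you spell out; your remarks on absorbing the $\poly(n)$ overhead and on VC dimension are more explicit than the paper's treatment but not substantively different (and the VC-dimension point is in fact unnecessary, since the learner's own runtime already upper-bounds its sample usage).
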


\begin{proof}
Let $\mathcal{L}$ be a distribution-free, proper learning algorithm for the class $\mathcal{T}$ of decision trees. We claim $\mathcal{L}$ can be used to solve {\sc DT-Construction}. In particular, let $f:\zo^n\to\zo$ and $\mathcal{D}$ be an instance of {\sc DT-Construction}$(s,1/n)$. We run the learning algorithm on $f$ and $\mathcal{D}$ and $\eps=1/n$. If $\mathcal{L}$ requests a random sample, we generate $x\sim\mathcal{D}$ using the generator for $\mathcal{D}$ and evaluate $f(x)$ using the circuit for $f$ and return $(x,f(x))$ to $\mathcal{L}$. Since generating a sample from $\mathcal{D}$ and evaluating the circuit for $f$ are both $\poly(n)$-time operations the overall runtime is dominated by the runtime of $\mathcal{L}$. \Cref{thm:DNF-construction-hardness} then implies the desired time bound by setting $\mu(n)=1$. 

If $\mathcal{L}$ is a learning algorithm for DNFs, we obtain the same hardness as in the decision tree case since any size-$s$ decision tree target is equivalently a size-$s$ DNF target. Moreover, \Cref{thm:DNF-construction-hardness} also applies when the output of the {\sc DT-Construction} algorithm is a DNF formula. A symmetric argument works similarly for CNFs.
\end{proof}

}

\subsubsection{Proof of~\Cref{thm:junta-junta}}
In this section, we observe that the number of relevant inputs to $\Gamma_{\mathcal{S}}$ exactly characterizes the set cover complexity of $\mathcal{S}$. As a result, hardness of approximating set cover can be directly translated into hardness of distribution-free, proper PAC learning $k$-juntas.  The next theorem formalizes this observation and was already implicit in \cite{ABFKP09}.
\begin{theorem}[Learning $k$-juntas is as hard as {\sc Set-Cover}]
\label{thm:learning-k-juntas-hard-as-set-cover}
Suppose there is a distribution-free PAC learning algorithm that runs in time $t(n,k)$ and learns the class of $k$-juntas over $\zo^n$ to accuracy $\eps=O(1/n)$ by hypotheses which are $g(k,n)$-juntas for some function $g:\N^2\to \N$ satisfying $k\le g(k,n)$. Then $(k,g(k,n))$-{\sc Set-Cover} can be solved with high probability in time $t(n,k)$.
\end{theorem}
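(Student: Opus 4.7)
The plan is a direct reduction: given a $(k, g(k,n))$-{\sc Set-Cover} instance $\mathcal{S}=(S,U,E)$ with $|S|=n$, I would construct the target function $\Gamma_{\mathcal{S}}$ and distribution $\mathcal{D}_{\mathcal{S}}$ from \Cref{def:D_S_and_Gamma_S} (a circuit for $\Gamma$ and a generator for $\mathcal{D}$ are produced in $\poly(N)$ time by the $\ell=1$ case of \Cref{lem:circuit-for-gamma-generator-for-D}), invoke the hypothesized junta learner on them, and read off a set cover from the relevant variables of the $g(k,n)$-junta it returns. Because the accuracy guarantee $\eps = O(1/n)$ must be smaller than the minimum nonzero mass $\tfrac{1}{2|U|}$ of $\mathcal{D}$, I would first pad the input with dummy coordinates (always set to $0$ under $\mathcal{D}$) until the effective dimension exceeds $|U|$ by a large enough constant factor; the learner's hypothesis can always be taken not to depend on these dummies.

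The crux is a ``set cover from junta'' observation: if a $g(k,n)$-junta $h$ satisfies $\dist_{\mathcal{D}}(h,\Gamma) < \tfrac{1}{2|U|}$, then $h$ must exactly agree with $\Gamma$ on every point of $\supp(\mathcal{D})$, since each such point carries mass at least $\tfrac{1}{2|U|}$. The set $C \subseteq [n]$ of coordinates on which $h$ depends is then a valid set cover of $\mathcal{S}$: we have $h(0^n) = \Gamma(0^n) = 0$ and $h(u) = \Gamma(u) = 1$ for every $u \in U$, so if $\mathcal{N}_{\mathcal{S}}(u) \cap C = \varnothing$ for some $u$, then $u$ and $0^n$ would agree on every coordinate of $C$, forcing $h(u) = h(0^n) = 0$---a contradiction. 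Hence $|C| \le g(k,n)$ gives a set cover of size at most $g(k,n)$.

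Combining the two ingredients: in the {\sc Yes} case ($\opt(\mathcal{S}) \le k$), \Cref{fact:Gamma-is-small-disjunction} tells us $\Gamma$ is in fact a monotone $k$-disjunction (and thus a $k$-junta) under $\mathcal{D}$, so the PAC learner succeeds with high probability and the extraction above produces a set cover of size $\le g(k,n)$. In the {\sc No} case ($\opt(\mathcal{S}) > g(k,n)$), the structural observation, applied contrapositively, rules out any $g(k,n)$-junta being $\tfrac{1}{2|U|}$-close to $\Gamma$, so whatever hypothesis the learner returns is caught by a random-sampling estimate of $\dist_{\mathcal{D}}(h,\Gamma)$, and we output \textsc{No}. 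The main subtlety is matching $\eps = O(1/n)$ against $\tfrac{1}{2|U|}$, which the padding step handles cleanly; all other overheads are absorbed into the dominating $t(n,k)$ term.
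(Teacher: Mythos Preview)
Your proposal is correct and follows essentially the same approach as the paper's (very terse) proof: construct $\Gamma_{\mathcal{S}}$ and $\mathcal{D}_{\mathcal{S}}$, run the learner, and accept exactly when the returned $g(k,n)$-junta achieves error below $\tfrac{1}{2|U|}$---your ``relevant variables form a set cover'' argument is precisely the junta analogue of the paper's \Cref{claim:depth_error-new}. Your padding step to force $\eps=O(1/n)<\tfrac{1}{2|U|}$ addresses a detail the paper leaves implicit (it simply takes $\eps=1/(4|S|)$); just be aware that after padding to dimension $n'$ the learner is only promised to return a $g(k,n')$-junta, so strictly you solve $(k,g(k,n'))$-\textsc{Set-Cover}---this is harmless for the intended applications (proper learning, where $g(k,n)=k$), and the paper's own proof has the same implicit dependence.
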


\begin{proof}
Let $\mathcal{S}=(S,U,E)$ be an instance of $(k,g(k,n))$-{\sc Set-Cover}. We construct the function $\Gamma:\zo^{|S|}\to\zo$ and the distribution $\mathcal{D}$ over $\zo^{|S|}$. Run the learning algorithm on $\Gamma$ and $\mathcal{D}$ with $\eps=1/(4|S|)$ for $t(k,|S|)$ time steps. It outputs some truth table representation of a junta. We output \textsc{Yes} if and only if this truth table has size at most $g(k,n)$ and has error at most $1/(4|S|)$. The correctness of the reduction follows from \Cref{fact:Gamma-is-small-disjunction}.
\end{proof}

\begin{corollary}
There is no distribution-free PAC learning algorithm for properly learning $k$-juntas to accuracy $\eps=O(1/n)$ over $\zo^n$ that runs in time:
\begin{itemize}
    \item[$\circ$] $n^{o(k)}$, assuming randomized $W[1]\neq \textnormal{FPT}$;
    \item[$\circ$] $O(n^{k-\lambda})$, for all $\lambda>0$, assuming randomized SETH. 
\end{itemize}
These results hold in the regime where $k\le n^c$ for some absolute constant $c<1$.
\end{corollary}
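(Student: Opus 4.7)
The plan is to apply~\Cref{thm:learning-k-juntas-hard-as-set-cover} with $g(k,n) = k$, which is the natural choice for \emph{proper} learning of $k$-juntas (the output hypothesis must itself be a $k$-junta). This directly converts any distribution-free proper PAC learner for $k$-juntas with accuracy $\eps = O(1/n)$ running in time $t(n,k)$ into an algorithm for $(k,k)$-{\sc Set-Cover} running in time $t(n,k)$, so the corollary reduces to invoking known lower bounds on parameterized {\sc Set-Cover}.

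The first observation is that $(k,k)$-{\sc Set-Cover} (the exact, ungapped version) is at least as hard as $(k,k+1)$-{\sc Set-Cover}: any algorithm for the former directly solves the latter, since the $(k,k+1)$-promise ($\opt(\mathcal{S}) \le k$ or $\opt(\mathcal{S}) > k+1$) only restricts the inputs on which the algorithm's output must be correct. Hence any lower bound for $(k,k+1)$-{\sc Set-Cover} carries over to $(k,k)$-{\sc Set-Cover}.

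For the first bullet, I would invoke~\Cref{thm:FPT-hardness-of-ungapped-set-cover}: assuming $W[1] \neq \mathrm{FPT}$, for any constant $c<1$ and $k \le n^c$, $(k,k+1)$-{\sc Set-Cover}---and hence $(k,k)$-{\sc Set-Cover}---cannot be solved in $|S|^{o(k)}$ time. Chaining through~\Cref{thm:learning-k-juntas-hard-as-set-cover} then yields an $n^{\Omega(k)}$ lower bound for properly PAC learning $k$-juntas, and the randomized version of the lower bound holds under randomized $W[1] \neq \mathrm{FPT}$. The argument for the second bullet is identical except that it invokes~\Cref{thm:SETH-hardness-of-ungapped-set-cover} in place of~\Cref{thm:FPT-hardness-of-ungapped-set-cover}: under (randomized) SETH, no $O(|S|^{k-\lambda})$ algorithm exists for $(k,k+1)$-{\sc Set-Cover}, which yields the $O(n^{k-\lambda})$ lower bound for the learning problem. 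Here $n = |S|$, and the parameter regime $k \le n^c$ is inherited directly from the set cover theorems.

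No substantive new content is needed beyond identifying the right instantiation $g(k,n) = k$ and pairing it with the appropriate parameterized set cover hardness theorems; the bulk of the work is already done by~\Cref{thm:learning-k-juntas-hard-as-set-cover}. The only minor check is that the reduction of~\Cref{thm:learning-k-juntas-hard-as-set-cover} lifts to randomized algorithms---which it does, since generating samples from $\mathcal{D}_{\mathcal{S}}$ and estimating the error of the learner's output from a few random samples are both routine randomized polynomial-time operations---so that the randomized hardness hypotheses suffice.
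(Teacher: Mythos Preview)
Your proposal is correct and follows essentially the same route as the paper: instantiate \Cref{thm:learning-k-juntas-hard-as-set-cover} with $g(k,n)=k$ for proper learning, then invoke \Cref{thm:FPT-hardness-of-ungapped-set-cover} and \Cref{thm:SETH-hardness-of-ungapped-set-cover} for the two bullets. The paper's proof is even terser---it simply says proper $k$-junta learning is ``equivalent to $(k,k+1)$-{\sc Set-Cover}'' and combines the theorems---so your explicit observation that $(k,k)$-{\sc Set-Cover} is at least as hard as $(k,k+1)$-{\sc Set-Cover} just spells out a step the paper leaves implicit.
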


\begin{proof}
By \Cref{thm:learning-k-juntas-hard-as-set-cover}, distribution-free properly PAC learning $k$-juntas is equivalent to $(k,k+1)$-{\sc Set-Cover}. The first bullet follows by combining \Cref{thm:learning-k-juntas-hard-as-set-cover,thm:FPT-hardness-of-ungapped-set-cover}. The second bullet follows by combining \Cref{thm:learning-k-juntas-hard-as-set-cover,thm:SETH-hardness-of-ungapped-set-cover}.
\end{proof}

\section{Lower bounds for {\sc DT-Estimation}} 
\label{section:lower_bounds_dt_estimation}

For our lower bounds for {\sc DT-Estimation}, we  have to consider decision trees that are allowed to {\sl abort}: 

\begin{definition} 
A $\delta$-abort decision tree $T$ under a distribution $\mathcal{D}$ is a decision tree with leaves labeled $\{0,1,\bot\}$ satisfying $\Prx_{\bx \sim \mathcal{D}}[T(\bx) = \bot] \le \delta$.   The distance between such a tree $T : \zo^n \to \{0,1,\bot\}$ and a function $f : \zo^n \to \zo$ under $\mathcal{D}$ is
\[ \dist_\mathcal{D}(T,f) \coloneqq \Prx_{\bx\sim\mathcal{D}}[T(\bx) \ne f(\bx) \text{ and $T(\bx) \ne \bot$}]. \] 
\end{definition}

\subsection{\Cref{lem:set_cover-reduction} for decision trees that abort} 

In this section we generalize~\Cref{lem:set_cover-reduction} to $\delta$-abort decision trees: 

\begin{lemma}
[\Cref{lem:set_cover-reduction} with aborts]
\label{lem:set_cover-reduction-with-aborts} 
Let $\mathcal{S}=(S,U,E)$ be an $N$-vertex set cover instance and let $\ell\in \N$.  If $T : (\zo^\ell)^n \to \zo$ is a decision tree of size $|T|< 2^{\opt(\mathcal{S}) {\ell}/40}$ that can abort with probability $\delta < 0.4$, then  $\dist_{\mathcal{D}_{\oplus\ell}}(T,\Gamma_{\oplus\ell})\ge 1/(20N)$.
\end{lemma}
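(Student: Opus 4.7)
The plan is to generalize the three-step proof of \Cref{lem:set_cover-reduction} so that every step accommodates leaves labeled $\bot$. The overall structure is unchanged: a size bound on $T$ will force a small average depth under $\mathcal{D}_{\oplus\ell}$, which will extract by restriction to a tree $T^\ast$ of small average depth over $\mathcal{D}$ with controlled distance and abort probability, which will in turn be ruled out by a structural argument involving the leftmost root-to-leaf path of $T^\ast$.

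First, I would prove an abort analogue of \Cref{claim:depth_error-new}: if $T^\ast:\zo^n\to\{0,1,\bot\}$ is a decision tree with abort probability strictly less than $\tfrac{1}{2}$ under $\mathcal{D}$ and $\Ex_{\bx\sim\mathcal{D}}[\depth_{T^\ast}(\bx)] < \opt(\mathcal{S})/2$, then $\dist_\mathcal{D}(T^\ast,\Gamma)\ge 1/(2N)$. The point is that $0^n$ has $\mathcal{D}$-mass $\tfrac{1}{2}$, so an abort probability below $\tfrac{1}{2}$ rules out $T^\ast(0^n)=\bot$, and the rest of the original argument then runs verbatim: the leftmost root-to-leaf path $C$ must have $|C|<\opt(\mathcal{S})$ by the depth bound, so $C$ is not a set cover, so some $u\in U$ follows the same path as $0^n$ and inherits the same non-abort label, yielding $\dist\ge 1/(2N)$ in the label-$0$ case and $\dist\ge\tfrac{1}{2}$ in the label-$1$ case.

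Second, I would extend \Cref{claim:extracting_small_dt-new} by tracking aborts in the extraction step. Choosing $\underline{j}\in[\ell]$ to minimize $\Ex_{\by\sim\mathcal{D}_{\oplus\ell}}[q_j(\by)]$ and sampling $\bz\sim(\zo^{\ell-1})^n$ as before, \Cref{prop:Dj_equals_D} gives
\[
\Ex_{\bz}\Big[\Ex_{\bx\sim\mathcal{D}}\big[q_{\underline{j}}(\ParComplete_{\underline{j}}(\bz,\bx))\big]\Big]\le d/\ell,\ \Ex_{\bz}\big[\Pr_\bx[\text{mismatch}]\big]\le\eps,\ \Ex_{\bz}\big[\Pr_\bx[\text{abort}]\big]\le\delta,
\]
and three Markov bounds plus a union bound produce a single $\underline{z}$ witnessing all three events simultaneously, provided the constants $c_1,c_2,c_3$ used in Markov satisfy $1/c_1+1/c_2+1/c_3<1$. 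This yields a restriction $T^\ast$ whose depth, distance, and abort probability are amplified by $c_1,c_2,c_3$ respectively. Third, \Cref{claim:avg_depth-new} goes through unchanged since its proof never looks at leaf labels: any size-$s$ tree (including one with $\bot$-leaves) satisfies $\Ex_{\by\sim\mathcal{D}_{\oplus\ell}}[\depth_T(\by)]\le 2\log s$.

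Combining the three ingredients, I would assume for contradiction that $\dist_{\mathcal{D}_{\oplus\ell}}(T,\Gamma_{\oplus\ell})<1/(20N)$. Step 3 gives average depth strictly less than $\opt(\mathcal{S})\ell/20$; step 2 then produces $T^\ast$ with $\dist<c_2/(20N)$, abort $\le c_3\delta$, and depth $<c_1\opt(\mathcal{S})/20$, and one checks that the strict inequalities in the hypotheses ($|T|<2^{\opt(\mathcal{S})\ell/40}$ and $\delta<0.4$) leave enough slack to pick $c_1,c_2$ slightly below $10$ and $c_3$ slightly below $5/4$ with $1/c_1+1/c_2+1/c_3<1$. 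This gives dist $<1/(2N)$, abort $<1/2$, depth $<\opt(\mathcal{S})/2$ for $T^\ast$, contradicting the first step. The main obstacle is precisely this tight constant budget: because $\delta$ is allowed up to $0.4$, the abort-amplification factor must stay below $5/4$, which consumes roughly $4/5$ of the union-bound budget and leaves only $\approx 1/5$ to split between the depth and distance amplifications. This tightness is exactly what dictates the numerical constants $1/40$, $1/20$, and $0.4$ in the lemma statement; once they are chosen so the three Markov bounds sum to strictly less than $1$, the rest of the argument is a routine adaptation of the abort-free proof.
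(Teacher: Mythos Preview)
Your approach matches the paper's exactly: abort versions of \Cref{claim:depth_error-new} and \Cref{claim:extracting_small_dt-new} (the latter via a three-way Markov union bound with constants $10$, $10$, $5/4$), \Cref{claim:avg_depth-new} unchanged, and the same final contradiction. One arithmetic slip: to get $1/c_1+1/c_2+1/c_3<1$ you need the $c_i$ slightly \emph{above} $10,10,5/4$, not below---the strict hypotheses $|T|<2^{\opt(\mathcal{S})\ell/40}$, $\delta<0.4$, and $\dist<1/(20N)$ then still absorb the slightly larger amplification factors (the paper itself elides this point, writing strict $<$ from Markov and letting the sum hit exactly $1$).
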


Since every depth-$d$ tree is a tree of size $2^d$,~\Cref{lem:set_cover-reduction-with-aborts} also holds for decision trees $T$ of depth $< \opt(\mathcal{S})\ell/40$. 

\paragraph{Outline of Proof.} As in the non-abort case, there are three main components to the proof of the above lemma: 

\begin{enumerate}[itemsep=0.1em]
    \item No $\delta$-abort decision tree with small average depth can approximate $\Gamma$ under $\mathcal{D}$ where $\delta < 1/2$ (\Cref{claim:depth_error-new-abort}).
    \item Any $\delta$-abort decision tree with small average depth that approximates $\Gamma_{\oplus \ell}$ under $\mathcal{D}_{\oplus \ell}$ can be used to construct a $\delta$-abort decision tree of much smaller average depth that approximates $\Gamma$ under $\mathcal{D}$ at the cost of a modest blowup in the size of $\delta$ (\Cref{claim:extracting_small_dt-new-abort}). This is the key claim. 
    \item Any small size decision tree must have small average depth with respect to $\mathcal{D}_{\oplus \ell}$. This claim is unchanged from the non-abort version. 
\end{enumerate}

Analogous to the non-abort case, these claims together imply that no $\delta$-abort decision tree with small average depth can approximate $\Gamma_{\oplus \ell}$ under $\mathcal{D}_{\oplus \ell}$. We need to provide slightly different claims and proofs for the first two items, but the last claim is completely independent of aborts, so we need not reprove it. 

\begin{claim}[Abort version of 
\Cref{claim:depth_error-new}]
\label{claim:depth_error-new-abort}
Let $T : \zo^n \to \zo$ be a $\delta$-abort decision tree with $\delta<1/2$ and $\mathcal{S}=(S,U,E)$ be an $N$-vertex set cover instance with $|S|=n$. If $\underset{\bx\sim \mathcal{D}}{\E}[\depth_T(\bx)]<\opt(\mathcal{S})/2$ then $\dist_{\mathcal{D}}(T,\Gamma)\ge 1/(2N)$.
\end{claim}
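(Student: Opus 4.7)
The plan is to mimic the original proof of Claim~\ref{claim:depth_error-new} closely, with a short case analysis on the label of the leaf reached by the leftmost root-to-leaf path, and to use the hypothesis $\delta < 1/2$ only to rule out the bad case where $T$ aborts at the all-zeros input. Assume for contradiction that $\E_{\bx\sim\mathcal{D}}[\depth_T(\bx)] < \opt(\mathcal{S})/2$ and $\dist_\mathcal{D}(T,\Gamma) < 1/(2N)$. As before, I will strengthen the latter to $\dist_\mathcal{D}(T,\Gamma) < 1/(2|U|)$, since every point in $\supp(\mathcal{D})$ has mass at least $1/(2|U|)$; in particular, under this strengthening no nonabort disagreement may occur on any single point of $\supp(\mathcal{D})$.

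Let $C \subseteq [n] = S$ be the set of variables queried on the leftmost root-to-leaf path (the path followed by $0^n$), ending at a leaf $L$ with label $b \in \{0,1,\bot\}$. I would then split into three cases according to $b$. If $b = \bot$, then $T(0^n) = \bot$, and since $\mathcal{D}$ places mass $1/2$ on $0^n$ we get $\Pr_{\bx\sim\mathcal{D}}[T(\bx) = \bot] \ge 1/2$, contradicting $\delta < 1/2$. If $b = 1$, then $T(0^n) = 1 \ne 0 = \Gamma(0^n)$ is a nonabort disagreement of probability $1/2$, which already exceeds $1/(2|U|)$. So we are left with $b = 0$.

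In the case $b = 0$, I follow the original argument: I claim $C$ is a valid set cover of $\mathcal{S}$. Indeed, if some $u \in U$ is not covered by $C$, then the encoding of $u$ has a $0$ in every coordinate of $C$, so $u$ follows the same leftmost path as $0^n$ and reaches the same leaf $L$. Thus $T(u) = 0 \ne 1 = \Gamma(u)$, a nonabort disagreement contributing mass $1/(2|U|)$ to $\dist_\mathcal{D}(T,\Gamma)$, contradicting the strengthened distance bound. Hence $|C| \ge \opt(\mathcal{S})$, and consequently
\[
\E_{\bx\sim\mathcal{D}}[\depth_T(\bx)] \;\ge\; \Pr_{\bx\sim\mathcal{D}}[\bx = 0^n]\cdot |C| \;=\; \frac{|C|}{2} \;\ge\; \frac{\opt(\mathcal{S})}{2},
\]
contradicting our initial assumption on the average depth.

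The only place where aborts play a substantive role is the case $b = \bot$, and the hypothesis $\delta < 1/2$ is exactly what is needed to exclude it, matching the $1/2$ mass $\mathcal{D}$ places on $0^n$. I do not expect a real obstacle here; the argument is robust to aborts because the key disagreement witnesses (either $0^n$ itself or an uncovered $u \in U$) are real nonabort outputs of $T$ by construction, so they contribute to $\dist_\mathcal{D}(T,\Gamma)$ in the abort-distance sense as well.
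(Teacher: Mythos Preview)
Your proposal is correct and follows essentially the same approach as the paper. The paper's proof is slightly more compressed---it first argues $\dist_{\mathcal{D}}(T,\Gamma)=0$, then uses $\delta<1/2$ to rule out $T(0^n)=\bot$, and concludes $T(0^n)=\Gamma(0^n)=0$---but your explicit three-way case split on the label $b\in\{0,1,\bot\}$ is equivalent and arrives at the same contradiction via the same set-cover argument on the leftmost path.
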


\begin{proof}
This proof is almost identical to that of~\Cref{claim:depth_error-new}. We provide the start of the proof and then refer the reader back to~\Cref{claim:depth_error-new} for the rest. 

Suppose that $\dist_{\mathcal{D}}(T,\Gamma)< 1/(2N) \leq 1/(2|U|)$. We note as before that each $x \in \text{supp}(\mathcal{D})$ has mass $\geq 1/(2|U|)$ under $\mathcal{D}$, so it must be that $\dist_{\mathcal{D}}(T,\Gamma)=0$. Since $0^n$ has weight $1/2$ under $\mathcal{D}$, $T(0^n) \neq \perp$ because $T$ can only abort with probability $<1/2$. It follows that $T(0^n) = \Gamma(0^n) = 0$. 

The rest of the proof is identical to that of~\Cref{claim:depth_error-new}
\end{proof}

For the next claim, we reuse the portions of~\Cref{claim:extracting_small_dt-new} that tell us that the restriction $T^*$ of $T$ is distance preserving and has small depth. We must show that $T^*$ also has a small abort probability.  

\begin{claim}[Abort version of \Cref{claim:extracting_small_dt-new}]
\label{claim:extracting_small_dt-new-abort}
Let $T : (\zo^\ell)^n\to\zo$ be a decision tree such that 
\[ \dist_{\mathcal{D}_{\oplus \ell}}(T,\Gamma_{\oplus \ell})\le \eps \quad \text{and}\quad \underset{\by\sim \mathcal{D}_{\oplus \ell}}{\E}[\depth_T(\by)]\le d \quad \text{and} \quad \Prx_{\by \sim \mathcal{D}_{\oplus \ell}}[T(\by) = \bot] \le \delta. \] 
Then there is a restriction $T^* : \zo^n \to \zo$ of $T$  satisfying 
\[ \dist_\mathcal{D}(T^*,\Gamma)\le 10\eps \quad \text{and} \quad 
\underset{\bx\sim \mathcal{D}}{\E}[\depth_{T^*}(\bx)]\le \frac{10d}{\ell} \quad \text{and} \quad \Prx_{\bx \sim \mathcal{D}}[T^*(\bx) = \bot] \le \frac{5}{4}\delta. \] 
\end{claim}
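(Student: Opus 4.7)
The plan is to generalize the proof of \Cref{claim:extracting_small_dt-new} by tracking one extra quantity---the abort probability---through essentially the same averaging-and-Markov argument. I would reuse the coordinate-split setup: for $j \in [\ell]$, let $q_j(y)$ count the variables of the form $(y_i)_j$ queried when running $T$ on $y$, so that $\depth_T(y) = \sum_{j\in[\ell]} q_j(y)$. By averaging, there exists $\underline{j}\in [\ell]$ with $\E_{\by\sim\mathcal{D}_{\oplus\ell}}[q_{\underline{j}}(\by)] \le d/\ell$, and we fix this coordinate for the remainder.

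Next, I would invoke \Cref{prop:Dj_equals_D} to pass from $\mathcal{D}_{\oplus\ell}$ to the equivalent distribution $\mathcal{D}_{\oplus\ell}^{\underline{j}}$, rewriting each of the three hypothesized bounds as a nested expectation of the form $\E_{\bz \sim \mathcal{U}_{n(\ell-1)}}[g(\bz)]$. The three relevant inner quantities are
\[
\Pr_{\bx\sim\mathcal{D}}\!\big[T(\ParComplete_{\underline{j}}(\bz,\bx))\neq \Gamma(\bx)\text{ and }T(\cdot)\ne\bot\big],\quad
\E_{\bx\sim\mathcal{D}}\!\big[q_{\underline{j}}(\ParComplete_{\underline{j}}(\bz,\bx))\big],\quad
\Pr_{\bx\sim\mathcal{D}}\!\big[T(\ParComplete_{\underline{j}}(\bz,\bx))=\bot\big],
\]
with expectations over $\bz$ bounded by $\eps$, $d/\ell$, and $\delta$ respectively. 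I would then apply Markov's inequality three times to argue that, for each quantity, the fraction of $\bz$'s on which that quantity exceeds $c_i$ times its mean is at most $1/c_i$.

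The final step is to pick $c_1,c_2,c_3$ so that $1/c_1+1/c_2+1/c_3 < 1$ and union-bound to produce a single $\underline{z}$ satisfying all three bounds; the desired restriction $T^*$ is then $T$ with the variables indexed by $\underline{j}$ in each block replaced according to $\ParComplete_{\underline{j}}(\underline{z},\cdot)$. Note that the abort condition transports cleanly through this restriction, since by definition $T^*(x) = \bot$ iff $T(\ParComplete_{\underline{j}}(\underline{z},x)) = \bot$, and similarly for distance and for the counter $q_{\underline{j}}$; so all three bounds for $T^*$ are immediate from the three bounds on the chosen $\underline{z}$.

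The main---and in fact only---obstacle beyond the original proof is calibrating the Markov constants. The stated targets $(10\eps,\,10d/\ell,\,\tfrac{5}{4}\delta)$ correspond to Markov constants $(10,10,\tfrac{5}{4})$ whose reciprocals sum to exactly $\tfrac{1}{10}+\tfrac{1}{10}+\tfrac{4}{5}=1$, so a naive union bound is tight rather than strict. I would address this by either loosening the conclusion by an arbitrary $o(1)$ amount (so the constants become $10+\gamma$, $10+\gamma$, $\tfrac{5}{4}+\gamma$), or---more in keeping with the style of the earlier proof---by observing that Markov is strict whenever the hypothesis bound is strict, and noting that an $\eps$-infinitesimal slack can always be introduced by working with $\eps+\gamma$ and taking $\gamma\to 0$ at the end. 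Either way, existence of a valid $\underline{z}$ follows, and the claim is established.
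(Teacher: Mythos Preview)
Your proposal is correct and matches the paper's proof essentially line for line: fix $\underline{j}$ by averaging, rewrite all three bounds via \Cref{prop:Dj_equals_D}, apply Markov with constants $(10,10,\tfrac{5}{4})$, and union-bound to find $\underline{z}$. Your worry about the reciprocals summing to exactly $1$ is unnecessary, since Markov with a \emph{strict} threshold---$\Pr[X > c\mu]$ rather than $\Pr[X \ge c\mu]$---is always strictly less than $1/c$ for nonnegative $X$ with $\E[X]\le\mu$ (if $\Pr[X>c\mu]>0$ then $\E[X]\ge\E[X\cdot\mathbf{1}[X>c\mu]]>c\mu\cdot\Pr[X>c\mu]$), so the three strict bounds sum to strictly less than $1$ and the paper simply uses this without comment.
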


\begin{proof}
Recall the definition of $q_j(y)$ in the proof of~\Cref{claim:extracting_small_dt-new}. For $j\in [\ell]$, $q_j(y)$ is the number of times that $T$, on input $y$, queries $(y_i)_j$ for some $i\in[n]$. Refer to \Cref{def:alternative_Dlj} for the definitions of $\mathcal{D}^{j}_{\oplus \ell}$ and $\ParComplete_{j}(z,x)$. The proof of \Cref{claim:extracting_small_dt-new} bounds the probabilities that $\dist_\mathcal{D}(T^*,\Gamma)$ or  $ 
\underset{\bx\sim \mathcal{D}}{\E}[\depth_{T^*}(\bx)]$ are too large using Markov's inequality. More concretely, we already know that for a particular $\underline{j}\in [\ell]$, 
\begin{align*}
  \underset{\bz\sim \mathcal{U}_{n(\ell-1)}}{\Pr}\left[\Prx_{\bx\sim\mathcal{D}}\left[ T(\ParComplete_{\underline{j}}(\bz,\bx))\neq \Gamma(\bx) \text{ and } T(\ParComplete_{\underline{j}}(\bz,\bx)) \neq \perp \right]>10\eps\right] <\frac{1}{10}\\
\text{and} \quad \underset{\bz\sim \mathcal{U}_{n(\ell-1)}}{\Pr}\left[\underset{\bx\sim \mathcal{D}}{\E}\left[q_{\underline{j}}(\ParComplete_{\underline{j}}(\bz,\bx))\right]>\frac{10d}{\ell}\right]<\frac{1}{10}
\end{align*}    
where all we have done is change the constant used in the application of Markov's inequality. 

It remains for us to bound the probability that the tree aborts. We compute
\begin{align*}
    \delta&\ge\underset{\by\sim \mathcal{D}_{\oplus \ell}}{\Pr}[T(\by) = \perp]\\
    &=\underset{\by\sim \mathcal{D}^{\underline{j}}_{\oplus \ell}}{\Pr}[T(\by)= \perp]\tag{\Cref{prop:Dj_equals_D}}\\
    &=\underset{\bz\sim \mathcal{U}_{n(\ell-1)}}{\E}\left[\Prx_{\bx\sim\mathcal{D}}\left[T(\ParComplete_{\underline{j}}(\bz,\bx)) = \perp \right]\right].\tag{Definition of $\mathcal{D}_{\oplus\ell}^j$}
\end{align*}

Again, we can apply Markov's inequality to deduce 
\begin{align*}
    \underset{\bz\sim \mathcal{U}_{n(\ell-1)}}{\Pr}\left[\Prx_{\bx\sim\mathcal{D}}\left[ T(\ParComplete_{\underline{j}}(\bz,\bx)) = \perp \right]>\frac{10}{8}\delta\right] <\frac{8}{10}.
\end{align*}
Thus, applying a union bound to all three of our Markov inequalities, we conclude that there exists a fixed $\underline{z} \in \{0,1\}^{n(\ell-1)}$ such that
\begin{align*}
\Prx_{\bx\sim\mathcal{D}}\left[T(\ParComplete_{\underline{j}}(\underline{z},\bx))\neq \Gamma(\bx) \text{ and } T(\ParComplete_{\underline{j}}(\underline{z},\bx)) \neq \perp\right] \le 10\eps\\
\text{and} \quad 
\underset{\bx\sim \mathcal{D}}{\E}\left[q_{\underline{j}}(\ParComplete_{\underline{j}}(\underline{z},\bx))\right]\le \frac{10d}{\ell}\\
\text{and} \quad 
\Prx_{\bx\sim\mathcal{D}}\left[ T(\ParComplete_{\underline{j}}(\underline{z},\bx)) = \perp \right]\leq \frac{10}{8}\delta
\end{align*}
The tree $T^*$ is formed by restricting $T$ according to $\underline{z}$ and $\underline{j}$. As before, the depth of an input $x$ is $\text{depth}_{T^*}(x) = q_{\underline{j}}(\ParComplete_{\underline{j}}(\underline{z},x))$. Thus, the claim follows. 
\end{proof}

Finally, we can directly apply ~\Cref{claim:avg_depth-new} without needing a special version for aborts. These three claims together allow us to complete the proof. 

\paragraph{Putting things together: Proof of~\Cref{lem:set_cover-reduction-with-aborts}.} 
Suppose there is some $\delta$-abort tree $T$ computing $\Gamma_{\oplus \ell}$ with $|T|\le 2^{\opt(\mathcal{S})\ell/40}$ and with $\delta < 0.4$. We show that $\dist(T,\Gamma_{\oplus \ell})\ge 1/(20N)$. Suppose for contradiction that $\dist(T,\Gamma_{\oplus\ell})<1/(20N)$. By \Cref{claim:avg_depth-new}, we have $\eyd{\depth_T(\by)}< 2\cdot\log\left(2^{\opt(\mathcal{S})\ell/40}\right)=\opt(\mathcal{S})\ell/20$. Then by \Cref{claim:extracting_small_dt-new-abort} there is a decision tree $T^*$ satisfying
\begin{align*}
  \dist_\mathcal{D}(T^*,\Gamma)< \frac{1}{2N}\qquad \underset{\bx\sim \mathcal{D}}{\E}[\depth_{T^*}(\bx)]< \frac{\opt(\mathcal{S})}{2} \qquad
\Prx_{\bx\sim\mathcal{D}}\left[ T(\ParComplete_{\underline{j}}(\bz,\bx)) = \perp \right] \leq \frac{5\delta}{4} < \frac{1}{2}. 
\end{align*}

But this contradicts \Cref{claim:depth_error-new-abort}. \hfill{$\square$}

\subsection{Hardness amplification for {\sc DT-Estimation}}

Next, we amplify the distance given by~\Cref{lem:set_cover-reduction-with-aborts} using the following harndess amplification lemma: 

\begin{lemma} 
[Precise restatement of \Cref{lem:hardness-amplification}] 
\label{lem:hardness-amplification-actual} 
Let $f : \zo^n \to \zo$ and $\mathcal{D}$ be such that $f$ is $\eps$-far from every depth-$d$ $\delta$-abort decision tree where $\delta \ge 0.34$ under $\mathcal{D}$. Consider \[ f^{\oplus m}(x^{(1)},\ldots,x^{(m)}) \coloneqq f(x^{(1)})\oplus \cdots \oplus f(x^{(m)}),\]  the $m$-fold XOR composition of $f$ and $\mathcal{D}^m$ be the corresponding distribution over $(\zo^n)^m$. For any $\gamma > 0$, by taking $m = \Theta(\log(1/\gamma)/\eps)$, we get that $f^{\oplus m}$ is $(\frac1{2} - \gamma)$-far from every decision tree of depth $\Omega(dm)$ under $\mathcal{D}^m$. 
\end{lemma}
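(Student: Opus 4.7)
The plan is to prove \Cref{lem:hardness-amplification-actual} by composing two existing XOR lemmas for distributional query complexity in sequence, matching the two-stage approach sketched in \Cref{sec:techniques}. The first stage uses the XOR lemma of Brody, Kim, Lerdputtipongporn, and Srinivasulu to amplify from the tiny initial distance $\eps$ (which in our applications will be as small as $1/n$) up to a constant gap $\tfrac{1}{2}-\Omega(1)$. The second stage uses Drucker's XOR lemma to drive that constant gap down to $\tfrac{1}{2}-\gamma$ for $\gamma$ exponentially small in the number of copies.

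\emph{First stage.} Set $m_1 = \Theta(1/\eps)$. Our hypothesis is exactly what the BKLS20 XOR lemma requires: $f$ is $\eps$-far, under $\mathcal{D}$, from every depth-$d$ decision tree that is allowed to abort with probability $\delta \geq 0.34$, with distance measured on non-aborts. Applying their lemma yields that $f^{\oplus m_1}$ is $(\tfrac{1}{2}-c_0)$-far, under $\mathcal{D}^{m_1}$, from every depth-$\Omega(dm_1)$ decision tree (with a comparable abort tolerance), for some absolute constant $c_0 > 0$. This is precisely why the lemma statement demands the abort-version hypothesis: a direct amplification starting only from a bound against non-aborting trees cannot tolerate $\eps$ as small as $1/n$, whereas the abort version with $\delta=\Omega(1)$ can, and this is exactly what \Cref{lem:set_cover-reduction-with-aborts} delivers in our reduction.

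\emph{Second stage.} Let $g \coloneqq f^{\oplus m_1}$ and $d' = \Omega(dm_1)$. After the first stage, $g$ is $(\tfrac{1}{2}-c_0)$-far from depth-$d'$ decision trees under $\mathcal{D}^{m_1}$. Now set $m_2 = \Theta(\log(1/\gamma))$, with the hidden constant depending on $c_0$. Drucker's XOR lemma converts constant distance into distance exponentially close to $\tfrac{1}{2}$; applied to $g$ it gives that $g^{\oplus m_2}$ is $(\tfrac{1}{2}-\gamma)$-far, under $(\mathcal{D}^{m_1})^{m_2}$, from every decision tree of depth $\Omega(d' m_2)$. Since XOR composition is associative we have $g^{\oplus m_2} = f^{\oplus m_1 m_2}$ and $(\mathcal{D}^{m_1})^{m_2} = \mathcal{D}^{m_1 m_2}$, so taking $m = m_1 m_2 = \Theta(\log(1/\gamma)/\eps)$ and chaining the depth bounds yields the claimed $(\tfrac{1}{2}-\gamma)$-hardness against depth-$\Omega(dm)$ decision trees under $\mathcal{D}^m$.

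The main obstacle will be lining up the parameters and the abort tolerances across the two XOR lemmas, which were proved independently with incomparable parameter regimes. In particular, BKLS20 produces hardness against trees that may still be allowed to abort (with some constant probability), while Drucker's lemma is most naturally stated against ordinary (non-aborting) trees; the composition step therefore requires either observing that any ordinary depth-$d'$ tree is a fortiori an abort-tree (so the weaker ``no-abort'' hardness conclusion from BKLS20 already suffices to invoke Drucker) or using the standard abort-tolerant variant of Drucker's lemma. A secondary, largely bookkeeping issue is verifying that the product distribution $\mathcal{D}^m$ used in the theorem is the one naturally produced by iterating the product construction, which follows from associativity of product measures.
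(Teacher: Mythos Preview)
Your proposal is correct and follows essentially the same two-stage approach as the paper: first apply the BKLS20 XOR lemma with $m_1=\Theta(1/\eps)$ to get constant distance, then apply Drucker's XOR lemma with $m_2=\Theta(\log(1/\gamma))$ to reach $\tfrac{1}{2}-\gamma$, and set $m=m_1m_2$. One cosmetic difference: the paper's statement of the BKLS20 lemma already outputs hardness against \emph{ordinary} (non-aborting) trees with the specific constant $\tfrac{1}{800}$ rather than $\tfrac{1}{2}-c_0$ against abort trees, so the ``lining up abort tolerances'' obstacle you flag turns out to be a non-issue in the paper's formulation---Drucker's lemma can be invoked directly.
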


Note that the probability of error $\epsilon$ is taken over inputs that \emph{do not} abort. Thus, this statement is weaker than that of~\Cref{lem:hardness-amplification}. We need to allow the possibility of aborting in order to apply \cite{BKLS20}'s lemma. The proof of~\Cref{lem:hardness-amplification-actual} consists of two parts, each of which introduces another layer of XOR composition in order to amplify the error. First, we amplify the error from $\epsilon$ to a constant $O(1)$ and then from $O(1)$ to exponentially close to $\frac{1}{2}$. Each of these two steps uses an XOR lemma from~\cite{BKLS20}
and~\cite{Dru12} respectively. We now state these lemmas and then proceed with the proof of~\Cref{lem:hardness-amplification-actual}. 


\begin{lemma}[Lemma 1 of~\cite{BKLS20}]
\label{lem:brody} 
Let $f: \zo^n \to \zo$ and $\mathcal{D}$ be such that $f$ is $\eps$-far from every depth-$d$ $\delta$-abort tree where $\delta\ge 0.34$ under $\mathcal{D}$.  By taking $m = \Theta(1/\eps)$, we get that $f^{\oplus m}$ is $\frac1{800}$-far from every decision tree of depth $\Omega(dm)$ under $\mathcal{D}^m$. 
\end{lemma}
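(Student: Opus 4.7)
I would prove Lemma~\ref{lem:brody} by contrapositive: starting from a depth-$D$ decision tree $T$ satisfying $\dist_{\mathcal{D}^m}(T, f^{\oplus m}) < 1/800$ with $D = \Omega(dm)$ for a small enough constant, I would extract a depth-$d$ $\delta$-abort decision tree $T'$ with $\delta \ge 0.34$ computing $f$ under $\mathcal{D}$ with conditional error $< \eps$ on non-aborts, contradicting the hypothesis on $f$.

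The reduction follows the standard ``plug and simulate'' template for direct-product theorems in query complexity. On input $x \sim \mathcal{D}$, the randomized tree $T'$ first samples a coordinate $i^* \in [m]$ uniformly, together with auxiliary samples $z_j \in \zo^n$ for each $j \neq i^*$. It then simulates $T$ on the composite input placing $x$ at block $i^*$ and $z_j$ at block $j$: queries to block $i^*$ are forwarded to $x$ and charged against $T'$'s depth budget, queries to the other blocks are answered internally for free (using $T'$'s internal, non-query randomness), and $T'$ aborts if $T$'s query count into block $i^*$ exceeds $d$. Since $T$ has total depth at most $D$, the expected number of queries into a uniformly chosen block is $O(d)$, so Markov's inequality drives the abort probability below any desired constant, in particular below $0.66$ as required.

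The central difficulty is that $T$'s output is a guess for $f(x) \oplus \bigoplus_{j \neq i^*} f(z_j)$ rather than for $f(x)$, so $T'$ must cancel the parity $\bigoplus_{j \neq i^*} f(z_j)$. Naively guessing these bits succeeds only with exponentially small probability, so the key idea is to bake the cancellation into the samples: draw $z_j$ from $\mathcal{D}$ conditioned on $f(z_j)=0$, which $T'$ can do without spending any queries to $x$ since the conditioning uses only its internal randomness and full access to $f$. This forces the parity to vanish and reduces $T$'s output to a direct estimate of $f(x)$. The price is that the non-$i^*$ blocks are no longer distributed as $\mathcal{D}$, which one handles via a hybrid argument that interpolates block by block between $\mathcal{D}^m$ and the conditioned distribution and shows that $T$'s accuracy on $\mathcal{D}^m$ transfers, with controlled loss, to the target hybrid.

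The main obstacle is the simultaneous calibration of three parameters --- the abort probability, the per-step hybrid loss, and the final conditional error $\eps$ --- against the depth budget. The concrete constants $1/800$ and $0.34$ in the statement should fall out of optimizing this trade-off, and the choice $m = \Theta(1/\eps)$ is precisely what lets the total hybrid error stay bounded while $T$'s advantage per coordinate accumulates into a non-trivial estimate of $f(x)$.
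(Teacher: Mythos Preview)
First, note that the paper does not prove this lemma at all: it is quoted verbatim as Lemma~1 of \cite{BKLS20} and used as a black box in the proof of \Cref{lem:hardness-amplification-actual}. So there is no in-paper argument to compare against, and your proposal must stand on its own.

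On its own, your outline has a genuine gap in the error analysis. Your abort rule (abort when block $i^*$ receives more than $d$ queries) controls the abort probability via Markov, but it does nothing to drive $T'$'s non-abort error from the $1/800$ you inherit from $T$ down to $\eps$. Concretely: once the parity $\bigoplus_{j\ne i^*} f(z_j)$ is cancelled, $T'$ errs exactly when $T$ errs on an input distributed as $\mathcal{D}^m$, so $\dist_{\mathcal{D}}(T',f)\le \Pr[T\text{ wrong}]<1/800$, \emph{independently of $m$}. This is not $<\eps$ when $\eps$ is small, which is the whole regime of interest. Your final paragraph asserts that $m=\Theta(1/\eps)$ ``lets $T$'s advantage per coordinate accumulate,'' but in your construction $T'$ plants $x$ at one coordinate and runs $T$ once; nothing accumulates. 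The hybrid you propose can at best show that $T$'s error on the conditioned distribution is within $O(m\eps)=O(1)$ of its error on $\mathcal{D}^m$---still a constant, not $\eps$.

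Two further remarks. First, what you call the ``central difficulty'' (cancelling $\bigoplus_{j\ne i^*} f(z_j)$) is in fact trivial here: $f$ is a fixed known function and the $z_j$ are $T'$'s own randomness, so $T'$ simply evaluates each $f(z_j)$ and XORs. No conditioning on $f(z_j)=0$ and no hybrid is needed for this step. Second, a $\delta$-abort tree has abort probability \emph{at most} $\delta$, so the target is abort probability $\le 0.34$, not $\le 0.66$.

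The missing idea is that the abort mechanism must absorb $T$'s errors, not just its depth overflows. In the \cite{BKLS20}/\cite{BB19} argument, the hypothesis ``every depth-$d$ $\delta$-abort tree is $\eps$-far from $f$'' is applied block by block to the $\Omega(m)$ lightly-queried coordinates, and the per-block $\eps$ uncertainty is shown to compound across the XOR; it is in this compounding step, not in a single simulation, that $m=\Theta(1/\eps)$ does its work and the constants $1/800$ and $0.34$ emerge. Your plug-and-simulate template does not access this mechanism.
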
 


\begin{lemma}[Theorem 1.3 of~\cite{Dru12}]
\label{lem:drucker}
Let $f: \zo^n \to \zo$ and $\mathcal{D}$ be such that $f$ is $\eps$-far from every depth-$d$ decision tree under $\mathcal{D}$.  For every $m \in \N$ and $\alpha \in [0,1]$, we get that $f^{\oplus m}$ is 
\[ \frac{1}{2} \big(1-(1-2\eps+6\alpha\ln(2/\alpha)\eps)^m\big)\]
far from every decision tree of depth $\alpha \eps d m$ under $\mathcal{D}^m$.
\end{lemma}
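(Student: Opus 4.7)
The plan is to argue the contrapositive: assuming a depth-$D := \alpha \eps d m$ tree $T$ approximates $f^{\oplus m}$ under $\mathcal{D}^m$ strictly better than the stated bound, I will extract from $T$ a depth-$d$ tree $T^\ast$ that approximates $f$ under $\mathcal{D}$ with distance strictly less than $\eps$, contradicting the hypothesis on $f$. It is convenient to rephrase in terms of \emph{advantage} $\beta(T', g, \mathcal{D}') := 1 - 2\dist_{\mathcal{D}'}(T', g)$, so the contradiction hypothesis becomes $\beta(T, f^{\oplus m}, \mathcal{D}^m) > \bigl(1 - 2\eps + 6\alpha\ln(2/\alpha)\eps\bigr)^m$, while the target is $\beta(T^\ast, f, \mathcal{D}) > 1 - 2\eps$.

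The first step is a random-restriction reduction to a single block. For each $i \in [m]$ and $\rho \in (\zo^n)^{m-1}$, let $T_i^\rho : \zo^n \to \zo$ denote the tree obtained from $T$ by hardwiring all coordinates $j \neq i$ to $\rho^{(j)}$. Since $f^{\oplus m}(x^{(1)},\dots,x^{(m)}) = f(x^{(i)}) \oplus \bigoplus_{j \neq i} f(\rho^{(j)})$, the function $T_i^\rho(x) \oplus c(\rho)$ with $c(\rho) := \bigoplus_{j \neq i} f(\rho^{(j)})$ is a candidate estimator for $f$. Its query complexity on a given $x$ equals $q_i(\rho, x)$, the number of times the root-to-leaf path of $T$ on $(x^{(1)},\dots,x^{(m)})$ queries block $i$, and the total-depth constraint on $T$ yields $\Ex_{i \sim [m], \rho \sim \mathcal{D}^{m-1}, x \sim \mathcal{D}}[q_i(\rho, x)] \leq D/m = \alpha \eps d$.

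The heart of the proof is a direct-product inequality for advantage, stating (up to lower-order terms) that
\[ \Ex_{i, \rho}\!\bigl[\beta\bigl(T_i^\rho \oplus c(\rho),\, f,\, \mathcal{D}\bigr)\bigr] \;\geq\; \beta(T, f^{\oplus m}, \mathcal{D}^m)^{1/m}. \]
I would establish this by induction on $m$, using the logarithm of the advantage as a potential so that the contribution of a single query is tracked multiplicatively across the $m$ coordinates, leveraging the product structure of both $\mathcal{D}^m$ and the XOR. Given such an inequality, a combination of Markov's inequality on the expected depth and a union bound selects a pair $(i^\ast, \rho^\ast)$ for which $T_{i^\ast}^{\rho^\ast}$ has small depth on most inputs and advantage exceeding $1 - 2\eps + 6\alpha \ln(2/\alpha)\eps$ in computing $f$. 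Pruning the $(\leq \alpha)$-fraction of surviving deep paths costs at most $\alpha \ln(2/\alpha)\eps$ in advantage (the $\ln(2/\alpha)$ factor arising from the entropy cost of truncating an integer-valued depth distribution with bounded mean), leaving the desired depth-$d$ tree $T^\ast$ with $\beta(T^\ast, f, \mathcal{D}) > 1 - 2\eps$ and yielding the contradiction.

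The main obstacle is the multiplicative direct-product inequality above: a naive hybrid / chain-rule argument only yields an \emph{additive} bound of the form $\Ex[\beta_{i,\rho}] \geq \beta - O(1/m)$, which is vacuous once $\beta$ is small and cannot possibly produce the product form $\beta^{1/m}$ demanded in the conclusion. Getting the multiplicative statement, which is the core contribution of~\cite{Dru12}, is what is responsible both for the exponent $m$ and for the precise $6\alpha \ln(2/\alpha)\eps$ loss term; correctly tracking how a single query to a deep block interacts with the conditional advantage on the other blocks is the crux of the argument and the place where the specific constants are determined.
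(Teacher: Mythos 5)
There is a genuine gap here, and it sits exactly where you say it does. Your proposal reduces the lemma to a multiplicative direct-product inequality of the form $\Ex_{i,\rho}[\beta(T_i^\rho\oplus c(\rho),f,\mathcal{D})]\ge \beta(T,f^{\oplus m},\mathcal{D}^m)^{1/m}$, and then you state that you ``would establish this by induction on $m$, using the logarithm of the advantage as a potential.'' That sentence is the entire theorem. As you yourself note two paragraphs later, the naive hybrid argument gives only an additive loss, and the multiplicative statement is ``the core contribution of~\cite{Dru12}''; Drucker's actual argument is not a log-potential induction over blocks but a considerably more delicate analysis (conditioning on the transcript of queries, a resampling/rejection argument in the style of Shaltiel's and Klauck--\v{S}palek--de Wolf's direct-product proofs, and a careful accounting of how querying a ``cheap'' block can leak information about the ``expensive'' ones). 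The $6\alpha\ln(2/\alpha)\eps$ term falls out of that accounting, not out of an entropy bound on a truncated depth distribution. So what you have written is a restatement of the theorem together with an accurate description of why the obvious approaches fail, not a proof.

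For calibration: the paper does not prove this lemma either. It is quoted verbatim as Theorem~1.3 of~\cite{Dru12}, and the only original content in the paper's treatment is the short averaging argument showing that the statement for deterministic depth-$d$ trees is equivalent to Drucker's original statement for randomized depth-$d$ trees (fix the random string $\br$; each fixing gives a deterministic depth-$d$ tree that is $\eps$-far from $f$, so the expectation over $\br$ is also at least $\eps$). If your intent was to justify the lemma as used in the paper, the right move is to cite Drucker and supply that randomized-to-deterministic reduction, which your write-up omits. If your intent was to reprove Drucker's theorem, the central inequality still needs a proof.
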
 

Note that the original version of~\Cref{lem:drucker} in \cite{Dru12} holds for randomized decision trees rather than deterministic ones; however, the above version is equivalent. If $f$ is $\eps$-far from all depth-$d$ randomized decision trees, then clearly it is $\eps$-far from all deterministic ones since randomness can only add power. On the other hand, suppose $f$ is $\eps$-far from all depth-$d$ deterministic decision trees. Consider a depth-$d$ randomized decision tree $T(x,r)$ that in addition to $x$ takes in a random string~$r$.  The distance between $T$ and $f$ is given by
\[ \underset{\br}{\E} [\Prx_{\bx} [T(\bx,\br) \neq f(\bx)]]. \]
For each fixed $r$, we have that $\Pr[T(\bx,r) \neq f(\bx)]$ must be at least an $\eps$ fraction of total inputs. Thus, by linearity of expectation, the randomized decision tree must also be $\eps$-far from $f$. 


\begin{proof}[Proof of~\Cref{lem:hardness-amplification-actual}] 
Consider $f$ and $\mathcal{D}$ as in the statement of~\Cref{lem:hardness-amplification-actual}, and simply apply~\Cref{lem:brody} with $m_1 = \Theta(1/\eps)$. What results is a function $f^{\oplus m_1}$ that is $\frac{1}{800}$-far from every decision tree of depth $\Omega(d m_1)$ under $\mathcal{D}^{m_1}$.  Next, apply~\Cref{lem:drucker} to $f^{\oplus m_1}$ with $m_2 = \Theta(\log(1/\gamma))$, $\eps = \frac1{800}$, and by choosing $\alpha$ such that $6\alpha\ln(2/\alpha)=1$. Then, simplifying the expression in~\Cref{lem:drucker}, we get that $(f^{\oplus m_1})^{\oplus m_2}$ is 
\begin{align*}
    \frac{1}{2} \big(1-\left(1-\lfrac{1}{800}\right)^{m_2}\big) 
    &=\frac{1}{2} - \frac{1}{2}\left(\frac{799}{800}\right)^{\Theta(\log(1/\gamma))}\\
    &\ge \frac{1}{2}- 2^{-\log(1/\gamma)}\\
    &= \frac{1}{2} - \gamma
\end{align*}
far from every decision tree of depth $\Omega(d m_1 m_2)$ under $(\mathcal{D}^{m_1})^{m_2}$.  

Globally, we define $m = m_1 \cdot m_2 = \Theta(\log(1/\gamma)/\eps)$ so that $(f^{\oplus m_1})^{\oplus m_2}= f^{\oplus m}$. Then, $f^{\oplus m}$ is $\frac{1}{2}- \gamma$ far from decision trees of depth $\Omega(dm) = \Omega(d \log(1/\gamma)/\eps)$ under $\mathcal{D}^m$ as desired. 
\end{proof} 




\begin{figure}[h]
\begin{center}
\forestset{
 default preamble={
 for tree={
  circle,
  minimum size=0.7cm,
  inner sep=0.1pt,
  draw,
  align=center,
  anchor=north,
  fill=white,
  l sep=1cm,
  s sep=0.2cm,
 }
 }
}
\begin{forest}
[{},name=topgate, s sep=0.4cm,circle split,
     [$\lor$,name=lormain1,s sep=0.2cm, calign=center,
        [{},name=oplus11, s sep=0.4cm, inner sep=-0.0cm,circle split,
            [$y_{111}$,name=y111,tier=inputl,draw=white]
            [$y_{121}$,name=y1l1,tier=inputl,draw=white]
        ]
        [{},name=ldots11,draw=white,before computing xy={s=-30}]
        [$\ldots$,name=ldots21,draw=white,before computing xy={s=0}]
        [{},name=ldots31,draw=white,before computing xy={s=30}]
        [{},name=oplus21,s sep=0.4cm,inner sep=-0.0cm,circle split,
            [$y_{n11}$,name=yn11,tier=inputl,draw=white]
            [$y_{n21}$,name=ynl1,tier=inputl,draw=white]
        ]
     ]
     [{},name=ldotsmiddleleft,draw=white,before computing xy={s=-45}]
     [$\ldots$,name=ldotsmiddle,draw=white,before computing xy={s=0}]
     [{},name=ldotsmiddleright,draw=white,before computing xy={s=45}]
     [$\lor$,name=lormain2,s sep=0.2cm, calign=center,
        [{},name=oplus12, s sep=0.4cm, inner sep=-0.0cm,circle split,
            [$y_{11m}$,name=y112,tier=inputl,draw=white]
            [$y_{12 m}$,name=y1l2,tier=inputl,draw=white]
        ]
        [{},name=ldots12,draw=white,before computing xy={s=-30}]
        [$\ldots$,name=ldots22,draw=white,before computing xy={s=0}]
        [{},name=ldots32,draw=white,before computing xy={s=30}]
        [{},name=oplus22,s sep=0.4cm,inner sep=-0.0cm,circle split,
            [$y_{n1m}$,name=yn12,tier=inputl,draw=white]
            [$y_{n2 m}$,name=ynl2,tier=inputl,draw=white]
        ]
     ]
 ]
 \draw[black] (-1.75,-0.75) .. controls +(south east:1cm) and +(south west:1cm) .. (1.75,-0.75);
 \draw[black] (-4.25,-2.75) .. controls +(south east:0.5cm) and +(south west:0.5cm) .. (-2,-2.75);
 \node[xshift=-0.65cm,yshift=0.25cm] at (-2,-0.75) {fan-in $m$};
 \node[xshift=-0.75cm,yshift=0cm] at (-4.25,-2.5) {fan-in $n$};
 \node[draw,black,circle split,rotate=90,inner sep=-0.0cm,minimum size=0.7cm] at (topgate) {};
 \node[draw,black,circle split,rotate=90,inner sep=-0.0cm,minimum size=0.7cm] at (oplus21) {};
 \node[draw,black,circle split,rotate=90,inner sep=-0.0cm,minimum size=0.7cm] at (oplus11) {};
 \node[draw,black,circle,inner sep=-0.0cm,minimum size=0.7cm] at (lormain1) {};
  \node[draw,black,circle split,rotate=90,inner sep=-0.0cm,minimum size=0.7cm] at (oplus22) {};
 \node[draw,black,circle split,rotate=90,inner sep=-0.0cm,minimum size=0.7cm] at (oplus12) {};
 \node[draw,black,circle,inner sep=-0.0cm,minimum size=0.7cm] at (lormain2) {};
\end{forest}
\caption{A depth-$3$ circuit for $(\Gamma_{\oplus 2})^{\oplus m}$ consisting of one top gate that is a PARITY connected to $m$ independent copies of the circuit for $\Gamma_{\oplus 2}$.}
\label{fig:circuit-for-GammaLParitym}
\end{center}
\end{figure}
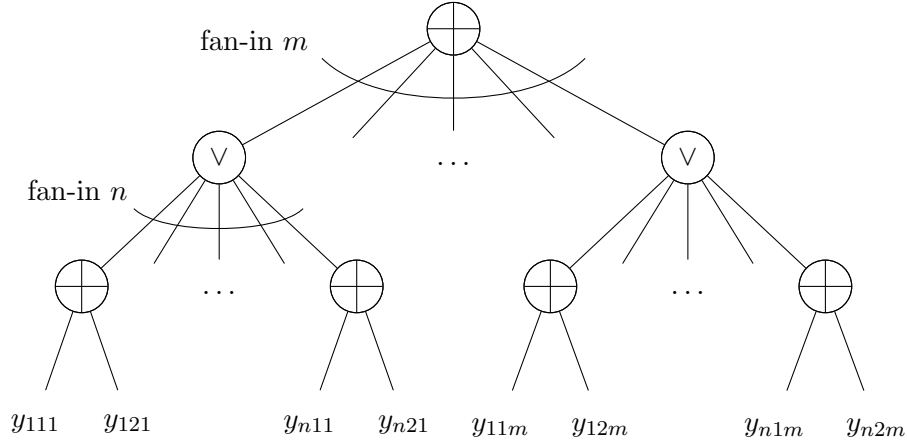

\subsection{Proof of~\Cref{thm:test}}

With~\Cref{lem:set_cover-reduction-with-aborts,lem:hardness-amplification-actual} in hand, we are now ready to prove~\Cref{thm:test}.  Given a size-$N$ instance $\mathcal{S}$ of $(k,k')$-\setcover with $n$ sets, we apply~\Cref{lem:Set-Cover-to-DTs,lem:set_cover-reduction-with-aborts} with $\ell = 2$ to obtain a $\poly(N)$-time reduction that produces a function $\Gamma_{\oplus 2} : \zo^{2n}\to \zo$ and the generator for a distribution $\mathcal{D}_{\oplus 2}$ over $\zo^{2n}$ satisfying:
\begin{enumerate} 
\item[$\circ$] If $\opt(\mathcal{S}) \le k$, then $\Gamma_{\oplus 2}$ is a $2k$-junta under $\mathcal{D}_{\oplus 2}$. 
\item[$\circ$] If $\opt(\mathcal{S}) > k'$, then for any $\delta < 0.4$, any $\delta$-abort decision tree of depth $k'/20$ is $\Omega(\frac1{N})$-far from $\Gamma_{\oplus 2}$ under $\mathcal{D}_{\oplus 2}$. 
\end{enumerate} 

Next, we consider $(\Gamma_{\oplus 2})^{\oplus m}$ and $(\mathcal{D}_{\oplus 2})^m$ where $m = \Theta(N^2)$:  
\begin{itemize}
    \item[$\circ$] If $\opt(\mathcal{S})\le k$, then $(\Gamma_{\oplus 2})^{\oplus m}$ is a $2km$-junta under $(\mathcal{D}_{\oplus 2})^m$.  Such a junta can be computed by a decision tree of depth $d\coloneqq 2km$. 
    \item[$\circ$] If $\opt(\mathcal{S}) > k'$, then by~\Cref{lem:hardness-amplification-actual}, $(\Gamma_{\oplus 2})^{\oplus m}$ is $(\frac1{2} - 2^{-N})$-far from decision trees of depth $d' \coloneqq \Omega(k'm)$ under $(\mathcal{D}_{\oplus 2})^m$.  
\end{itemize}

Note that the circuit representation for $(\Gamma_{\oplus 2})^{\oplus m}$ and generator for $(\mathcal{D}_{\oplus 2})^m$ can be constructed in $\poly(N)$ time from those for $\Gamma_{\oplus 2}$ and $\mathcal{D}_{\oplus 2}$ by simply feeding $m$ copies of $f$ into an XOR gate and by using $m$ copies of $\mathcal{D}$. See \Cref{fig:circuit-for-GammaLParitym} for an illustration of this circuit. Since $(\Gamma_{\oplus 2})^{\oplus m}$ is a function over $2nm \le O(N^3)$ variables and $d' \ge \Omega(d\log N)\ge \Omega(d\log d)$, \Cref{thm:test} now follows by applying~\Cref{thm:nonparameterized-SC} with $\beta \in (0,1)$ being any constant.

\section{Proof of \Cref{thm:proper}}

The PAC learning lower bound from \Cref{subsub:PAC-learning-hardness} applies to properly learning decision trees. In this setting, the concept class is $\mathcal{T}=\{T:\zo^n\to\zo\mid T\text{ is a decision tree}\}$. So the learner is allowed to output a decision tree hypothesis that may be much larger than the target. One could instead consider the problem of properly learning the class of size-$s$ decision trees: $\mathcal{T}_s=\{T:\zo^n\to\zo\mid T\text{ is a size-}s\text{ decision tree}\}$. This problem is strictly harder than learning decision trees since the output must satisfy a size constraint. Indeed, against this class, we are able to adapt the proof of \Cref{thm:DNF-construction-hardness} to obtain a stronger lower bound.
\begin{sloppypar}
\begin{theorem}
Assuming randomized ETH, there is a constant $\lambda\in (0,1)$ such that {\sc DT-Construction}$(s,\frac{1}{n})$ cannot be solved in time $n^{\lambda\log s}$ if the algorithm has to return a size-$s$ DNF hypothesis, even when the function is promised to be a $\log s$-junta.
\end{theorem}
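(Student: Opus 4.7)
The plan is to adapt the proof of \Cref{thm:DNF-construction-hardness} by exploiting the size constraint on the DNF hypothesis, which lets us pick a much smaller expansion factor $\ell$ than before. We will reduce from the same Lin-style $(k,k')$-{\sc Set-Cover} hard instance given by \Cref{thm:SC_hardness} on $N$ vertices, with $k = \frac{1}{2}\frac{\log\log N}{\log\log\log N}$ and $k' = \frac{1}{2}(\log N/\log\log N)^{1/k}$, and apply \Cref{lem:given-set-cover-construct-NGamma-which-is-hard-to-approximate-by-DNFs} to produce $\overline{\Gamma}_{\oplus\ell}$ on $n = |S|\ell \le N\ell$ variables together with a generator for $\mathcal{D}_{\oplus\ell}$. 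The crucial change from \Cref{thm:DNF-construction-hardness} is that we will take $\ell = \ell_0$ to be a sufficiently large \emph{constant}, and set the DT-Construction size parameter to $s \coloneqq 2^{k\ell_0}$, giving $\log s = k\ell_0 = \tilde{O}(\log\log N) = \tilde{O}(\log\log n)$, consistent with the regime of \Cref{thm:proper}.

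For the case analysis, in the YES case ($\opt(\mathcal{S}) \le k$), \Cref{lem:given-set-cover-construct-NGamma-which-is-hard-to-approximate-by-DNFs} gives that $\overline{\Gamma}_{\oplus\ell_0}$ is a $k\ell_0 = \log s$ junta under $\mathcal{D}_{\oplus\ell_0}$; any such junta has a DNF representation of size at most $2^{\log s - 1} < s$, so the promise is satisfied and a valid size-$s$ DNF hypothesis exists. In the NO case ($\opt(\mathcal{S}) > k'$), the same lemma asserts that every DNF of size at most $2^{k'\ell_0/16}$ is $\frac{1}{4N}$-far from $\overline{\Gamma}_{\oplus\ell_0}$. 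Since Lin's gap ensures $k' > 16k$ for large enough $N$, we have $s = 2^{k\ell_0} < 2^{k'\ell_0/16}$, and so \emph{every} size-$s$ DNF is at least $\frac{1}{4N}$-far from the target; choosing $\ell_0 \ge 5$ guarantees $\frac{1}{n} < \frac{1}{4N}$, so no size-$s$ DNF is $\frac{1}{n}$-close. A $\poly(n)$-sample distance test using the generator for $\mathcal{D}_{\oplus\ell_0}$ will distinguish the two cases with high probability.

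For the runtime, a putative algorithm running in $n^{\lambda \log s} = (N\ell_0)^{\lambda k \ell_0}$ time, together with the $\poly(N)$-time reduction and the distance test, solves $(k,k')$-{\sc Set-Cover} in time $N^{\lambda k \ell_0 (1+o(1))}$. Picking $\ell_0$ large enough that $\lambda \ell_0 < c$, where $c$ is the constant from \Cref{thm:SC_hardness}, makes this strictly less than $N^{ck}$, contradicting Lin's ETH-based lower bound; concretely, we will fix $\ell_0$ to be some large absolute constant and then define $\lambda \coloneqq c/(2\ell_0)$. The main obstacle is essentially parameter bookkeeping---the gap-amplification machinery in \Cref{lem:given-set-cover-construct-NGamma-which-is-hard-to-approximate-by-DNFs} already does the heavy lifting; the improvement over \Cref{thm:DNF-construction-hardness} comes entirely from the fact that forcing a size-$s$ output lets us freeze $\ell$ at an $O(1)$ constant and thereby push the exponent from $\tilde{\Omega}(\log\log s)$ up to $\Omega(\log s)$.
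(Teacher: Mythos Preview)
Your proposal is correct and matches the paper's own proof essentially line for line: the paper also reduces from Lin's $(k,k')$-\textsc{Set-Cover} instance with $k=\tfrac12\cdot\tfrac{\log\log N}{\log\log\log N}$, applies \Cref{lem:given-set-cover-construct-NGamma-which-is-hard-to-approximate-by-DNFs} (equivalently \Cref{lem:NGamma-has-no-small-DNF}) with a \emph{constant} $\ell$ (they take $\ell=4$ rather than a generic $\ell_0\ge 5$), sets $s=2^{k\ell}$, and chooses $\lambda$ proportional to $c/\ell$ so that the putative $n^{\lambda\log s}$ algorithm runs in time at most $N^{ck}$. The only cosmetic differences are your flexible $\ell_0$ versus their fixed $\ell=4$, and that with $\ell=4$ the paper gets $n=4N$ exactly so $\eps=\tfrac1n=\tfrac1{4N}$ without the extra slack you introduce; your remark that ``ℓ\_0 large enough'' is needed is slightly misleading---any fixed $\ell_0\ge 4$ works once $\lambda$ is set to $\Theta(c/\ell_0)$.
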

\end{sloppypar}

\begin{proof}
This proof is a combination of the proofs of \Cref{thm:optimal-hardness-of-dt-construction-under-better-sc-lbs,thm:DNF-construction-hardness}. The analysis is similar so we only outline the important details here. In particular, let $\mathcal{S}=(S,U,E)$ be an $N$-vertex $\left(k,\frac{1}{2}\left(\frac{\log N}{\log\log N}\right)^{1/k}\right)$-\setcover instance where $k$ is taken to be $k=\frac{1}{2}\cdot\frac{\log\log N}{\log\log\log N}$ for $N$ large enough so that $32k<\frac{1}{2}\left(\frac{\log N}{\log\log N}\right)^{1/k}$. Using \Cref{thm:SC_hardness}, there is a constant $c\in (0,1)$ such that $\mathcal{S}$ cannot be solved in time $N^{ck}$. We derive a contradiction for any algorithm for {\sc DT-Construction}$(s,\frac{1}{n})$ that returns a size-$s$ DNF and runs in time $n^{\lambda \log s}$ for $\lambda\le c/5$. 

Use \Cref{lem:NGamma-has-no-small-DNF} with $\ell=4$ (as in \Cref{thm:optimal-hardness-of-dt-construction-under-better-sc-lbs}) to obtain the target function $\overline{\Gamma}_{\oplus 4}:\zo^{4N}\to\zo$ and the distribution $\mathcal{D}_{\oplus 4}$. Run {\sc DT-Construction}$(s,\frac{1}{4N})$ with $s\coloneqq 2^{4k}$ on $\overline{\Gamma}_{\oplus 4}:\zo^{4N}$ and $\mathcal{D}_{\oplus 4}$ for $N^{5\lambda k}$ time steps where $\lambda\le c/5$. Output yes if and only if the DNF formula returned by the algorithm as size at most $2^{4k}$ and error less than $1/(4N)$. The correctness of the no case follows from the fact that $32k<\frac{1}{2}\left(\frac{\log N}{\log\log N}\right)^{1/k}$ and so the DNF lower bound from \Cref{lem:NGamma-has-no-small-DNF} ensures $2^{\opt(\mathcal{S})\ell/16}>2^{4k}$. Since the algorithm returns a size-$2^{4k}$ DNF formula if one exists, this separation between the DNF sizes is sufficient to establish correctness.
\end{proof}

As in the case of \Cref{cor:pac-learning-hardness-of-dts-dnfs}, this theorem yields hardness of properly PAC learning the class of size-s decision trees.

\begin{corollary}
Assuming randomized ETH, there is a constant $\lambda \in (0,1)$ such that the class $\mathcal{T}_s$ of size-s decision trees cannot be distribution-free, properly PAC learned to accuracy $\eps=1/n$ in time $n^{\lambda\log s}$. The same result also holds for properly learning size-$s$ DNFs and CNFs.
\end{corollary}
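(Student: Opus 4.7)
The plan is to mirror the reduction used in the proof of \Cref{cor:pac-learning-hardness-of-dts-dnfs} and reduce from the strictly proper hardness statement just established (the preceding theorem, built on top of \Cref{lem:NGamma-has-no-small-DNF}). Specifically, I will show that any distribution-free proper PAC learner $\mathcal{L}$ for $\mathcal{T}_s$ that runs in time $n^{\lambda \log s}$ and achieves accuracy $\eps = 1/n$ can be converted into an $n^{\lambda \log s}$-time algorithm for {\sc DT-Construction}$(s,1/n)$ that returns a size-$s$ DNF hypothesis, contradicting the preceding theorem.

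Given an instance of {\sc DT-Construction}$(s,1/n)$ produced by the preceding theorem—namely the function $\overline{\Gamma}_{\oplus 4}$ and the distribution $\mathcal{D}_{\oplus 4}$—I will run $\mathcal{L}$ on this instance with $\eps = 1/n$. Sample queries $(\bx, \overline{\Gamma}_{\oplus 4}(\bx))$ are simulated in polynomial time via the generator for $\mathcal{D}_{\oplus 4}$ and the circuit for $\overline{\Gamma}_{\oplus 4}$, so the runtime is dominated by that of $\mathcal{L}$. The key observation is that the target $\overline{\Gamma}_{\oplus 4}$ in the {\sc Yes} case is a $(\log s)$-junta under $\mathcal{D}_{\oplus 4}$, and every $(\log s)$-junta is a depth-$\log s$ (hence size-$s$) decision tree; thus the target lies in $\mathcal{T}_s$ and $\mathcal{L}$ must return a size-$s$ decision tree $T$ of error at most $1/n$ under $\mathcal{D}_{\oplus 4}$. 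Since any size-$s$ decision tree is also a size-$s$ DNF (each leaf labeled $1$ contributes one term), $T$ is a valid size-$s$ DNF hypothesis, contradicting the preceding theorem.

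For the DNF claim, I will apply exactly the same reduction, observing that a $(\log s)$-junta is also expressible as a DNF of size at most $2^{\log s} = s$ (one term per satisfying assignment to the relevant variables), so a proper DNF learner must again output a size-$s$ DNF, which again contradicts the preceding theorem. For the CNF claim, I will instead start from the unnegated $\Gamma_{\oplus \ell}$: the arguments of \Cref{lem:NGamma-has-no-small-DNF} go through symmetrically to yield CNF-size lower bounds (as noted in the remark preceding that lemma), and a $(\log s)$-junta is a CNF of size at most $s$, so the same reduction produces a proof of CNF hardness.

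The main obstacle I anticipate is a bookkeeping one rather than a conceptual one: the preceding theorem is stated for targets that are $(\log s)$-juntas with $s$ taken as $2^{4k}$, so I need to verify that the accuracy parameter $1/n$ demanded by the PAC learner is tight enough to certify the size-vs-error gap from \Cref{lem:NGamma-has-no-small-DNF} (which gives $\Omega(1/N)$-farness). Since $n$ in the PAC setting is the number of input variables, which is $4N = \Theta(N)$ in the reduction, $1/n = \Theta(1/N)$ suffices and the gap is preserved. Beyond this sanity check, the proof is a direct cascade of the reductions already developed in the paper.
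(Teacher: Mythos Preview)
Your proposal is correct and follows exactly the approach the paper takes: reduce {\sc DT-Construction}$(s,1/n)$ with a size-$s$ hypothesis constraint to the proper PAC learning problem for $\mathcal{T}_s$ by simulating the example oracle via the generator and circuit, noting that the $(\log s)$-junta target lies in $\mathcal{T}_s$, and then converting the returned size-$s$ decision tree to a size-$s$ DNF (with the CNF case handled symmetrically via the unnegated $\Gamma_{\oplus\ell}$). The paper's own proof simply points back to the argument of \Cref{cor:pac-learning-hardness-of-dts-dnfs}, so you have in fact written out more detail than the paper does.
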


\section{Proof of \Cref{thm:opt}}
\begin{sloppypar}
In this section, we outline a concrete path towards proving optimal lower bounds for {\sc DT-Construction}. In particular, we show that better lower bounds for gapped set cover yields better lower bounds for {\sc DT-Construction}. Specifically, the main theorem assumes \Cref{conj:optimal-SetCover-hardness} and proves an $n^{\Omega(\log s)}$ lower bound for {\sc DT-Construction}$(s,\frac{1}{n})$. 

\begin{theorem}
\label{thm:optimal-hardness-of-dt-construction-under-better-sc-lbs}
Assume \Cref{conj:optimal-SetCover-hardness}, then there is a constant $\lambda\in (0,1)$ such that \textsc{DT-Construction}$(s,\frac{1}{n})$ cannot be solved in time $n^{\lambda \log s}$, even when the target is a $\log s$-junta and the algorithm is allowed to return a DNF hypothesis.
\end{theorem}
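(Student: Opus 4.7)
The plan is to mirror the proof of \Cref{thm:DNF-construction-hardness}, invoking \Cref{conj:optimal-SetCover-hardness} in place of \Cref{thm:SC_hardness}. The key advantage is that the conjecture provides a much larger gap of $(1-\beta)\ln N$ between the Yes and No cases, compared to the minuscule $\frac{1}{2}(\log N / \log \log N)^{1/k}$ gap in \Cref{thm:SC_hardness}. This is exactly what should let me take the blowup parameter $\ell$ to be a \emph{small constant} instead of $\Theta(\log N / k)$, and push the lower-bound exponent from the suboptimal $\log\log s$ all the way up to $\log s$.

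Concretely, I would start from a size-$N$ instance of $(k, k(1-\beta)\ln N)$-{\sc Set-Cover} with $k \le N^\alpha$ and apply \Cref{lem:given-set-cover-construct-NGamma-which-is-hard-to-approximate-by-DNFs} with a constant $\ell$ (say $\ell = 16$), obtaining $\overline{\Gamma}_{\oplus\ell}$ on $n := N\ell$ variables together with an efficient generator for $\mathcal{D}_{\oplus\ell}$, and set $s := 2^{k\ell}$ so that $\log s = k\ell$. In the Yes case $\overline{\Gamma}_{\oplus\ell}$ is a $\log s$-junta (hence a size-$s$ DNF) under $\mathcal{D}_{\oplus\ell}$, while in the No case every DNF of size less than $2^{k\ell(1-\beta)\ln N/16}$ is $\frac{1}{4N}$-far from it.

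Given a hypothetical algorithm $\mathcal{A}$ for {\sc DT-Construction}$(s, 1/n)$ running in $n^{\lambda \log s} = (N\ell)^{\lambda k\ell}$ time and returning a DNF hypothesis, I would feed it $(\overline{\Gamma}_{\oplus\ell}, \mathcal{D}_{\oplus\ell})$, terminate after $(N\ell)^{\lambda k\ell}$ steps, and use polynomially many samples from the generator to estimate the error of its output, reporting \textsc{Yes} iff the estimate is $\le \frac{1}{4N}$. For this to correctly distinguish the two cases I need the output DNF's size $(N\ell)^{\lambda k\ell}$ to fall below $2^{k\ell(1-\beta)\ln N/16}$; taking logarithms and cancelling a factor of $k\ell$, this reduces to $\lambda \log_2(N\ell) < (1-\beta)(\ln 2)(\log_2 N)/16$, which is satisfied for all sufficiently large $N$ whenever $\lambda < (1-\beta)(\ln 2)/16$.

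The resulting algorithm solves $(k, k(1-\beta)\ln N)$-{\sc Set-Cover} in time $(N\ell)^{\lambda k \ell} = N^{\lambda \ell k(1 + o(1))}$, and by taking $\lambda$ small enough that $\lambda \ell$ undercuts the implicit hardness constant in \Cref{conj:optimal-SetCover-hardness}, this yields the contradiction. The main obstacle, as in \Cref{thm:DNF-construction-hardness}, is bookkeeping the two constraints on $\lambda$ simultaneously---the gap constraint $\lambda < (1-\beta)(\ln 2)/16$ from the No-case calculation, and the hardness constraint tying $\lambda \ell$ to the constant implicit in the conjecture---and verifying that a single constant $\lambda$ can meet both. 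The remaining steps, including standard sample-based error estimation and adjusting the accuracy parameter between $1/n$ and $1/(4N)$, are routine.
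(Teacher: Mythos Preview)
Your proposal is correct and follows essentially the same approach as the paper's proof: both apply \Cref{lem:given-set-cover-construct-NGamma-which-is-hard-to-approximate-by-DNFs} with a \emph{constant} $\ell$ (the paper takes $\ell=4$, you take $\ell=16$), set $s=2^{k\ell}$, run the assumed {\sc DT-Construction} algorithm on $(\overline{\Gamma}_{\oplus\ell},\mathcal{D}_{\oplus\ell})$, and balance the No-case DNF size bound $2^{k'\ell/16}$ against the runtime $n^{\lambda\log s}$ to pin down $\lambda$. The only differences are cosmetic parameter choices; your identification of the two constraints on $\lambda$ (the gap constraint and the hardness constraint from the conjecture) matches the paper's bookkeeping.
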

\end{sloppypar}

\begin{proof}
Let $\beta\in (0,1)$ be as in the statement of the \Cref{conj:optimal-SetCover-hardness}. Assume there is an algorithm for {\sc DT-Construction}$(s,\frac{1}{n})$ running in time $n^{\lambda\cdot\log s}$ for any $\lambda\le (1-\beta)/(40\log e)$. Then, following the proof strategy of \Cref{thm:DNF-construction-hardness}, we derive a contradiction by solving $(k,k\cdot(1-\beta)\ln N)$-{\sc Set-Cover} over $N$ vertices in time $N^{5\lambda k}$. Let $\mathcal{S}=(S,U,E)$ be an $N$-vertex $(k,k\cdot(1-\beta)\ln N)$-{\sc Set-Cover} instance for $k\in \N$. Using \Cref{lem:given-set-cover-construct-NGamma-which-is-hard-to-approximate-by-DNFs} with $\ell=4$, we obtain the target function $\overline{\Gamma}_{\oplus 4}:\zo^{4N}\to\zo$ and the distribution $\mathcal{D}_{\oplus 4}$. We run the algorithm for {\sc DT-Construction}$(s,\frac{1}{4N})$ on $\overline{\Gamma}_{\oplus 4}$ and $\mathcal{D}_{\oplus 4}$ with $s\coloneqq 2^{4k}$ and terminate it after $N^{5\lambda k}$ time steps. The output is some DNF formula $F$. We estimate the error of $F$ over $\mathcal{D}_{\oplus 4}$ and output \textsc{Yes} if it's less than $1/(4N)$ and \textsc{No} otherwise.

\paragraph{Runtime.}{By \Cref{lem:given-set-cover-construct-NGamma-which-is-hard-to-approximate-by-DNFs}, we can construct the circuit for $\overline{\Gamma}_{\oplus 4}:\zo^{4N}\to\zo$ and generator for $\mathcal{D}_{\oplus 4}$ in $\poly(N)$-time. Moreover, we can use random sampling to efficiently estimate the error of $F$ over $\mathcal{D}_{\oplus 4}$.  Therefore, the runtime of the reduction is dominated by $N^{5\lambda k}$. }

\paragraph{Correctness.}{ We handle the yes case and the no case separately.
\subparagraph{Yes case: $\opt(\mathcal{S})\le k$.}{
By \Cref{lem:given-set-cover-construct-NGamma-which-is-hard-to-approximate-by-DNFs}, $\overline{\Gamma}_{\oplus 4}$ is a $4k$-junta over $\mathcal{D}_{\oplus 4}$. Therefore, it is a decision tree of size $s=2^{4k}$ and {\sc DT-Construction}$(s,\frac{1}{4N})$ runs in time $$(4N)^{\lambda\cdot\log s}=(2N)^{4\lambda k}\le N^{5\lambda k}.$$ The output is DNF formula with error at most $1/(4N)$. It follows that our algorithm outputs \textsc{Yes} with high probability.
}
\subparagraph{No case: $\opt(\mathcal{S})> k\cdot(1-\beta) \ln N$.}{
By \Cref{lem:given-set-cover-construct-NGamma-which-is-hard-to-approximate-by-DNFs}, any DNF for $\overline{\Gamma}_{\oplus 4}$ with size at most $2^{\opt(\mathcal{S})/8}$ has error at least $1/(4N)$. Using the assumption on $\opt(\mathcal{S})$:
\begin{align*}
    2^{\opt(\mathcal{S})/8}&> N^{k(1-\beta)/(8\log e)}\\
    &\ge N^{5\lambda k} \tag{$(1-\beta)/(40\log e)\ge \lambda $}
\end{align*}
}
which shows that the DNF output by the algorithm must have error at least $1/(4N)$. It follows that our algorithm outputs \textsc{No} with high probability.}
\end{proof}

As discussed in \Cref{subsub:PAC-learning-hardness}, this lower bound for {\sc DT-Construction} implies a lower bound for PAC learning decision trees.

\begin{corollary}
Assume \Cref{conj:optimal-SetCover-hardness}, then there is a constant $\lambda \in (0,1)$ such that decision trees cannot be distribution-free, properly PAC learned to accuracy $\eps=1/n$ in time $n^{\lambda\log s}$ where $s$ is the size of the decision tree target. The same result also holds for properly learning DNFs and CNFs.
\end{corollary}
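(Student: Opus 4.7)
The plan is to reduce distribution-free proper PAC learning to \textsc{DT-Construction} and then invoke \Cref{thm:optimal-hardness-of-dt-construction-under-better-sc-lbs}. This exactly parallels the proof of \Cref{cor:pac-learning-hardness-of-dts-dnfs}, which derives unconditional ETH-based PAC learning lower bounds from \Cref{thm:DNF-construction-hardness}; the only difference is that we substitute the stronger, \Cref{conj:optimal-SetCover-hardness}-based lower bound in place of the ETH-based one.

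First, I would formalize the reduction. Let $\mathcal{L}$ be a distribution-free proper PAC learner for the class of decision trees running in time $n^{\lambda\log s}$ to accuracy $\eps = 1/n$, where $s$ is the size of the target. Given an instance $(f,\mathcal{D})$ of \textsc{DT-Construction}$(s,1/n)$, where we have a circuit for $f$ and a generator for $\mathcal{D}$, simulate $\mathcal{L}$ on input $(f,\mathcal{D})$ with accuracy $\eps=1/n$: whenever $\mathcal{L}$ requests a labeled example, use the generator to draw $\bx\sim\mathcal{D}$, evaluate $f(\bx)$ via the circuit, and hand back $(\bx,f(\bx))$. Since each simulated example takes $\poly(n)$ time, the overall runtime is still $n^{\lambda\log s}$ up to polynomial factors, and the output of $\mathcal{L}$ is a decision tree hypothesis satisfying $\dist_\mathcal{D}(T,f)\le 1/n$, which is precisely what \textsc{DT-Construction}$(s,1/n)$ demands.

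Next, I would apply \Cref{thm:optimal-hardness-of-dt-construction-under-better-sc-lbs}. That theorem produces, for each candidate $\lambda$ small enough, a hard instance whose target $f$ is a $(\log s)$-junta (hence trivially a size-$s$ decision tree, a size-$s$ DNF, and, via De Morgan's law, a size-$s$ CNF). Plugging such an instance through the reduction above yields a $\mathcal{L}$-based algorithm for \textsc{DT-Construction}$(s,1/n)$ running in time $n^{\lambda\log s}$, contradicting \Cref{thm:optimal-hardness-of-dt-construction-under-better-sc-lbs}. For the DNF and CNF analogues, the same reduction works verbatim after replacing ``decision tree hypothesis'' by ``DNF hypothesis'' (respectively ``CNF hypothesis'') in the specification of \textsc{DT-Construction}; \Cref{thm:optimal-hardness-of-dt-construction-under-better-sc-lbs} is already stated to allow a DNF hypothesis, and the CNF case follows by negating $f$ (and $\overline{\Gamma}_{\oplus\ell}$) so that a size-$s$ CNF hypothesis for the negation is equivalent to a size-$s$ DNF hypothesis for the original.

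There is essentially no obstacle here: all the heavy lifting is carried out in \Cref{thm:optimal-hardness-of-dt-construction-under-better-sc-lbs}, and the only thing to verify is the trivial $\poly(n)$ overhead of the oracle simulation and the fact that being a $(\log s)$-junta implies membership in each of the three target concept classes at size $s$.
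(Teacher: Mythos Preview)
Your proposal is correct and matches the paper's approach exactly: the paper simply states that the proof is identical to that of \Cref{cor:pac-learning-hardness-of-dts-dnfs}, which is precisely the reduction you spell out (simulate the PAC learner's example oracle using the generator and circuit, then invoke \Cref{thm:optimal-hardness-of-dt-construction-under-better-sc-lbs} in place of \Cref{thm:DNF-construction-hardness}). Your handling of the DNF and CNF cases via the junta target and negation also mirrors the paper's treatment.
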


The proof of this corollary is identical to that of \Cref{cor:pac-learning-hardness-of-dts-dnfs}.

\section*{Acknowledgments}

We thank the SODA reviewers for their useful comments and feedback. 

Caleb, Carmen, and Li-Yang are supported by NSF awards 1942123, 2211237, and 2224246. Caleb is also supported by an NDSEG fellowship. 

\bibliography{ref}
\bibliographystyle{alpha}

\appendix
\section{Hardness of Approximating Set Cover}

\label{appendix:set_cover_hardness}
We first state a lemma due to \cite{Lin19}, translated into our notation.

\begin{lemma}[Lin's lemma {\cite[Lemma 3.6]{Lin19}}]
\label{lem:lins-lemma-3.6}
There is an algorithm which given $k\in\N$, $\delta>0$ with $(1+1/k^3)^{1/k}\le (1+\delta)/(1+\delta/2)$ and $(1+\delta/2)^k\ge 2k^4$ and a SAT instance $\phi$ with $n$ variables and $Cn$ clauses, where $n$ is much larger than $k$ and $C$, outputs an integer $N\le 2^{n/k}+n/k^3$ and a set cover instance $\mathcal{S}=(S,U,E)$ satisfying
\begin{itemize}
    \item $|S|+|U|\le N$;
    \item if $\phi$ is satisfiable, then $\opt(\mathcal{S})\le k$;
    \item if $\phi$ is unsatisfiable, then $\opt(\mathcal{S})>\frac{1}{1+\delta}\paren{\frac{\log N}{\log\log N}}^{1/k}$
\end{itemize}
\end{lemma}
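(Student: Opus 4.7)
This lemma is a verbatim restatement (modulo notation) of Lin's Lemma 3.6, so my plan is to follow Lin's construction rather than reinvent it. At a high level, the reduction proceeds in two conceptual steps: first, encode $\phi$ as a structured label-cover-like instance; then, translate that instance into a set cover instance whose gap behavior exactly matches the gap stated in the lemma.

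For the first step, I would partition the $n$ variables of $\phi$ into $k$ blocks of size $\approx n/k$. Each block admits $\le 2^{n/k}$ partial assignments, and I would let $S$ consist of all (block, partial assignment) pairs, so $|S| \le k \cdot 2^{n/k}$. The additive $n/k^3$ slack in the size bound $N \le 2^{n/k} + n/k^3$ is then available for auxiliary ``consistency'' gadgets. In the satisfiable case, picking for each block the unique partial assignment induced by a satisfying assignment of $\phi$ yields an explicit cover of size $k$, giving $\opt(\mathcal{S}) \le k$.

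The technical heart of the proof is the construction of $U$ so that, in the unsatisfiable case, any set cover must have size $> \frac{1}{1+\delta}(\log N / \log\log N)^{1/k}$. This is typically achieved via a $k$-fold product / covering-code construction: each universe element is a tuple of test points whose neighborhood encodes a combinatorial consistency check, and the distance properties of the underlying code force any cover to either commit to a globally consistent set of block labels (which would correspond to a satisfying assignment of $\phi$, contradicting unsatisfiability) or to use many labels in some block. The explicit parameter conditions $(1+1/k^3)^{1/k} \le (1+\delta)/(1+\delta/2)$ and $(1+\delta/2)^k \ge 2k^4$ are precisely what is needed to control the multiplicative error that accumulates through the $k$ layers of the product construction, and to guarantee that the combinatorial codes used inside the amplification step have the required distance.

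The main obstacle is the gap-amplification analysis, i.e., showing that the $k$-fold product actually converts ``no satisfying assignment'' into a lower bound of the form $(\log N/\log\log N)^{1/k}$ on the optimal cover size. Lin's argument uses a delicate sunflower-style / covering-code calculation to trace how the approximation factor compounds across layers; any direct proof would essentially reproduce this calculation. My plan would therefore be to invoke Lin's construction directly and verify only that the cosmetic changes to notation (bipartite-graph formulation versus set-theoretic formulation, and the stated size bound $|S| + |U| \le N$) do not affect the analysis, so that the resulting statement can then be used to derive \Cref{thm:SC_hardness} by a suitable choice of $k$ and $\delta$.
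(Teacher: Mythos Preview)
Your proposal is correct and matches the paper's treatment: the paper does not prove this lemma at all but simply cites it verbatim from \cite{Lin19}, stating it ``translated into our notation'' and then immediately using it as a black box to derive \Cref{lem:SAT_to_SC}. Your sketch of Lin's construction (block partition of variables, $k$-fold product gadget, parameter conditions controlling the amplification) goes beyond what the paper itself provides, but is consistent with Lin's actual argument and with the intended use here.
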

\noindent
The exact version of this lemma we use is the following.
\begin{lemma}[Reducing SAT to \textsc{Set-Cover}]
\label{lem:SAT_to_SC}
There is an algorithm that takes an $n$-variate SAT instance $\varphi$ of size $|\varphi|$ and an integer $k\ge 100$ with $k^2\le n/\log n$ and produces a set cover instance $\mathcal{S}$ of size $N\le 2^{2|\varphi|/k}$ in time $\le 2^{5|\varphi|/k}$ such that
\begin{enumerate}
    \item if $\varphi$ is satisfiable then $\opt(S)\le k$;
    \item if $\varphi$ is unsatisfiable then $\opt(S)>\frac{1}{2}\left(\frac{\lg N}{\lg\lg N}\right)^{1/k}$.
\end{enumerate}
\end{lemma}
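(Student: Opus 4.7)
My plan is to prove \Cref{lem:SAT_to_SC} by directly invoking Lin's Lemma~3.6 (\Cref{lem:lins-lemma-3.6}) with a judicious choice of the parameter $\delta$, and then bookkeeping to translate the guarantees of Lin's lemma into the precise form stated here. The key observation is that the exponent $\tfrac{1}{1+\delta}$ in Lin's unsatisfiability conclusion becomes exactly $\tfrac{1}{2}$ when we set $\delta = 1$, matching the hypothesis we want to reach. So the plan is: set $\delta = 1$, verify Lin's two preconditions, and then check that the size and runtime bounds come out correctly.

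The first step is to verify the hypotheses $(1+1/k^3)^{1/k} \le (1+\delta)/(1+\delta/2)$ and $(1+\delta/2)^k \ge 2k^4$ with $\delta = 1$. The former becomes $(1+1/k^3)^{1/k} \le 4/3$, which holds for all $k \ge 2$ since the left-hand side is at most $1 + O(1/k^4)$. The latter becomes $(3/2)^k \ge 2k^4$, which easily holds for $k \ge 100$ because $(3/2)^{100}$ already dwarfs $2 \cdot 100^4$. The side condition that ``$n$ is much larger than $k$ and $C$'' is also met: from $k^2 \le n/\log n$ we have $k = o(\sqrt{n})$, and the clause-to-variable ratio $C$ is at most $|\varphi|/n$, which is absorbed by the polynomial blow-up budget we have.

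Next I invoke Lin's lemma to obtain a set cover instance $\mathcal{S}$ on $N \le 2^{n/k} + n/k^3$ vertices satisfying: $\opt(\mathcal{S}) \le k$ if $\varphi$ is satisfiable, and $\opt(\mathcal{S}) > \tfrac{1}{2}(\log N/\log\log N)^{1/k}$ if $\varphi$ is unsatisfiable. This already gives the two desired implications verbatim. For the size bound, since $n/k^3 \le 2^{n/k}$ when $n/k \ge \log(n/k^3)$ (which follows from $k^2 \le n/\log n$), we get $N \le 2 \cdot 2^{n/k} = 2^{n/k + 1}$. Using $n \le |\varphi|$ and the assumption $k \le |\varphi|$ (trivially true since $k \ge 100$ and $\varphi$ has at least $k$ variables in any nontrivial regime), we have $n/k + 1 \le 2|\varphi|/k$, yielding $N \le 2^{2|\varphi|/k}$ as required.

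Finally, for the runtime, Lin's reduction runs in time polynomial in $N$, hence in time $\poly(2^{2|\varphi|/k}) \le 2^{5|\varphi|/k}$ for $|\varphi|/k$ sufficiently large (again using $k \ge 100$). The main obstacle, if any, is really just the parameter bookkeeping: making sure that the additive $n/k^3$ term in Lin's bound and the conversion from $n$ to $|\varphi|$ together fit within the allotted $2^{2|\varphi|/k}$ size budget and the $2^{5|\varphi|/k}$ time budget. All of this is routine once $\delta = 1$ is fixed, so the proof is essentially a parameter instantiation of Lin's result.
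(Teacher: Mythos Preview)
Your proposal is correct and takes essentially the same approach as the paper: both proofs simply instantiate Lin's lemma with a fixed value of $\delta$, verify the two preconditions for $k\ge 100$, and do the size/runtime bookkeeping. The only difference is cosmetic: the paper sets $\delta = \tfrac12$ (so that $1/(1+\delta)=\tfrac23>\tfrac12$), whereas you set $\delta=1$ to hit $\tfrac12$ on the nose; either choice works and the remaining arithmetic is the same.
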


\begin{proof}
We use \Cref{lem:lins-lemma-3.6} with $\delta=1/2$. For this value of $\delta$, if $k\ge 100$, then both conditions $(1+1/k^3)^{1/k}\le (1+\delta)/(1+\delta/2)$ and $(1+\delta/2)^k\ge 2k^4$ of \Cref{lem:lins-lemma-3.6} are satisfied. Moreover, an inspection of the proof of \Cref{lem:lins-lemma-3.6} shows that the condition ``$n$ is much larger than $k$'' in the lemma statement means $k^2\le n/\log n$.

Therefore, \Cref{lem:lins-lemma-3.6} returns a set cover instance $\mathcal{S}$ satisfying
\begin{enumerate}
    \item if $\varphi$ is satisfiable then $\opt(S)\le k$;
    \item if $\varphi$ is unsatisfiable then $\opt(S)>\frac{1}{1+\delta}\left(\frac{\lg N}{\lg\lg N}\right)^{1/k}$.
\end{enumerate}
By our choice of $\delta$, $1/(1+\delta)=2/3>1/2$ as desired. Since $n\le |\varphi|$, the size of the set cover instance is $N\le 2^{|\varphi|/k+|\varphi|/k^3}\le 2^{2|\varphi|/k}$. The runtime of the reduction is $\le 2^{5|\varphi|/k}$.
\end{proof}

We can now prove the main theorem from \Cref{subsection:results-on-set-cover}.

\begin{proof}[Proof of \Cref{thm:SC_hardness}]
Suppose there exists an algorithm that can solve $\left(k,\frac{1}{2}\left(\frac{\log N}{\log\log N}\right)^{1/k}\right)$-\setcover on $N$ vertices with high probability in time $N^{ck}$. Then we show how to solve SAT with high probability for SAT formulas with $n$ variables in time $2^{3cn}$.  

Let $\varphi$ be a SAT instance with $n$ variables. Choose $k\in \N$ so that
$$
k=\frac{1}{2}\cdot\frac{\log\log(2^{2n/k})}{\log\log\log(2^{2n/k})}=\frac{1}{2}\cdot\frac{\log(2n/k)}{\log\log(2n/k)}.
$$
Given $n$ this equation can be numerically solved efficiently, and $k$ will be some value between $\log\log n$ and $\log n$. We then apply \Cref{lem:SAT_to_SC} with this value of $k$ to obtain a set cover instance $\mathcal{S}$ of size $N\le 2^{2n/k}$ in time $\le 2^{5n/k}$. If $N<2^{2n/k}$ then we add dummy items/dummy sets to the universe so that $N=2^{2n/k}$. Note this padding will not affect the optimal set cover for the optimal set cover size. Hence, by construction, we have an instance of $\left(k,\frac{1}{2}\left(\frac{\log N}{\log\log N}\right)^{1/k}\right)$-\setcover of size $N$ where
$$
k=\frac{1}{2}\cdot\frac{\log\log(N)}{\log\log\log(N)}.
$$
We can therefore run our algorithm for set cover on this instance $\mathcal{S}$ and output ``\textsc{Yes}'' if the algorithm outputs \textsc{Yes} and ``\textsc{No}'' if the algorithm outputs \textsc{No}.

\paragraph{Runtime.}{
Our reduction runs in time
\begin{align*}
    2^{5n/k}+N^{ck}&\le 2^{5n/k}+(2^{2n/k})^{ck}\\
    &=2^{5n/k}+2^{2cn}\\
    &\le 2^{3cn}.
\end{align*}
}
\begin{sloppypar}
\paragraph{Correctness.}{
By assumption, the set cover algorithm solves $\left(k,\frac{1}{2}\left(\frac{\log N}{\log\log N}\right)^{1/k}\right)$-\setcover with high probability and therefore by \Cref{lem:SAT_to_SC} our algorithm solves SAT with high probability. We also note that $2k^{k}<\frac{\log N}{\log\log N}$ and so $k<\frac{1}{2}\left(\frac{\log N}{\log\log N}\right)^{1/k}$ for our choice of $k$. If instead, one were to choose e.g. $k=\log\log n$, then $2k^k>\frac{\log N}{\log \log N}$ and so the set cover instance would fail to determine the satisfiability of $\varphi$. 
}
\end{sloppypar}

It follows that if SAT cannot be solved in randomized time $O(2^{\delta n})$ for some $\delta\in (0,1)$ then $\left(k,\frac{1}{2}\left(\frac{\log N}{\log\log N}\right)^{1/k}\right)$-\setcover cannot be solved in randomized time $N^{\delta/3 \cdot k}$.
\end{proof}

\section{Proof of \Cref{prop:Dj_equals_D}}
\label{appendix:Dj-equals-D}
We first compute: 
\begin{align*}
    \pryd{\by=y}&=\pryd{\BlockwisePar(\by) = \BlockwisePar(y)}\\
    & \ \ \ \cdot\pryd{\by=y\mid \BlockwisePar(\by) = \BlockwisePar(y)}\tag{Law of total probability}\\
    &=\prxd{\bx=\BlockwisePar(y)}\cdot \underset{\by\sim \mathcal{U}_{n\cdot\ell}}{\Pr}\left[{\by=y\mid \BlockwisePar(\by) = \BlockwisePar(y)}\right]\tag{Definition of $\mathcal{D}_{\oplus \ell}$}\\
    &=\prxd{\bx=\BlockwisePar(y)}\cdot \prod_{i\in[n]}\underset{\by_i\sim \mathcal{U}_{\ell}}{\Pr}\left[{\by_i=y_i\mid \oplus\by_i = \oplus y_i}\right]\tag{Independence of $\by_i$'s}\\
    &=\prxd{\bx=\BlockwisePar(y)}\cdot \prod_{i\in[n]}2^{-(\ell-1)}
\end{align*}
where the last step follows from the fact that conditioning on the parity of $\by_i$ being a specific bit removes $1$ out of $\ell$ degrees of freedom.  With an analogous calculation for $\mathcal{D}^j_{\oplus\ell}$, we obtain
\begin{align*}
    \prydj{\by=y}&=\prydj{\BlockwisePar(\by) = \BlockwisePar(y)} \\ 
    & \ \ \ \cdot\prydj{\by=y\mid \BlockwisePar(\by) = \BlockwisePar(y)}\tag{Law of total probability}\\
    &=\prxd{\bx=\BlockwisePar(y)}\cdot\prod_{i\in[n]}\underset{\by_i\sim\mathcal{U}_{\ell-1}}{\Pr}\left[\by_i=y_i^{-j}\right]\tag{Definition of $\mathcal{D}_{\oplus \ell}^j$}\\
    &=\prxd{\bx=\BlockwisePar(y)}\cdot \prod_{i\in[n]}2^{-(\ell-1)}
\end{align*}
where $y_i^{-j}\in\zo^{\ell-1}$ is the string $y_i$ with its $j$th bit removed.

\section{PAC learning}
\label{appendix:pac-learning}

In the realizable PAC learning model \cite{Val84}, there is an unknown distribution $\mathcal{D}$ and some unknown \textit{target} function $f\in\mathcal{H}$ from a fixed \textit{concept} class $\mathcal{H}$ of functions over a fixed domain. An algorithm for learning $\mathcal{H}$ over $\mathcal{D}$ takes as input $\eps\in (0,1)$ and has oracle access to an \textit{example oracle} $\textnormal{EX}(f,\mathcal{D})$. The algorithm can query the example oracle to receive a pair $(x,f(x))$ where $x\sim\mathcal{D}$ is drawn independently at random. The goal is to output a \textit{hypothesis} $h$ such that $\dist_{\mathcal{D}}(f,h)\le \eps$. Since the example oracle is inherently randomized, any learning algorithm is necessarily randomized. So we require the algorithm succeed with some fixed probability e.g. $2/3$. A learning algorithm is \textit{proper} if it always outputs a hypothesis $h\in\mathcal{H}$. 

Formally, we use the following definition for PAC learning decision trees. 
\begin{definition}[PAC learning decision trees]
Let $\mathcal{T}=\{T:\zo^n\to\zo\mid T\text{ is a decision tree}\}$ be the class of decision trees over a fixed domain $\zo^n$. A distribution-free learning algorithm $\mathcal{L}$ learns $\mathcal{T}$ in time $t(n,s,\eps)$ if for all distributions $\mathcal{D}$ and for all $T\in\mathcal{T},\eps\in(0,1)$, $\mathcal{L}$ with oracle access to $\textnormal{EX}(T,\mathcal{D})$ runs in time $t(n,|T|,\eps)$ and with probability $2/3$ outputs $h:\zo^n\to\zo$ such that $\dist_{\mathcal{D}}(T,h)\le \eps$. Furthermore, $\mathcal{L}$ is proper if $h\in \mathcal{T}$. 
\end{definition}


\end{document}